\newcommand\reallywidehat[1]{%
\savestack{\tmpbox}{\stretchto{%
  \scaleto{%
      \scalerel*[\widthof{\ensuremath{#1}}]{\kern-.6pt\bigwedge\kern-.6pt}%
          {\rule[-\textheight/2]{1ex}{\textheight}}%WIDTH-LIMITED BIG WEDGE
            }{\textheight}% 
            }{0.5ex}}%
            \stackon[1pt]{#1}{\tmpbox}%
            }
\newcommand{\bool}{\set{0,1}}
\newcommand{\Cbb}{\mathbb{C}}
\newcommand{\Fbb}{\mathbb{F}}
\newcommand{\Ibb}{\mathbb{I}}
\newcommand{\Nbb}{\mathbb{N}}
\newcommand{\Rbb}{\mathbb{R}}
\newcommand{\Zbb}{\mathbb{Z}}
\newcommand{\mb}{\mathbf{m}}
\newcommand{\Gb}{\mathbf{G}}
\newcommand{\Xb}{\mathbf{X}}
\newcommand{\Ac}{\mathcal{A}}
\newcommand{\Bc}{\mathcal{B}}
\newcommand{\Fc}{\mathcal{F}}
\newcommand{\Gc}{\mathcal{G}}
\newcommand{\Lc}{\mathcal{L}}
\newcommand{\Mc}{\mathcal{M}}
\newcommand{\Sc}{\mathcal{S}}
\newcommand{\wt}[1]{\widetilde{#1}}
\newcommand{\wh}[1]{\widehat{#1}}
\newcommand{\eps}{\varepsilon}
\newcommand{\Uni}[1]{\mathcal{U}\left( #1 \right)}
\newcommand{\Bi}[2]{\text{\normalfont{Bi}}\paren*{#1, #2}}
\newcommand{\prob}[2][]{\operatorname*{\mathbb{P}}_{#1 }\brac*{#2}}
\newcommand{\E}[2][]{\operatorname*{\mathbb{E}}_{#1 }\brac*{#2}}
\newcommand{\bO}[1]{\operatorname*{O}\paren*{#1}}
\newcommand{\bOt}[1]{\operatorname*{\wt{O}}\paren*{#1}}
\newcommand{\lO}[1]{\operatorname*{o}\paren*{#1}}
\newcommand{\bOm}[1]{\operatorname*{\Omega}\paren*{#1}}
\newcommand{\lOm}[1]{\operatorname*{\omega}\paren*{#1}}
\newcommand{\bT}[1]{\operatorname*{\Theta}\paren*{#1}}
\newcommand{\uni}[1]{\operatorname*{\mathcal{U}\paren*{#1}}}
\DeclareMathOperator{\var}{\text{\normalfont{Var}}}
\DeclareMathOperator{\poly}{poly}
\DeclarePairedDelimiter\floor{\lfloor}{\rfloor}
\DeclarePairedDelimiter\ceil{\lceil}{\rceil}
\DeclarePairedDelimiter\norm{\lVert}{\rVert}
\DeclarePairedDelimiter\abs{|}{|}
\DeclarePairedDelimiter\brac{\lbrack}{\rbrack}
\DeclarePairedDelimiter\set{\lbrace}{\rbrace}
\DeclarePairedDelimiter\paren{\lparen}{\rparen}
\DeclareMathOperator{\tra}{Tr}
\newcommand{\NP}{\textsf{NP}}
\newcommand{\BQP}{\textsf{BQP}}
\newcommand{\QMA}{\textsf{QMA}}
\DeclarePairedDelimiter\bra{\langle}{|}
\DeclarePairedDelimiter\ket{|}{\rangle}
\DeclarePairedDelimiterX\braket[2]{\langle}{\rangle}{#1\delimsize\vert#2}
\newtheorem{theorem}{Theorem}
\newtheorem{lemma}[theorem]{Lemma}
\theoremstyle{definition}
\newtheorem{definition}[theorem]{Definition}
\theoremstyle{remark}
\newtheorem{remark}[theorem]{Remark}
\newcommand{\nth}{^\text{th}}
\newcommand{\tr}{^\text{tr}}
\newcommand{\dihp}{\text{\bf DIHP}}
\newcommand{\players}{T}
\newcommand{\yes}{\text{\bf YES}}
\newcommand{\no}{\text{\bf NO}}
\newcommand{\vlabels}{X}
\newcommand{\elabels}{w}
\newcommand{\mcut}{{\normalfont\textsc{Max-Cut}}}
\newcommand{\qmcut}{{\normalfont\textsc{Quantum}\ \mcut}}
\newcommand{\qubits}{\beta}
	\gdef\commentmark{%
		\expandafter\ifx\csname @mpargs\endcsname\relax % in minipage?
		\expandafter\ifx\csname @captype\endcsname\relax % in figure/caption?
		\marginpar{comment}% not in a caption or minipage, can use marginpar
		\else
		comment % notice trailing space
		\fi
		\else
		comment % notice trailing space-
		\fi}
	\gdef\comment{\@ifnextchar[\comment@lab\comment@nolab}
	\long\gdef\comment@lab[#1]#2{{\bf [\commentmark #2 ---{\sc #1}]}}
	\long\gdef\comment@nolab#1{{\bf [\commentmark #1]}}
\title{The Quantum and Classical Streaming Complexity of\\ Quantum and Classical Max-Cut}
\date{}
\author{John Kallaugher\\Sandia National Laboratories\\\texttt{jmkall@sandia.gov} \and Ojas Parekh\\Sandia National Laboratories\\\texttt{odparek@sandia.gov}}
\begin{document}
\maketitle

% expanding all our non-standard macros in the abstract so that we can
%copy-paste it into hotcrp without breaking
\begin{abstract}
\noindent
We investigate the space complexity of two graph streaming problems:
\textsc{Max-Cut} and its quantum analogue, \textsc{Quantum Max-Cut}.  Previous
work by Kapralov and Krachun \lbrack STOC~`19\rbrack\ resolved the
\emph{classical} complexity of the \emph{classical} problem, showing that any
$(2 - \varepsilon)$-approximation requires $\Omega(n)$ space (a
$2$-approximation is trivial with $\textrm{O}(\log n)$ space). We generalize
both of these qualifiers, demonstrating $\Omega(n)$ space lower bounds for $(2
- \varepsilon)$-approximating \textsc{Max-Cut}\ and \textsc{Quantum Max-Cut},
even if the algorithm is allowed to maintain a quantum state. As the trivial
approximation algorithm for \textsc{Quantum Max-Cut}\ only gives a
$4$-approximation, we show tightness with an algorithm that returns a $(2 +
\varepsilon)$-approximation to the \textsc{Quantum Max-Cut}\ value of a graph
in $\textrm{O}(\log n)$ space.  Our work resolves the quantum and classical
approximability of quantum and classical Max-Cut using $\textrm{o}(n)$ space.

We prove our lower bounds through the techniques of \emph{Boolean Fourier
analysis}. We give the first application of these methods to \emph{sequential
one-way quantum communication}, in which each player receives a quantum message
from the previous player, and can then perform arbitrary quantum operations on
it before sending it to the next. To this end, we show how Fourier-analytic
techniques may be used to understand the application of a quantum channel.
\end{abstract}

\thispagestyle{empty}
\setcounter{page}{0}

\newpage

\section{Introduction}

Quantum approaches for discrete optimization, such as the Quantum Approximate
Optimization Algorithm (QAOA) have received significant attention.  The seminal
work of Farhi, Goldstone, and Gutmann~\cite{farhi2014quantum} showed that QAOA
applied to an \NP-hard classical constraint satisfaction problem (CSP) gave a
better worst-case approximation than the best known classical approximation
algorithm at the time.  An improved classical approximation algorithm
subsequently followed~\cite{Barak2015Beating}; however, this seeded the
question of whether a quantum approximation algorithm might offer a provably
better approximation guarantee than the best classical approximation for some
CSP or discrete optimization problem, which still remains open.  One potential
barrier is that classical hardness of approximation results may also restrict
quantum approximation algorithms.  For example, it is generally not expected
that $\NP \subseteq \BQP$, so a quantum approximation is not expected to
overcome \NP-hardness of approximation results.  Even possibly weaker hardness
assumptions such as Unique-Games-hardness may impede quantum approximations.
It would be surprising if a quantum approximation were able to achieve a
$(1/0.878\ldots - \varepsilon)$-approximation\footnote{This result is more
typically stated as $0.878\ldots + \eps$, where an $\alpha$-approximation is
held to mean returning a value in $\brac{\alpha\cdot\text{OPT}, \text{OPT}}$,
for $\text{OPT}$ the correct value. However, we follow previous work on
streaming \mcut\ by instead using a $K$-approximation to mean returning a value
in $\brac{\text{OPT}, K\cdot\text{Opt}}$.} for the Maximum Cut Problem (\mcut),
which is Unique-Games-hard~\cite{KhotKMO07}.  

Although the prospects for quantum approximations for classical CSPs may seem
limited, a natural question is whether quantum approximations can offer
provably better guarantees for \emph{quantum versions} of CSPs.  The $k$-Local
Hamiltonian Problem ($k$-LH) serves as the canonical \QMA-hard quantum
generalization of $k$-CSP.  A recent line of work has enjoyed success in
devising nontrivial classical approximations for $2$-LH~\cite{G12, B16, B19, GharibianParekh19, PT20, Anshu2020, PT21, A21}; however, truly
quantum approximations for LH remain elusive.  Hardness of approximation
results with respect to \QMA\ are even more elusive, as the existence of a
quantum analogue of the classical PCP theorem, a cornerstone for hardness of
approximation, remains a major open
question~\cite{aharonov2013guest,natarajan2018low}.

We seek to understand the power of quantum versus classical approximations for
$2$-CSP and $2$-LH in the streaming setting, where \emph{space} is the
computational quantity of interest.  In particular we consider the \mcut\ (MC)
and \qmcut\ (QMC) problems.  \mcut\ is a prototypical CSP in the sense that
approximation and hardness results are typically devised for \mcut\ and then
generalized to other CSPs; \qmcut\ has emerged to serve a similar role in
approximating $2$-LH. \qmcut\ is also closely related to the quantum Heisenberg model (see~\cite{GharibianParekh19}), which is a well-studied model of quantum magnetism introduced in the late 1920s.

For \emph{classical} algorithms applied to \emph{classical} $\mcut$, tight bounds\footnote{Up to log factors in the space
complexity, as is typical for streaming algorithms.} for the space complexity
in terms of the approximation factor are known~\cite{KK19}---our work
generalizes these results in both ways, giving tight bounds on the
approximation factor attainable in $\lO{n}$ space by quantum streaming
algorithms for classical $\mcut$ and by quantum and classical algorithms for
$\qmcut$.  

We find, perhaps surprisingly, that quantum streaming algorithms offer no
advantage over classical ones in approximating \mcut\ or \qmcut.  Although our
main contribution is a quantum hardness result, the matching upper bound for
approximating \qmcut\ in the stream requires analyzing a nontrivial streaming
algorithm, which is a departure from the case of \mcut.

\subsection{Our Contributions}
Ours is the first work to consider streaming versions of $2$-LH or any kind of
quantum optimization problem.  Just as the results of~\cite{KK19} have been
expanded for more general CSPs, we expect that our results for \qmcut\ will
apply to more general instances of $2$-LH.  Indeed there is precedent for this
in the standard approximation setting~\cite{H20,PT20}.    

We give tight (up to an arbitrarily small additive constant in the
approximation factor) characterizations of the best possible approximation
factor achievable in $\lO{n}$ space for quantum and classical algorithms for
quantum and classical Max-Cut. Our results are laid out in
Table~\ref{tbl:results}.

\begin{table}
\centering
\begin{tblr}{
  hline{1} = {2-7}{solid},
  hline{2,3,5} = {solid},
  vline{1} = {2-5}{solid},
  vline{2,4,6} = {solid}, % Z stands for the last
  cells = {c},
  cell{1}{2,4} = {c=2}{c}, % multicolumn
  hspan = even, % evenly distributing column widths
}
          & $\mcut$ &    & \qmcut &    \\
  Approximation Factor & $2 + \eps$ & $2-\eps$  & $2+\eps$ & $2 - \eps$  \\
	Classical Algorithm & $\bO{\log n}$ & $\bOm{n}$  & {\boldmath $\bO{\log n}$} & {\boldmath
$\bOm{n}$} \\
  Quantum Algorithm & $\bO{\log n}$ & {\boldmath $\Omega(n)$} & {\boldmath $\bO{\log n}$}                    & {\boldmath $\bOm{n}$} \\
\end{tblr}
\caption{The space needed by quantum and classical algorithms for quantum and
classical Max-Cut. Results from this paper are shown in bold.}
\label{tbl:results}
\end{table}

Our lower bounds are encompassed in the following theorem (the classical lower
bound for $\qmcut$ is a special case of this, as any classical streaming
algorithm can be implemented as a quantum streaming algorithm).
\begin{restatable}{theorem}{maxcutlb}
\label{thm:maxcutlb}
For any $\varepsilon > 0$, any quantum streaming algorithm for \mcut\ or
\qmcut\ that returns a $(2-\varepsilon)$-approximation with probability $2/3$
requires \[
n/2^{\bO{1/\varepsilon^2}}
\]
qubits of storage.
\end{restatable}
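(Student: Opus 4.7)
\section*{Proof Plan for Theorem~\ref{thm:maxcutlb}}

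The plan is to follow the template of Kapralov--Krachun but lift their Fourier-analytic argument to the sequential one-way \emph{quantum} communication model, so that it applies simultaneously to (i) quantum streaming algorithms for classical \mcut\ and (ii) quantum streaming algorithms for \qmcut. First I would formulate a distributional implicit hidden partition problem $\dihp$ with $\players$ players: each player receives a small set of edges on $n$ vertices labelled by bits; in \yes\ instances there is a hidden vertex labelling $\vlabels\in\bool^n$ such that every edge label $\elabels_e$ equals $\vlabels_u\oplus \vlabels_v$, while in \no\ instances every $\elabels_e$ is independently uniform. The players communicate sequentially one-way, each applying an arbitrary quantum channel to the message register before forwarding, and the last player must distinguish \yes\ from \no\ with probability $2/3$.

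Second, I would give the reduction from streaming to $\dihp$. On \yes\ instances the hidden partition yields a $\mcut$ value of essentially $|E|$; a standard edge-subsampling and duplication argument (as in the classical reduction) produces, on \no\ instances, a graph whose $\mcut$ value concentrates around $|E|/2$. For \qmcut\ the same graph family works because the value of a bipartite graph under $\qmcut$ is also well-separated from that of a random graph by a factor approaching $2$; this is where the exponent $\bO{1/\eps^2}$ in the final bound enters, since one must boost the separation by a constant factor with $\poly(1/\eps)$ rounds. A $(2-\eps)$-approximation thus distinguishes \yes\ from \no, so any $s$-qubit streaming algorithm yields a $T$-player sequential quantum protocol using $s$ qubits of communication per round.

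Third, and most importantly, I would prove a quantum communication lower bound on $\dihp$ of the form $\Omega(n/2^{\bO{1/\eps^2}})$ qubits. The approach is Fourier-analytic: regard the joint state held at step $i$ as a matrix-valued function $\rho_i(\elabels,\vlabels)$ of the edge and vertex labels, and expand each matrix entry in the Fourier basis on $\bool^{|E|}\times\bool^n$. For the classical argument of~\cite{KK19}, the key lemma says that in order to distinguish \yes\ from \no\ one needs nontrivial Fourier mass on characters $\chi_S$ whose support $S$ lies in a ``matching'' structure touching $\Omega(n)$ vertices, and that small-memory messages cannot carry such mass because each round can grow support by only the size of the current edge batch. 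To port this to the quantum setting, I would (a) express a quantum channel $\Phi$ acting on the message register as a linear map on matrix-valued Fourier coefficients, (b) show via Kraus decomposition that $\Phi$ cannot create Fourier mass on characters whose support is not already ``known'' to both the input state and the newly-read edges, and (c) use the CPTP constraint (unit trace, contractivity under the Hilbert--Schmidt norm after partial trace) as the quantum substitute for the $\ell_2$ bound used in the classical case to control the total squared Fourier mass accumulated across rounds.

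The main obstacle will be step (c): in the classical argument, $\ell_2$ mass of the message is simply bounded by $1$ via a distributional argument, but in the quantum setting each player can apply an arbitrary unitary plus ancilla before tracing out, so the relevant quantity is the Hilbert--Schmidt norm of matrix-valued Fourier coefficients of the reduced state. I would replace the classical second-moment bound by an operator-level Parseval identity, applied after conditioning on the edge labels so that the vertex labels still look uniform to the state, and then use subadditivity of quantum mutual information (or a chain-rule argument on the Holevo information) to show that a message of $s$ qubits cannot raise the total matrix-valued Fourier mass on ``long'' matching characters by more than $2^{\bO{s}}\cdot 2^{-\bOm{n}}$ per round. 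Summing over $\players = \poly(1/\eps)$ rounds gives the claimed bound, with the constant in the exponent arising from the number of rounds needed to amplify the $\mcut$/\qmcut\ gap past the $(2-\eps)$ threshold.
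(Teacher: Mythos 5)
Your overall architecture matches the paper's: reduce \mcut/\qmcut\ streaming to a sequential one-way quantum version of $\dihp$, observe that quantum channels are linear so that one can Fourier-transform the family of channels and apply a convolution lemma, and note that the channel's Fourier coefficients vanish off sets supported on the current matching. Two points, however, are genuine gaps rather than omitted details.

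First, in the reduction you assert that the \qmcut\ value of a \no-instance graph is well separated from that of a bipartite graph ``by a factor approaching $2$,'' but you give no mechanism for upper-bounding the \qmcut\ value of a near-random graph. This is not automatic: one needs that the Goemans--Williamson-type vector program simultaneously relaxes both \mcut\ (value at most $(m+K)/2$) and \qmcut\ (value at most $(m+3K)/4$), together with the converse result of Charikar--Wirth that a graph with \mcut\ value $m/2+O(\eps m)$ has SDP value $K=O(\eps m)$. Only then does a \no\ instance have \qmcut\ value $\approx m/4$ versus $\ge m/2$ for a \yes\ instance.

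Second, and more seriously, your step (c) replaces the wrong classical ingredient. The engine of the Kapralov--Krachun-style argument is not an $\ell_2$/Parseval bound but the KKL \emph{hypercontractive} inequality, and its quantum analogue is the matrix-valued hypercontractive inequality of Ben-Aroya, Regev, and de Wolf; the paper applies it to the function $w\mapsto \wh{\Ac}_{M_t\tr(u\oplus v\oplus w)}\wh{f}_{t-1}(S)$ (using that channels contract trace norm) to get that the level-$k$ Fourier mass of a $\qubits$-qubit message behaves like $(\bO{\qubits}/k)^{k/2}$ rather than merely $2^{\bO{\qubits}}$. This level-dependent decay is essential: the probability that a random matching indexes a fixed weight-$2\ell$ set is roughly $\binom{\alpha n}{\ell}/\binom{n}{2\ell}\sim(\bO{\alpha\ell}/n)^\ell$, and pairing this with the Parseval/Cauchy--Schwarz bound $\paren*{\bO{n}/\ell}^{\ell}$ on the $\ell_1$ Fourier mass leaves an uncancelled $2^{\Theta(\qubits)}$ factor, so Parseval alone proves nothing once $\qubits=\omega(1)$. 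Neither subadditivity of quantum mutual information nor a Holevo chain rule is known to recover this per-level bound, and the shape of your claimed conclusion ($2^{\bO{s}}\cdot 2^{-\bOm{n}}$ mass created per round) is not what the induction actually establishes: the paper instead tracks, for every $\ell$ and every round $t$, a bound of the form $\paren*{C^t n\,\max\set{\qubits/\ell,1}/\ell}^{\ell/2}$ via an explicit mass-transfer analysis between weight levels, and smallness of the final distinguishing advantage comes from $\qubits\ll n$, not from an exponentially small per-round leakage. Without the matrix hypercontractive inequality and this mass-transfer induction, the core of the lower bound does not go through.
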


For $\mcut$ the upper bound for $(2 + \eps)$-approximation (in fact, even
$2$-approximation) is trivial, as a graph on $m$ edges always has $\mcut$ value
between $m/2$ and $m$. However, for $\qmcut$ the trivial approximation is only
a $4$-approximation, so we give an algorithm that returns a $(2 +
\eps)$-approximation using $\bO{\log n}$ space. We also give an algorithm for
\emph{weighted} graphs, but in this case we are only able to attain a $(5/2 +
\eps)$-approximation (the lower bound remains a 2-approximation).
\begin{restatable}{theorem}{qmcub}
\label{thm:qmcub}
Let $G$ be a weighted graph on $n$ vertices with weights that are multiples of
$1/\poly(n)$. Then for any $\varepsilon, \delta \in (0,1)$ there is a streaming
algorithm that returns a $(5/2 + \varepsilon)$-approximation to the \qmcut\ value of $G$ with probability at least $1 - \delta$ using
$\bO{\frac{1}{\eps^2}\log \frac{1}{\delta}\log n}$ space. If all the weights in
the graph are $1$, it returns a $(2 + \varepsilon)$-approximation instead.
\end{restatable}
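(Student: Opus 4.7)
The plan is to identify a small collection of streaming-friendly graph statistics whose values pin down $\mathrm{QMC}(G)$ to within the stated approximation factor, and to output a closed-form combination of those statistics. The statistics are the total edge weight $W=\sum_e w_e$ and the number $n'$ of non-isolated vertices; both are $(1\pm\eps)$-estimable in $\bO{\eps^{-2}\log(1/\delta)\log n}$ space (via standard $F_1$- and $F_0$-sketching on, respectively, weighted edge insertions and edge-endpoint distinct counts), and the polynomial-precision assumption on the weights ensures each weight fits in $\bO{\log n}$ bits so the sketches stay in budget.

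The analysis rests on the pair of matching-based inequalities
\[
\frac{W}{4}+\frac{3\nu^w(G)}{4}\;\le\;\mathrm{QMC}(G)\;\le\;\frac{W+\nu^w(G)}{2},
\]
where $\nu^w(G)$ is the maximum weight of a matching. The lower bound is a direct state construction: put singlets on the edges of a maximum-weight matching $M^\star$ and any product state on the remaining vertices, so matching edges contribute weight $w_e$ each, while every non-matching edge has at least one endpoint in $V(M^\star)$ with reduced state $I/2$, making $\langle\Pi^-_e\rangle=\tfrac14$ and yielding a total of $\nu^w+(W-\nu^w)/4$. I expect the hard part to be the upper bound, which I would prove via a monogamy-of-entanglement argument (or equivalently a Gharibian--Parekh-style SDP relaxation on a matching cover) showing that non-matching edges contribute at most $w_e/2$ in expectation and matching edges at most $w_e$. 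The ratio of the two bounds is $2(W+\nu^w)/(W+3\nu^w)\le 2$, so $(W+\nu^w)/2$ would be a genuine $2$-approximation---if $\nu^w$ were streamable in polylogarithmic space, which in general it is not.

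For unit weights, $\nu(G)\le n'/2$, so the algorithm outputs $V=(2\hat m+\hat n')/4$, an overestimate of $(m+\nu)/2\ge\mathrm{QMC}(G)$. To bound $V/\mathrm{QMC}(G)\le 2+\eps$ I case-split on $\nu$: when $\nu\ge n'/6$, the constructive lower bound $\mathrm{QMC}(G)\ge m/4+3\nu/4$ yields ratio at most $2$ directly; when $\nu<n'/6$, any maximal matching forms a vertex cover of size $<n'/3$ forcing some vertex to have degree $\Delta\ge 3m/n'$, and monotonicity of $\mathrm{QMC}$ in edges combined with $\mathrm{QMC}(K_{1,\Delta})=(\Delta+1)/2$ supplies the missing lower bound in that regime. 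For general polynomially-bounded weights $\nu^w$ can grow all the way to $W$, so the unit-weight substitution is invalid and one must instead estimate $\nu^w$ directly, e.g.\ via a weighted edge sampler with constant-factor accuracy; balancing this coarser estimate against the lower bound $W/4+3\nu^w/4$ is exactly what inflates the approximation ratio from $2+\eps$ to $5/2+\eps$. The principal obstacle throughout is proving the upper bound $\mathrm{QMC}(G)\le (W+\nu^w)/2$ with the sharp constant $1/2$ in front of $\nu^w$, which is what fixes the final approximation factor.
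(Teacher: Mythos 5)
Your proposal is built around the maximum-matching weight $\nu^w(G)$ as the bridging parameter, and this is where it breaks. First, on the algorithmic side: you concede that $\nu^w$ is not computable in polylogarithmic space and propose, for the weighted case, to ``estimate $\nu^w$ directly, e.g.\ via a weighted edge sampler with constant-factor accuracy.'' No such one-pass polylog-space constant-factor estimator for maximum matching size in general graphs is known (this is a well-known open problem; existing polylog-space estimators require bounded arboricity), so the weighted algorithm does not exist as described. The paper sidesteps this entirely by replacing $\nu^w$ with the quantity $W=\sum_{u}\max_{v\in N(u)}w_{uv}$, which \emph{is} streamable: an $\eps m$-additive estimate suffices (since the target $m/2+W/4$ is always $\Omega(m)$) and is obtained by a reservoir-sampling estimator that samples an edge-endpoint pair proportionally to weight and corrects for edges that are only maximal among \emph{later} arrivals. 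The price of using $W$ instead of $\nu^w$ is that one must prove a matching lower bound $\mathrm{QMC}\ge m/5+W/10$; the paper does this by decomposing the max-weight-edge-per-vertex subgraph into a matching plus a forest with $2M+F=W$ and taking the better of a singlet-on-matching assignment ($M+m/4$) and a classical cut of the forest ($(M+F)/2$). Your upper bound $\mathrm{QMC}\le(m+\nu^w)/2$ is also only asserted (you flag it yourself as the principal obstacle); the paper instead invokes the already-known star bound $\mathrm{QMC}\le m/2+W/4$, which needs no new argument.

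Second, the unweighted case has a quantitative gap in the $\nu<n'/6$ regime. A single vertex of degree $\Delta\ge 3m/n'$ gives only $\mathrm{QMC}\ge(\Delta+1)/2$, which is $O(m/n')$ and nowhere near the required $m/4+n'/8=\Omega(n')$: take $k$ disjoint copies of $K_{1,6}$, where $\nu=k<n'/6$, the max-degree star yields only $7/2$, but the target is $\Theta(k)$. To get a bound linear in $n'$ you must extract energy from \emph{many} stars simultaneously, and the stars around a small vertex cover overlap and have leaf-to-leaf edges, so the ``earn $1/4$ between disjoint subgraphs'' fact does not apply to them. This is exactly the difficulty the paper's DFS-tree argument is designed for: alternating levels of a DFS tree form vertex-disjoint stars containing at least $(W-1)/2$ edges, and the DFS property guarantees every non-tree edge runs \emph{between} stars, so one earns $(d+1)/2$ per star plus $1/4$ on everything else, giving $\mathrm{QMC}\ge m/4+W/8$. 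Your constructive lower bound $\mathrm{QMC}\ge m/4+3\nu^w/4$ via singlets on a maximum matching is correct and is in the same spirit as the paper's facts about singlets and disjoint subgraphs, but without a provable, streamable substitute for $\nu^w$ in both regimes the argument does not close.
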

We note here two lacunae in our results for (unweighted) graphs. Firstly, for
classical \mcut\ it is possible to achieve a $(1 + \eps)$-approximation in
$\bOt{n}$ space, through the use of cut-preserving sparsifiers~\cite{anh-guha}.
However, analogous results on sparsifiers for general $2$-local Hamiltonians
are not known, and indeed there are results pointing in the opposite
direction~\cite{AZ19}.  So while we can characterize the approximation factors
possible in sublinear space the semi-streaming complexity remains open.
Secondly, our $\bO{\log n}$-space upper bound for $\qmcut$ only gives a $(2 +
\eps)$-approximation instead of a $2$-approximation. This is a consequence of
the fact that it is based on graph parameters that must themselves be
approximated rather than just the number of edges, which can be calculated
exactly.

\paragraph{Fourier Analysis for Quantum Channels} The technical core of our
lower bound is a quantum communication complexity bound for a sequential
one-way communication problem (originally introduced in~\cite{KK19} in the
classical setting), in which the first player sends a message to the second
player, the second to the third, and so on. Our bound for this problem is based
on a novel application of \emph{Boolean Fourier analysis}\footnote{For a
general overview of Boolean Fourier analysis, see~\cite{O14}.}---in particular,
we prove that a key inequality associated with this problem, analyzed
in~\cite{KK19} for the classical case, is preserved even in the presence of
quantum communication.

The application of Boolean Fourier Analysis to two-player one-way communication
problems in the classical setting goes back to~\cite{GKKRd07}, in which it was
used to prove lower bounds for the Boolean Hidden Matching problem (and its
application to communication complexity more generally goes back further,
e.g.~\cite{R95,K07}). This problem, and its generalization in the Boolean
Hidden Hypermatching problem (analyzed in~\cite{VY11}), are the main route by
which Fourier analysis has contributed to lower bounds for streaming
algorithms.

However, in later years these techniques have been extended to communication
lower bounds (and corresponding streaming lower bounds) with different
configurations of players. In~\cite{KKP18} they were applied to problems where
many players communicate with a single referee, while in~\cite{KKSV17} they
were extended to problems where players communicate in a line, as is the case
in the Distributed Implicit Partition Problem (from~\cite{KK19}) we make use of
in this paper. In~\cite{CGSVV22}, a generalization of this problem was studied
through the use of Fourier analysis on $\Zbb_q^n$.

The core ingredient of most of these lower bounds is a hypercontractive
Fourier coefficients lemma from~\cite{KKL88}, that can be seen as generalizing
facts about sampling protocols, in which a player chooses some subset of their
input to send to protocols where players send arbitrary (classical) messages.
This lemma was generalized to \emph{matrix}-valued functions in~\cite{BARdW08},
opening the door to the application of Fourier-analytic methods to lower bounds
for \emph{quantum} communication protocols, as these can be seen as functions
from inputs to density matrices.

This was first used to prove quantum lower bounds on the complexity of the
Boolean Hidden Hypermatching problem~\cite{SW12}. This result was further
generalized in~\cite{DM20}, while~\cite{AD21} generalized the Fourier
coefficients lemma further, in order to obtain quantum lower bounds for $\mcut$
and more general hypergraph problems. However, these are all two-player one-way
communication problems. We give the first application of Fourier-analytic
techniques to \emph{sequential quantum one-way communication}. The key
technical challenge is in finding methods for applying Fourier analysis to the
application of \emph{quantum channels}.

In classical communication, as long as we consider a single ``hard'' input
distribution, a player's message can be without loss of generality assumed to
be a deterministic function of the message they received and their input. In
sequential quantum communication, however, the player may apply an arbitrary
quantum channel to the message they receive. Our key insight is that, as
quantum channels are linear operators, many of the techniques of Fourier
analysis, including the convolution lemma for Fourier coefficients, may be
applied to them.

\subsection{Other Related Work}

\paragraph{Streaming bounds for \mcut} The fact that that a $2$-approximation
for \mcut\ is possible in $\bO{\log n}$ space is an immediate consequence of
the fact that the $\mcut$ value is always at least $m/2$. Less immediately, but
still a consequence of standard results in streaming algorithms, is the fact
that it can be $(1 + \eps)$-approximated in $\bOt{n}$ space, through
sparsifiers that preserve cut values~\cite{anh-guha}.  The question, then, was
whether a better approximation than the first could be attained in less space
than the second.

In~\cite{KK15,KKS15}, it was shown that any $(2 - \varepsilon)$-approximation
would require at least polynomial space in $n$, while~\cite{KKSV17} showed that
$(1 + \eps)$-approximation would require $\bOm{n}$ space. This left open the
possibility of intermediate results, but~\cite{KK19} closed the door on this
possibility, proving that $(2 - \eps)$-approximation would require $\bOm{n}$
space for any constant $\eps > 0$.

However, the above results are only for classical algorithms. In~\cite{AD21}, a
polynomial lower bound was shown that applies even to quantum streaming
algorithms, but this left open the possibility that a $(2-\eps)$ approximation
was possible in $\lO{n}$ space for quantum algorithms.

\paragraph{Quantum streaming algorithms}
The first work on quantum streaming was~\cite{LG06}, which showed that there
are problems that that are exponentially easier for quantum streaming
algorithms than classical ones. In~\cite{GKKRW07}, it was shown that this is
true even for a function that does not depend on the order of the stream (the
more ``standard''  streaming model).

Later work has investigated the question of whether quantum streaming can
obtain advantages over classical for problems of independent classical interest
(as the aforementioned work is for problems constructed for the purpose of
proving separations). The problem of recognizing {\tt Dyck}(2) in the stream
was considered as a candidate problem in~\cite{JN14,NT17}, but only negative
results were found. For problems where $\lOm{1}$ passes are allowed over the
stream, \cite{M16} and~\cite{HM19} showed an advantage for the well-studied
moment estimation problem. Later,~\cite{K21} showed that an advantage exists
in the one-pass setting for the problem of counting triangles in graph streams.

\paragraph{Approximating \qmcut}
\qmcut\ was introduced in~\cite{GharibianParekh19}, where a classical
$1/0.498$-approximation algorithm was given, akin to the Goemans-Williamson
algorithm for \mcut, that produces an unentangled product state.  Since the gap
between the best product state and best entangled quantum state on a single
edge is two (see Section~\ref{sec:qmcut-and-mcut}), at best a $2$-approximation
is possible for algorithms that return product states, and so the
$1/0.498$-approximation is nearly optimal among such algorithms.  By rounding
to entangled states, \cite{Anshu2020} gave the first approximation with
guarantee better than 2.  Subsequently~\cite{PT21} showed how to use higher
levels of the quantum Lasserre hierarchy of semidefinite programs to obtain a
slight improvement over~\cite{Anshu2020}.

\section{Proof Overview}
\subsection{Lower Bounds}
Our lower bounds for the quantum streaming complexity of $\mcut$ and $\qmcut$
are derived from a new analysis of the Distributional Implicit Hidden Partition
(\dihp) problem introduced in~\cite{KK19} to prove lower bounds for the
streaming complexity of approximating classical $\mcut$. We restate this
problem here.

\subsubsection{The Distributed Implicit Hidden Partition Problem}
In an instance of $\dihp(n, \alpha, T)$, $T$ players are each given a partial
matching $M_t$ of $\alpha n$ edges on $n$ vertices, with each edge labelled
with a bit. Either these bit labels are generated by choosing a random
partition of $\brac{n}$ and assigning 1 to the edges crossing the partition (a
$\yes$ case) or they are chosen uniformly at random (a $\no$ case).

The players are allowed one-way communication, from player $i$ to player $i+1$
for each $i$, and are additionally given the matching edges (but not the edge
labels) of every previous player for free. Their goal is to determine whether
their inputs were drawn from a $\yes$ case or a $\no$ case with probability at
least $2/3$ over the random draw and any internal randomness they may use.

\paragraph{Reduction to Classical Max-Cut} If each player $t$ creates the graph
$G_t$ consisting of edges labelled $1$ in $M_t$, $G = \bigcup_t G_t$ will be
bipartite in a $\yes$ case, and close to random in a $\no$ case. This means it
is possible to cut every edge in the first case, and not much more than half of
them in the second.  Therefore, an algorithm that returns a $(2 -
\varepsilon)$-approximation to $\mcut$ can distinguish them if $\varepsilon$ is
large enough (by making $\alpha$ small enough and $T$ large enough, we can make
the necessary $\varepsilon$ arbitrarily small). Therefore, a $\mcut$ algorithm
using $S$ space gives a protocol in which each player sends a size-$S$ message,
by having each player run the algorithm on their input and then send their
algorithm's state to the next player. The graphs the players get with this
reduction are illustrated in Figure~\ref{fig:dihp_reduction}.
\begin{figure}
\centering
\begin{subfigure}[t]{\textwidth}
\centering
\tikzset{basevertex/.style={shape=circle, line width=0.5, minimum size=4pt,
inner sep=0pt, draw}} \tikzset{defaultvertex/.style={basevertex, fill=blue!70}}
\begin{tikzpicture}[% VertexStyle/.style={defaultvertex}, fat
arrow/.style={single arrow, thick,draw=blue!70,fill=blue!30, minimum
height=10mm}, scale = 1]
\foreach \i in {0,1,2} {
\pgfmathsetmacro{\offset}{5 * \i}
\node[style=defaultvertex,fill=red] (v0\i) at (\offset ,0) {};
\node[style=defaultvertex,fill=red] (v1\i) at (\offset +1 ,0.2) {};
\node[style=defaultvertex] (v2\i) at (\offset +0.5 ,0.8) {};
\node[style=defaultvertex] (v3\i) at (\offset  ,0.9) {};
\node[style=defaultvertex] (v4\i) at (\offset + 1.1  ,1) {};
\node[style=defaultvertex,fill=red] (v5\i) at (\offset + 0.6, -0.31) {};
}
\draw (v00) -- (v20);
\draw (v10) -- (v30);
\draw (v01) -- (v31);
\draw (v51) -- (v41);
\draw (v12) -- (v32);
\draw (v52) -- (v42);
\end{tikzpicture}
\caption{In a $\yes$ case, only edges crossing the underlying partition are included.}
\end{subfigure}
\begin{subfigure}[t]{\textwidth}
\centering
\tikzset{basevertex/.style={shape=circle, line width=0.5, minimum size=4pt,
inner sep=0pt, draw}} \tikzset{defaultvertex/.style={basevertex, fill=blue!70}}
\begin{tikzpicture}[% VertexStyle/.style={defaultvertex}, fat
arrow/.style={single arrow, thick,draw=blue!70,fill=blue!30, minimum
height=10mm}, scale = 1]
\foreach \i in {0,1,2} {
\pgfmathsetmacro{\offset}{5 * \i}
\node[style=defaultvertex] (v0\i) at (\offset ,0) {};
\node[style=defaultvertex] (v1\i) at (\offset +1 ,0.2) {};
\node[style=defaultvertex] (v2\i) at (\offset +0.5 ,0.8) {};
\node[style=defaultvertex] (v3\i) at (\offset  ,0.9) {};
\node[style=defaultvertex] (v4\i) at (\offset + 1.1  ,1) {};
\node[style=defaultvertex] (v5\i) at (\offset + 0.6, -0.31) {};
\draw (v50) -- (v40);
\draw (v00) -- (v20);
\draw (v01) -- (v31);
\draw (v51) -- (v41);
\draw (v22) -- (v32);
\draw (v02) -- (v12);
}
\end{tikzpicture}
\caption{In a $\no$ case, each edge received is included with probability $\frac{1}{2}$.}
\end{subfigure}
\caption{The graphs each player receives when reducing $\dihp$ to $\mcut$.}
\label{fig:dihp_reduction}
\end{figure}
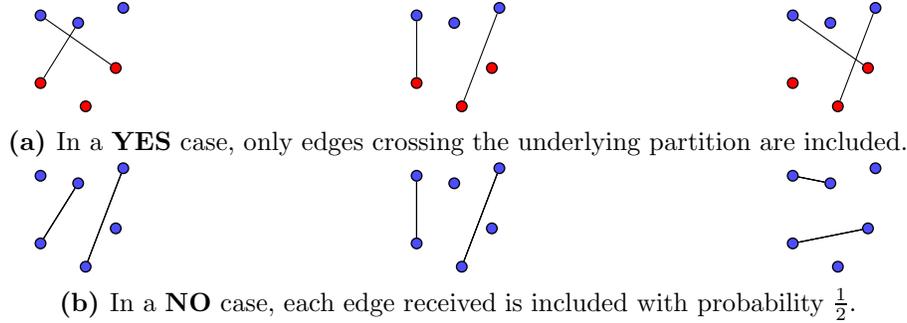

One problem with this is that, if the players' matchings are randomly chosen
they may share edges. Our approach to this differs somewhat from that
of~\cite{KK19}. Instead of considering multigraphs, we take advantage of the
fact that player $t$ is allowed to know the matching (but not the edge labels)
of players $s < t$. This means we can have them decline to add edges that are
present in previous matchings, guaranteeing that the final graph is simple. We
show that, as the number of edges thus removed is small, it has little effect
on the reduction.

\paragraph{Extending the Reduction to Quantum Max-Cut} The reduction to Quantum
Max-Cut uses exactly the same mapping from $\dihp$ instances to graphs. We
consider the following SDP, \[
\max_{f : V \rightarrow S^{n-1}} \sum_{uv \in E} - \langle f(u), f(v) \rangle
\]
which is a shifted version of the standard Goemans-Williamson SDP for $\mcut$.
In particular, its optimal value is an upper bound on $2K - m$, where $K$ is
the $\mcut$ value of a graph. Usefully, when $2K - m$ is small, a converse
property holds, as the optimal value of this SDP is at most a constant factor
times larger than $2K - m$~\cite{CW04}. This means the graphs generated by
$\no$ instances of $\dihp$ will have small values of this SDP.

This gives us a $\qmcut$ lower bound, because this SDP \emph{also} upper bounds
$\frac{4}{3}Q - \frac{m}{3}$, where $Q$ is the $\qmcut$ value of the
graph\footnote{See Section~2.3 of \cite{hwang2021unique}. Note that in the
cited work, both $\qmcut$ and the SDP are scaled by $\frac{1}{m}$ relative to
our usage.}.  So $\no$ instances will create graphs with $\qmcut$ value
approximately $m/4$.  Conversely $\yes$ instances will create graphs with
$\qmcut$ value at least $m/2$, as they are bipartite and the $\qmcut$ value is
always at least half the $\mcut$ value. So a $(2-\eps)$ approximation algorithm
would suffice to distinguish between the two.

\subsubsection{Quantum Communication Lower Bounds for DIHP} In~\cite{KK19} it
was shown that $\dihp$ is hard when the players are only allowed to send
classical messages, requiring $\bOm{n}$ space when $\alpha$ and $T$ are
constant. The majority of the technical difficulty of our lower bounds is in
proving that $\dihp$ is hard even if the players are allowed to send
\emph{quantum} messages. This immediately implies that quantum algorithms must
use $\bOm{n}$ space to $(2-\eps)$-approximate $\mcut$ or $\qmcut$, and so no
quantum advantage for either problem is possible.

\paragraph{Reduction to Boolean Fourier Analysis} As with the classical lower
bound of~\cite{KK19}, our proof depends on applying Fourier analysis to
functions on the Boolean cube. In particular, we will show that a bound on
Fourier coefficients used in the classical proof is maintained even in the
presence of quantum communication. We start by providing an intuition for the
significance of this bound.

Suppose the game is in a $\yes$ case, and so player $t$'s input depends only on the
matching $M_t$ and the randomly chosen partition (which we may write $x \in
\bool^n$, with the bit of vertex $i$ determining which side of the partition it
is on). Then, fixing $(M_s)_{s=1}^t$, we can write a function \[ 
f_t : \bool^n \rightarrow \Cbb^{2^\qubits \times 2^\qubits} 
\] where $f(x)$ is the density matrix sent by
player $t$ if the partition is $x$, and $\qubits$ is the number of qubits used to
represent that state.

Now suppose player $t+1$ would like to determine whether they are in a $\yes$
or a $\no$ case. They have received $f_{t}(x)$ if they are in a $\yes$ case,
and they want to determine if it is consistent with being in a $\yes$ case. In
addition, they have the bit labels of the edges in $M_{t+1}$. Therefore, for
any odd-cardinality set of edges in $M_{t+1}$, they know the parity of the set
of vertices in $x$ matched by these edges.  We write such sets of vertices as
$M_{t+1}\tr s$ for a string $s \in \bool^{\alpha n}$ indexing a subset of the
edges in $M_{t+1}$.

Now suppose the player looked at only one of these sets $s$, and so knew the
parity of the vertices $M_{t+1}\tr s$ alone. To tell whether $f_t(x)$ could
come from a $\yes$ instance, they need\footnote{We are eliding the possibility
that, for instance, the state player $t+1$ receives is impossible or unlikely
in a $\yes$ case due to, for instance, only arising if a triangle in previously
arrived edges has every edge labelled 1. However it turns out this possibility
is already accounted for by considering what a previous player would've seen on
receiving the third edge of that triangle.} its average value when the parity
of $M_{t+1}\tr s$ is $0$ to be distinguishable from its average value when the
parity of $M_{t+1}\tr s$ is $1$.

The distinguishability of two distributions over quantum states is given by the
trace norm of the difference between their density matrices, so the quantity
the player would need to be large is
\begin{align*}
\frac{1}{2}\norm*{\frac{1}{2^{n-1}}\sum_{\substack{x \in \bool^n:\\ x \cdot M_{t+1}\tr s = 0}} f_t(x) -
\frac{1}{2^{n-1}}\sum_{\substack{x \in \bool^n:\\ x \cdot M_{t+1}\tr s = 1}}f_t(x)}_1 &=
\frac{1}{2^n}\norm*{\sum_{x \in \bool^n} f_t(x) (-1)^{x \cdot M_{t+1}\tr s}}_1\\
&= \norm{\wh{f}_t (M_{t+1}\tr s)}_1
\end{align*}
where we now introduce $\wh{f}_t$, the \emph{Fourier transform} of $f_t$, given by \[
\wh{f}(S) = \frac{1}{2^n}\sum_{x \in \bool^n} f(x)(-1)^{S \cdot x}\text{.}
\]
It turns out that this sums nicely---it can be shown that player $T$'s ability
to distinguish between a $\yes$ and a $\no$ case is bounded by \[
\sum_{t=1}^T\sum_{s \in \bool^{\alpha n}\setminus\set{\emptyset}}
\norm{\wh{f}_t(M_{t+1}\tr s)}_1
\]
and so our goal will be to prove that this sum is small in expectation over
$(M_t)_{t=1}^T$.

To prove this, we bound the total value of weight-$2\ell$ Fourier coefficients
for every $\ell$. As a ${\sim}\binom{\alpha n}{\ell}/\binom{n}{2\ell}$ fraction
of these will end up being matched by a set of $\ell$ matching edges, it
suffices to prove that the value is bounded by\footnote{This expression changes
somewhat when $\ell \ge \qubits$, but we will disregard those highest-order
terms in this overview.} \[
\paren*{\frac{\sqrt{\qubits n}}{\ell}}^\ell
\]
where we have dropped some constants exponential in $T$ and $\ell$. Then if
$\qubits \ll n$, this expression will be small enough for the final states to
be hard to distinguish.

\paragraph{The Evolution of Fourier Coefficients} We will bound the expression
above by induction on $t$, considering how these coefficients evolve based on
the message sent from player $t$ to player $t+1$. This is where the
\emph{quantum} difficulty of the proof will arise, and is the most important
novel element in our analysis---the combinatorial aspects of the evolution are
similar to those in the classical case but now player $t$ may apply a quantum
channel to generate $f_t$ rather than sending a
deterministic\footnote{When proving a lower bound for a classical communication
problem with a known input distribution, one may without loss of generality
assume the players act deterministically.} message based on their input and
the message $f_{t-1}$.

The base case of the induction is straightforward (for simplicity we can think
of player 1 as receiving $0^\qubits$ from a player $0$, and consider only an
inductive step).  For the inductive step, we need to understand the effect of
player $t$ applying a quantum channel $\Ac$ to $f_{t -1}(x)$. This quantum
channel itself is determined by player $t$'s input, and therefore (again fixing
$(M_s)_{s=1}^t$) we can write $\Ac_x$ for its value when the underlying
partition is $x$. As quantum channels are linear operators, we can define a
Fourier transform \[
\wh{\Ac}_S = \frac{1}{2^n}\sum_{x \in \bool^n}\Ac_x (-1)^{S \cdot x}
\]
that in particular obeys the convolution lemma for the Boolean Fourier
transform, which tells us that \[
\wh{\Ac_x f_{t-1}(x)}(S) = \sum_U \wh{\Ac}_U \wh{f}_{t-1}(U \oplus S)\text{.}
\]
Using the fact that $\wh{\Ac}_U$ is $0$ whenever $U$ is not $M_ts$ for some $s
\in \bool^{\alpha n}$ (intuitively, this is because $\Ac_x$ only depends on the
edge labels of the edges in $M_t$), we can write down ``mass transfer'' lemmas
describing how coefficients of weight $2\ell_2$ of $f_t$ are formed from
coefficients of weight $2\ell_1$ of $f_{t-1}$. We want to know how much weight
can be contributed to $\wh{\Ac_x f_{t-1}(x)}(S)$ from $\wh{\Ac}_U
\wh{f}_{t-1}(M_t\tr s \oplus S)$ where $\abs{S} = \ell_2$ and $M_t\tr s \oplus
S = \ell_1$.

We can think of this in terms of three more parameters, $a$ the number of edges
from $M_t\tr s$ that are entirely contained in $S$, $b$ the number of edges that
each have one endpoint in $S$, $c$ the number that are entirely outside of $S$
(so $\ell_2 = \ell_1 - a + c$). We end up with the amount of ``mass''
transferred from $\ell_1$-weight coefficients via $M\tr_t s$ with this property
being bounded by \[
\sum_{\substack{S \in \bool^{n}\\ \abs{S} =
2\ell_1}} \sum_{\substack{u \in \bool^{\alpha n}\\\abs{u} =
a}}\sum_{\substack{v \in \bool^{\alpha n}\\ \abs{v} = b}}
I_S(u)B_S(v)\sum_{\substack{w \in \bool^{\alpha n}\\ \abs{w} = c}}
\norm{\wh{\Ac}^t_{M_t\tr(u \oplus v \oplus w)}\wh{f}_{t-1}(S)}_1
\]
where $I_S(u)$ and $B_S(v)$ are indicator variables on whether $M_t\tr u$ is
entirely contained in $S$ and $M_t\tr v$ has one endpoint of each edge in $S$.
See Figure~\ref{fig:evolution} for an illustration.
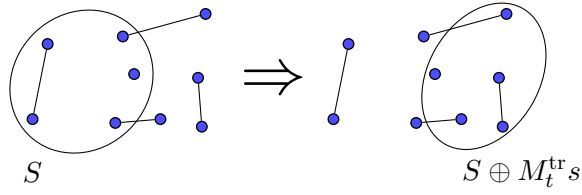
\begin{figure}
\centering
\tikzset{basevertex/.style={shape=circle, line width=0.5, minimum size=4pt,
inner sep=0pt, draw}} \tikzset{defaultvertex/.style={basevertex, fill=blue!70}}
\begin{tikzpicture}[% VertexStyle/.style={defaultvertex}, fat
arrow/.style={single arrow, thick,draw=blue!70,fill=blue!30, minimum
height=10mm}, scale = 1]
\node[style=defaultvertex] (v1) at (0,0) {};
\node[style=defaultvertex] (v2) at (-0.2,-1) {};
\node[style=defaultvertex] (v3) at (1,0.1) {};
\node[style=defaultvertex] (v4) at (1.15,-0.4) {};
\node[style=defaultvertex] (v5) at (0.9,-1.05) {};
\draw (0.45, -0.5) circle [x radius=0.9, y radius=1, rotate=315];
\node at (-0.2, -1.7) {$S$};
\node[style=defaultvertex] (v6) at (2.1, 0.4) {};
\node[style=defaultvertex] (v7) at (2, -0.45) {};
\node[style=defaultvertex] (v8) at (2.05,-1.1) {};
\node[style=defaultvertex] (v9) at (1.5,-1) {};
\draw (v1) -- (v2);
\draw (v3) -- (v6);
\draw (v7) -- (v8);
\draw (v5) -- (v9);

\node at (3,-0.5) {\Huge $\Rightarrow$};
{
\newcommand{\offset}{4}
\node[style=defaultvertex] (v1) at (\offset + 0,0) {};
\node[style=defaultvertex] (v2) at (\offset + -0.2,-1) {};
\node[style=defaultvertex] (v3) at (\offset + 1,0.1) {};
\node[style=defaultvertex] (v4) at (\offset + 1.15,-0.4) {};
\node[style=defaultvertex] (v5) at (\offset + 0.9,-1.05) {};
\node[style=defaultvertex] (v6) at (\offset + 2.1, 0.4) {};
\node[style=defaultvertex] (v7) at (\offset + 2, -0.45) {};
\node[style=defaultvertex] (v8) at (\offset + 2.05,-1.1) {};
\node[style=defaultvertex] (v9) at (\offset + 1.5,-1) {};
\node at (\offset + 2.3, -1.7) {$S \oplus M_t\tr s$};
\draw (\offset + 1.8, -0.43) circle [x radius=0.74, y radius=1.04, rotate=330];
\draw (v1) -- (v2);
\draw (v3) -- (v6);
\draw (v7) -- (v8);
\draw (v5) -- (v9);
}
\end{tikzpicture}
\caption{When player $t$ receives the matching $M_t$, each subset $s$ of the
edges in $M_t$ and each Fourier coefficient $\wh{f}_{t-1}(S)$ corresponds to a
new Fourier coefficient $\wh{f}_t(S \oplus M_t\tr s)$. In this example $s$
includes $a = 1$ edge internal to $S$, $b = 2$ edges with one endpoint in $S$,
and $c = 1$ edge outside, so the resulting coefficient $S \oplus M_t\tr s$ has
weight $\abs{S} + a - c = 5$.}
\label{fig:evolution}
\end{figure}

The final tool we need to bound this is an extension of the matrix-valued
Fourier coefficients inequality, a consequence of Theorem~1 of~\cite{BARdW08}
(itself a generalization of a lemma of~\cite{KKL88}) that has previously been
used for two-player quantum lower bounds~\cite{SW12,DM20,AD21}.  This will tell
us that\footnote{Again, this has to be somewhat changed when $c$ is
particularly large.} \[
\sum_{\substack{w \in \bool^{\alpha n}\\ \abs{w} = c}}
\norm{\wh{\Ac}^t_{M_t\tr(u \oplus v \oplus w)}\wh{f}_{t-1}(S)}_1 =
\binom{\bO{\qubits}}{c}\norm*{\wh{f}_{t-1}(S)}_1\text{.} 
\] 
With this in place, and using the fact that \[
\prob{I_S(u)B_S(v)} \sim \frac{\binom{\ell_1}{a}}{\binom{n}{a}} \cdot \frac{\binom{\ell_1}{b}}{\binom{n}{b}}
\]
we can bound the above in expectation over $M_t$, and from then the proof
becomes an exercise in carefully evaluating sums.

\subsection{Space Upper Bounds for Quantum Max-Cut}
For classical $\mcut$ a trivial classical algorithm achieving a
$2$-approximation in logarithmic space is already known---count the number of
edges (or total weight for a weighted graph) $m$ and report $m$, which is at
most twice the true value. As our lower bound for quantum algorithms for
classical $\mcut$ is the same as the classical one, nothing more is needed
here. However, for $\qmcut$ the story is a bit different. The trivial lower
bound in this case is $m/4$, and so the aforementioned algorithm would only
guarantee a 4-approximation.

We give a simple algorithm that achieves a $(2 + \eps)$-approximation in the
unweighted case, and a $(5/2 + \eps)$-approximation in the weighted case. The
basic idea will be the same in both cases, so for ease of exposition the rest
of the discussion in this section will assume a weighted graph, and we will
point out where every edge having unit weight allows a better approximation.

\paragraph{Upper Bounding the QMC Value} Let $m$ be the total weight of the
graph, and let $W = \sum_{u \in V} \max_{v \in N(u)}w_{uv}$, the sum of the
max-weight edges incident to each vertex (so $W$ is just the number of
non-isolated vertices in the unweighted case). It is known~\cite{Anshu2020}
(see Lemma~\ref{lem:qmcut-bound})
that \[
\frac{m}{2} + \frac{W}{4}
\]
is an upper bound for $\qmcut$. So we want lower bounds in
terms of $m$ and $W$.

\paragraph{Lower Bounding the QMC Value in General Weighted Graphs} We use a
modified version of an argument of~\cite{Anshu2020}.  Consider the subgraph
formed by taking the highest-weight edge incident to every vertex. We can
decompose this into a matching $M$ consisting of every edge ``chosen'' by two
vertices, and a forest $F$ of all the other edges (note that the two together
are also a forest). Abusing notation to use the names of the objects to also
denote their total weights, we have $2M + F = W$. 

Now, for any edge, it is possible to earn $\qmcut$ energy equal to its weight
by assigning its vertices the singlet. Secondly, when we have a collection of
vertex-disjoint graphs it is possible to maximize each of their $\qmcut$
energies separately and still earn energy $w_e/4$ for each edge $e$
\emph{between} distinct pairs of graphs. So there is a solution earning $M +
(m - M)/4$.

Secondly, as $M \cup F$ is a forest, cutting it clasically earns energy
$(M + F)/2$, as any classical cut gives a quantum cut earning at least half as
much energy. By minimizing these two expressions subject to $2M + F = W$ it can
be shown that the $\qmcut$ value is at least \[
\frac{m}{5} + \frac{W}{10}
\]
giving a $(5/2)$-approximation determined only by $m$ and $W$. We illustrate
this construction in Figure~\ref{fig:tree_and_matching}.
\begin{figure}
\centering
\tikzset{basevertex/.style={shape=circle, line width=0.5, minimum size=4pt,
inner sep=0pt, draw}} \tikzset{defaultvertex/.style={basevertex, fill=blue!70}}
\begin{tikzpicture}[% VertexStyle/.style={defaultvertex}, fat
arrow/.style={single arrow, thick,draw=blue!70,fill=blue!30, minimum
height=10mm}, scale = 1]
\node[style=defaultvertex] (v0) at (0,0) {};
\node[style=defaultvertex] (v1) at (1,0.5) {};
\node[style=defaultvertex] (v2) at (1,-0.5) {};
\node[style=defaultvertex] (v3) at (2,0.5) {};
\node[style=defaultvertex] (v4) at (2,-0.5) {};
\node[style=defaultvertex] (v5) at (3,0.5) {};
\node[style=defaultvertex] (v6) at (2,-1) {};
\draw (v0) -- (v1);
\draw (v0) -- (v2);
\draw (v1) -- (v3);
\draw (v2) -- (v6);
\draw[color=red,line width=1pt] (v3) -- (v5);
\draw[color=red,line width=1pt] (v2) -- (v4);
\draw[dashed] (v1) -- (v2);
\draw[dashed] (v3) -- (v4);
\node at (0.4, 0.45) {$4$};
\node at (0.4, -0.45) {$3$};
\node at (1.5, 0.7) {$5$};
\node at (2.5, 0.7) {$6$};
\node at (1.5, -0.3) {$4$};
\node at (1.4, -0.95) {$2$};
\node at (1.15, 0) {$1$};
\node at (2.15, 0) {$1$};
\end{tikzpicture}
\caption{Proving a lower bound for the optimal \qmcut{} value of a weighted
graph. The edges ``chosen'' by one vertex (in solid black) form a tree $T$,
while the edges ``chosen'' by two (in solid red) form a matching $M$. There is an
assignment earning energy $\frac{T + M}{2} = 12$ from assigning a perfect
classical cut to $T \cup M$, and one earning energy $M + \frac{M - m}{4} = 14$
from assigning the singlet to every edge in $M$ and earning $w_e/4$ on every
other edge.}
\label{fig:tree_and_matching}
\end{figure}
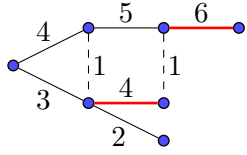

\paragraph{Lower Bounding the QMC Value in Unweighted Graphs} In the unweighted
case we have the advantage that any method for choosing a maximal tree chooses
one of optimal weight, and so (inspired by a method of~\cite{GY21}) we consider
\emph{depth-first search trees}. We will assume the graph is
connected---note that $m$, $W$, and the $\qmcut$ value all sum up over
components, so as long as the lower bound we show is linear in $m$ and $W$ this
will immediately generalize.

In the weighted case, trying to optimize the energy we earned from our tree
meant potentially earning nothing from edges outside the tree, as we had no
control over how they might cross the tree. However, with a DFS tree, we have
the following useful property: for any node in the tree, the subtrees rooted at
its children are disconnected from each other (because otherwise those
connecting edges would have been explored before both subtrees were). This means
we can do the following: choose either the even or odd levels (with level $i$
being edges from depth-$i$ vertices to depth-$(i+1)$ vertices) of the tree, one
of which will contain at least half the edges; call this set of edges $H$.
Now, $H$ consists of disjoint bipartite subgraphs, and no edge outside the tree connects two edges in
the same level of $H$. Thus, as noted above for the weighted case, there is an optimal $\qmcut$ solution for $H$ that still earns
$1/4$ from every edge outside the tree, and from the edges in the unchosen levels.  An optimal classical solution of this kind on $H$ earns a $\qmcut$ value of $1/2$ on each edge in $H$ (by randomly selecting either a fixed assignment that cuts all the edges or the ``bit-flipped'' assignment, independently for each component of $H$).

Now, as the tree contains $W - 1$ edges, merely using the optimal classical solution would only earn us at least $\frac{W - 1}{4} + \frac{m - (W - 1)/2}{4}$,
which is not quite as strong as we want. But each level of the tree is a
disjoint union of stars, and the optimal $\qmcut$ assignment for a star with $d$ leaves earns
$\frac{d+1}{2}$. So we can earn at least $1/2$ more energy, giving us \[
\frac{W - 1}{4} + \frac{m - (W - 1)/2}{4} + \frac{1}{2} > \frac{m}{4} +
\frac{W}{8} 
\]
for a $2$-approximation determined only by $m$ and $W$. We illustrate this
construction in Figure~\ref{fig:dfs_tree}.
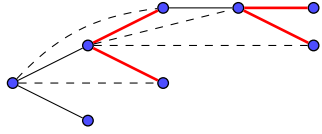
\begin{figure}
\centering
\tikzset{basevertex/.style={shape=circle, line width=0.5, minimum size=4pt,
inner sep=0pt, draw}} \tikzset{defaultvertex/.style={basevertex, fill=blue!70}}
\begin{tikzpicture}[% VertexStyle/.style={defaultvertex}, fat
arrow/.style={single arrow, thick,draw=blue!70,fill=blue!30, minimum
height=10mm}, scale = 1]
\node[style=defaultvertex] (v0) at (0,0) {};
\node[style=defaultvertex] (v1) at (1,0.5) {};
\node[style=defaultvertex] (v2) at (2,1) {};
\node[style=defaultvertex] (v3) at (3,1) {};
\node[style=defaultvertex] (v4) at (4,1) {};
\node[style=defaultvertex] (v5) at (2,0) {};
\node[style=defaultvertex] (v6) at (1,-0.5) {};
\node[style=defaultvertex] (v7) at (4,0.5) {};
\draw (v0) -- (v1);
\draw (v0) -- (v6);
\draw[color=red,line width=1pt] (v1) -- (v2);
\draw[color=red,line width=1pt] (v1) -- (v5);
\draw (v2) -- (v3);
\draw[color=red,line width=1pt] (v3) -- (v4);
\draw[color=red,line width=1pt] (v3) -- (v7);
\draw[dashed] (v0) -- (v5);
\draw[dashed] (v1) -- (v3);
\draw[dashed] (v1) -- (v7);
\draw[dashed] (v0) to [bend left = 20] (v2);
\end{tikzpicture}
\caption{Proving a lower bound for the optimal \qmcut{} value of an unweighted
graph based on a DFS tree (the solid edges in the graph). The heavier half of
the levels in the DFS (colored in red) are given an optimal assignment, and
then every other edge (solid and dashed) earns 1/4. The total energy earned in
this example is $\frac{3}{2} + \frac{3}{2} + \frac{7}{4} = 4.75$.}
\label{fig:dfs_tree}
\end{figure}

\paragraph{Estimating $W$ in the Stream} To obtain an actual algorithm we will
need a $(1 + \varepsilon)$-multiplicative approximation to $m/2 + W/4$. Counting $m$ is
trivial, and in the unweighted case $W$ can be approximated with cardinality
estimation algorithms. So the problem we need to resolve (ideally in $\bO{\log
n}$ space) is estimating \[
\sum_{u \in V} \max_{v \in N(u)}w_{uv}
\]
in the stream. Our approach is to use reservoir sampling to sample edges $e$
with probability proportion to $w_e$, choose an endpoint at random, and then
check whether they are higher-weight than every edge that arrives \emph{after}
them in the stream (since we can't check edges that arrive earlier). If we
defined an estimator that is $1$ whenever this happened and $0$ otherwise, we
would get a contribution of $w_e/2$ for every vertex $u$ and $v \in N(u)$ such
that $w_{uv}$ was a ``scenic viewpoint'', an edge of higher weight than all
subsequent edges incident to $u$.

To correct for this, we also check the weight $w'$ of the highest-weight edge
to arrive incident to $u$ after $uv$ (calling it $0$ if $uv$ is the last edge)
and then subtract $w'/w_{uv}$. This gives us an estimator with expectation \[
W/2m
\]
and constant variance, that we can compute in logarithmic space. So we could
have trouble getting a multiplicative estimate of $W$ if $m \gg W$, but this
isn't a problem---we only want an estimate of $m/2 + W/4$, and so a $\eps
m$-approximation of $W$ suffices. This then gives us our full streaming
algorithm, obtaining a $(2 + \eps)$- and $(5/2 + \eps)$-approximation in the
unweighted and weighted case, respectively, using $\bO{\log n}$ space if $\eps$
is constant.

\section{Preliminaries}
\subsection{Notation} We will generally write $\qubits$ for the number of qubits an
algorithm uses. We will therefore write the state of the algorithm as a density
matrix in $\Cbb^{2^\qubits \times 2^\qubits}$.  However, when considering
\qmcut\ assignments, qubits will typically correspond to vertices in a graph
$(V,E)$, in which case we will use $n$ to denote both $|V|$ and the number of
qubits; we expect this will be clear from context.

In defining \qmcut, we will need to use the Pauli matrices. These are defined as:
\begin{equation*}
\label{eq:paulis}
 I=\begin{bmatrix}
1 & 0 \\
0 & 1
\end{bmatrix},
\,\,\,\,\,\,
X=\begin{bmatrix}
0 & 1 \\
1 & 0
\end{bmatrix},
\,\,\,\,\,\,
Y=\begin{bmatrix}
0 & -i \\
i & 0
\end{bmatrix}, \,\text{and}
\,\,\,\,\,\,
Z=\begin{bmatrix}
1 & 0 \\
0 & -1
\end{bmatrix}.
\end{equation*}
\noindent The notation $\sigma_i$ is used to denote a Pauli matrix $\sigma \in
\{X,Y,Z\}$ acting on qubit $i$: 
\begin{equation*}
\sigma_i = I \otimes I \otimes \ldots \otimes \sigma \otimes \ldots \otimes I \in \mathbb{C}^{2^n \times 2^n}, 
\end{equation*}
where the $\sigma$ occurs at position $i$.  A \emph{Pauli term} or \emph{term}
refers to a tensor product of Pauli operators (e.g., $X_i X_j$).  We say a term
is $k$-local if it is the tensor product of exactly $k$ (non-identity) Pauli
operators (e.g., $I$ is $0$-local, $Y_i$ is $1$-local, and $X_i Z_j$ is
$2$-local). A term is \emph{odd-local} if it is $k$-local for some odd $k$, and
it is \emph{even-local} otherwise. 

The $4^n$ Pauli terms each square to I and are orthonormal under the
Hilbert-Schmidt inner product, i.e., $\frac{1}{2^n}\tra[a^2] =
\frac{1}{2^n}\tra[I] = 1$ and $\frac{1}{2^n}\tra[ab] = 0$ for distinct Pauli
terms $a,b$.  Consequently every Hermitian operator $A$ acting on $n$ qubits
(i.e., $A \in \Cbb^{2^n \times 2^n})$ can be written as a linear combination of
Pauli terms with real coefficients.  

When we write $\norm{.}_p$ for a matrix, we mean the Schatten $p$-norm, defined
as the $p$-norm of the singular values of the matrix (also known as the trace
norm when $p = 1$). The matrix $D$ representing the state of the algorithm will
therefore be positive semi-definite, Hermitian, and satisfy $\norm{D}_1 = 1$.

We will often be using $x \in \bool^n$ to represent a subset of $\brac{n}$. In
a slight abuse of notation, we will sometimes use $x \in \bool^n$ and $x
\subseteq \brac{n}$ interchangeably, when this does not cause ambiguity. We use
the ``weight'' of a set $x$ to refer to $\abs{x}$, usually in the context of a
Fourier coefficient. When dealing with a Fourier coefficient $\wh{f}(x)$
(defined later in this section), we use ``mass'' informally to refer to its
magnitude.

In a graph $G = (V,E)$, we use $N(v)$ to denote the neighborhood of a vertex $v
\in V$, defined as $\set{u \in V : uv \in E}$.

\subsection{Quantum Streaming Algorithms}
As our lower bounds are based on reductions from communication complexity, they
will be valid for the strong definition of quantum streaming, in which the
algorithm maintains $\qubits$ qubits and has a family of quantum channels
$(\Ac_\sigma)_{\sigma \in A}$ where $A$ is the alphabet of all possible updates
that can appear in the stream. On seeing update $\sigma$, the algorithm applies
$\Ac_\sigma$ to its current state. We will therefore not need to concern
ourselves with questions about ancillary qubits---for more discussion on the
subtleties involved here, see section 2.4 of~\cite{NT17}.

\subsection{Quantum and Classical Max-Cut} \label{sec:qmcut-and-mcut}
We define \mcut\ and \qmcut\ as follows.
\begin{definition}[\mcut] \label{def:mcut} Given a graph $G=(V,E)$ with weights $w_{uv} \geq 0$ for $uv \in E$,
find the value of 
\begin{equation*}
\max_{f:V \rightarrow \{-1,1\}} \sum_{uv \in E} w_{uv} \frac{1 - f(u)f(v)}{2}. 
\end{equation*}
An \emph{unweighted} instance is one where $w_{uv} = 1$ for all $uv \in E$.
\end{definition}

In other words, we seek to assign each vertex $u$ to a side of a cut in a graph (a side designated by $f(u) = -1$ or $f(u) = 1$) to maximize the weight of edges crossing the cut.  \mcut\ is an instance of classical $2$-CSP, and one may define a related instance of $2$-LH that is closely related to the anti-ferromagnetic quantum Heisenberg model that has extensively been studied by physicists.

\begin{definition}[\qmcut] \label{def:qmcut} Given a graph $G=(V,E)$ with weights $w_{uv} \geq 0$ for $uv \in E$,
find the maximum eigenvalue (or \emph{energy}) of
\begin{equation*}
Q = \sum_{uv \in E} w_{uv} \frac{I - X_u X_v - Y_u Y_v - Z_u Z_v}{4}.
\end{equation*}
An \emph{unweighted} instance is one where $w_{uv} = 1$ for all $uv \in E$.
\end{definition}

While \mcut\ is \NP-hard, \qmcut\ is \QMA-hard~\cite{Cubitt16Complexity,P15}. Although \qmcut\ is cast as a maximum-eigenvalue problem, we note that $Q \in \mathbb{C}^{2^n \times 2^n}$ is exponentially large in the number of vertices, corresponding to qubits, $n$.  We may also express \mcut\ in such a form, where we seek to find the maximum eigenvalue of
\begin{equation*}
M = \sum_{uv \in E} w_{uv} \frac{I - Z_u Z_v}{2}.
\end{equation*}

To see that this is equivalent to Definition~\ref{def:mcut}, consider
$\ket{\psi} = \ket{s_1\ldots s_n}$, with $s_u \in \{0,1\}$.  Then 
\begin{align}
\nonumber
 \bra{\psi} M \ket{\psi} &=\sum_{uv \in E} w_{uv} \frac{1 - \bra{\psi} Z_u Z_v
 \ket{\psi}}{2}\\
 \nonumber
 &= \sum_{uv \in E} w_{uv} \frac{1 - \bra{s_u} Z \ket{s_u} \bra{s_v} Z
 \ket{s_v}}{2}\\ &= \sum_{uv \in E} w_{uv} \frac{1-(1-2s_u)(1-2s_v)}{2},
\label{eq:mcut-Hamiltonian-value}
\end{align}
and we may take $f(u) = 1-2s_u$ to obtain the desired correspondence.  So maximizing $\bra{\psi} M \ket{\psi}$ over computational basis states $\ket{\psi}$ corresponds to \mcut; however, we observe that $M$ is a diagonal matrix in the computational basis, so that a basis state achieves the maximum eigenvalue.  One way to see that \mcut\ and \qmcut\ are related is that the diagonals of $M$ and $Q$ differ by a factor of two, and so \qmcut\ captures \mcut\ if we restrict ourselves to classical solutions, i.e., basis states.

The quantum nature of \qmcut\ may be better understood by observing that if we look at a graph consisting of a single unweighted edge, we have
\begin{equation}\label{eq:singlet-projector}
Q = \frac{I - X \otimes X - Y \otimes Y - Z \otimes Z}{4} = \ket{\psi^-}\bra{\psi^-},
\end{equation}
where $\ket{\psi^-} = (\ket{01} - \ket{10})/\sqrt{2}$ is the maximally entangled \emph{singlet} state.  Thus the maximum energy of one is obtained by assigning the singlet to the two qubits.  The diagonal of Q is $(\ket{01}\bra{01} + \ket{10}\bra{10})/2$, so a classical solution (basis state) can earn energy at most half by cutting the edge, which corresponds to ``anti-aligning'' or assigning different $\{-1,1\}$-values to the endpoints, in the sense of Definition~\ref{def:mcut}.  A quantum solution can obtain an additional energy of half by taking an ``anti-aligned'' superposition of the two ways of cutting an edge.

\paragraph{Approximation algorithms} We say (following previous work on streaming \mcut\ that an algorithm
$K$-approximates a function $f(G)$ of a graph $G$ if it returns a value in
$\brac{f(G), Kf(G)}$.  In our space-limited setting, we only focus on approximating the objective value of a problem rather than also producing a corresponding feasible solution.

\subsection{Fourier Analysis}
For a function $f$ on the Boolean cube $\bool^n$, the Fourier transform is given by \[
\widehat{f}(S) = \frac{1}{2^n}\sum_{x \in \bool^n} f(x) (-1)^{x \cdot S}
\]
where $S$ ranges over $\bool^n$. Note that this is well-defined if $f$ takes
values in a field $\Fbb$ or vector space over $\Fbb$ such that $\Zbb \subset
\Fbb$. In particular, we will use it when $f$ takes values in $\Cbb$, in the
space of square matrices over $\Cbb$, and in the space of linear operators on
the space of square matrices over $\Cbb$. In the latter case, to reduce
ambiguity about function arguments, we will write \[
\wh{\Ac}_S = \frac{1}{2^n}\sum_{x \in \bool^n} \Ac_x (-1)^{S \cdot x}
\]
when $\Ac_x$ is a family of such linear operators parametrized by $x \in
\bool^n$. As $\sum_{S \in \bool^n} (-1)^{S \cdot x}$ is $2^n$ when $x = 0^n$
and $0$ otherwise, we can also write \[
f(x) = \sum_{S \in \bool^n} \widehat{f}(S) (-1)^{x \cdot S}
\]
and so the Fourier transform is self-inverse up to a scale factor of $2^{-n}$.

We will use the following hypercontractive inequality from~\cite{BARdW08},
which takes the place of an inequality of~\cite{KKL88} that is typically used
in applications of Fourier analysis to classical streaming.
\begin{theorem}[Theorem 1\ in~\cite{BARdW08}]
\label{thm:mathc}
For every $f : \bool^n \rightarrow \Cbb^{m \times m}$ and $1 \le p
\le 2$, \[ \sum_{S \subseteq \brac{n}} (p - 1)^{\abs{S}} \norm{\wh{f}(S)}_p^2
\le \paren*{\frac{1}{2^n}\sum_{x \in \bool^n} \norm{f(x)}_p^p}^{1/p}
\]
where $\norm{\cdot}_1$ is the Schatten (nuclear) 1-norm.
\end{theorem}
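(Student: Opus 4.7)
The plan is to follow the standard two-step proof strategy for Bonami--Beckner-type hypercontractive inequalities on the Boolean cube, adapted to matrix-valued functions: establish the $n=1$ case (a ``two-point'' inequality for Schatten norms), and then obtain the general statement by induction on $n$ (tensorization).

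For the inductive step, I would split the function on $\bool^n$ by its parity in the last coordinate, writing $f(x_1, \ldots, x_n) = f_0(x_{<n}) + (-1)^{x_n} f_1(x_{<n})$, where $f_0, f_1 : \bool^{n-1} \to \Cbb^{m \times m}$. The Fourier coefficients of $f$ with $n \notin S$ coincide with those of $f_0$, while those with $n \in S$ coincide (indexed by $S \setminus \set{n}$) with those of $f_1$. Applying the induction hypothesis separately to $f_0$ and $f_1$ reduces the left-hand side to a combination of the form $\|f_0\|_{L^p}^{2} + (p-1)\|f_1\|_{L^p}^{2}$, where $\|g\|_{L^p} := \paren*{\frac{1}{2^{n-1}} \sum_y \|g(y)\|_p^p}^{1/p}$ denotes the mixed $L^p$--Schatten-$p$ norm on $\bool^{n-1}$. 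Bounding this combination by the corresponding $\|f\|_{L^p}^{2}$ on $\bool^n$ then follows from the two-point inequality applied pointwise in the remaining coordinates, combined with a Minkowski-type estimate that uses the convexity of the $L^{p/2}$ averaging step.

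The base case, the matrix two-point inequality, asserts that for $A, B \in \Cbb^{m \times m}$ and $1 \le p \le 2$,
\[
\|(A+B)/2\|_p^{2} + (p-1)\,\|(A-B)/2\|_p^{2} \le \paren*{\tfrac{\|A\|_p^p + \|B\|_p^p}{2}}^{2/p}.
\]
This is equivalent to the uniform $2$-convexity of the Schatten $p$-class with optimal constant $\sqrt{p-1}$ for $1 \le p \le 2$, a non-commutative Clarkson-type inequality due to Ball, Carlen, and Lieb. Its proof passes through the dual formulation (uniform $2$-smoothness for Schatten $q$-classes with $q \ge 2$), which is then verified by an explicit spectral-calculus computation on the second derivative of $t \mapsto \|A + tB\|_q^{q}$, exploiting operator-monotonicity of fractional powers.

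The main obstacle is the two-point inequality: the tensorization is essentially bookkeeping once the base case is in hand, but the scalar Bonami--Beckner two-point inequality does not imply its matrix analogue via any elementary reduction, since the Schatten norms of $A \pm B$ cannot be controlled by $\|A\|_p$ and $\|B\|_p$ in the direction required. Capturing the non-commutative geometry of the Schatten classes via the Ball--Carlen--Lieb inequality is the essential ingredient, and is the reason that a proof of Theorem~\ref{thm:mathc} cannot proceed purely by a reduction to the classical KKL/Bonami--Beckner setting.
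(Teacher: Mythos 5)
The paper does not prove this statement; it is imported verbatim as Theorem~1 of the cited reference \cite{BARdW08}, so there is no internal proof to compare against. Your outline correctly reconstructs the argument given in that reference: the inequality is obtained by induction on $n$ (tensorization via the decomposition $f = f_0 + (-1)^{x_n} f_1$, with the induction hypothesis applied to $f_0$ and $f_1$ and the resulting sum recombined using the superadditivity of the $L^{p/2}$ average for $p/2 \le 1$), with the base case being exactly the optimal $2$-uniform convexity inequality for Schatten $p$-classes of Ball, Carlen, and Lieb, which you state correctly and which is indeed the essential non-elementary ingredient. The only caveat is that your proposal treats that two-point inequality as a black box (and the sketch of its proof via duality with $2$-smoothness of $S_q$ is only one of the available routes), but since it is a known published result, invoking it is entirely appropriate, just as the paper itself invokes \cite{BARdW08} wholesale.
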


We will also use Parseval's identity.
\begin{theorem}[Parseval]
\label{thm:parseval}
For any function $f : \bool^n \rightarrow \Rbb$, \[
\frac{1}{2^n}\sum_{x \in \bool^n} f(x)^2 = \sum_{S \subseteq \brac{n}} \wh{f}(S)^2\text{.}
\]
\end{theorem}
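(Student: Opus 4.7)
The plan is to prove Parseval's identity by directly expanding the squared Fourier coefficients and invoking the orthogonality of characters on $\bool^n$, a relation already used in the preliminaries to derive the inverse Fourier formula. This is a standard argument, and the main ``obstacle'' is essentially nonexistent: beyond elementary algebra the only ingredient is character orthogonality, which the paper already invokes in the paragraph preceding the statement.

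Concretely, I would substitute the definition of $\wh{f}(S)$ into each factor on the right-hand side, obtaining
\begin{equation*}
\sum_{S \subseteq \brac{n}} \wh{f}(S)^2 = \frac{1}{4^n} \sum_{S \subseteq \brac{n}} \sum_{x, y \in \bool^n} f(x) f(y) (-1)^{S \cdot x} (-1)^{S \cdot y},
\end{equation*}
then interchange the order of summation so that the sum over $S$ is innermost, using $(-1)^{S \cdot x}(-1)^{S \cdot y} = (-1)^{S \cdot (x \oplus y)}$, where $\oplus$ denotes bitwise XOR, to arrive at
\begin{equation*}
\frac{1}{4^n} \sum_{x, y \in \bool^n} f(x) f(y) \sum_{S \subseteq \brac{n}} (-1)^{S \cdot (x \oplus y)}.
\end{equation*}

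Finally, I would apply the orthogonality fact noted in the preliminaries: $\sum_{S \subseteq \brac{n}} (-1)^{S \cdot z}$ equals $2^n$ when $z = 0^n$ and $0$ otherwise. Setting $z = x \oplus y$, this forces $x = y$, so the inner sum collapses the double sum to its diagonal and contributes a factor of $2^n$, leaving
\begin{equation*}
\frac{2^n}{4^n} \sum_{x \in \bool^n} f(x)^2 = \frac{1}{2^n} \sum_{x \in \bool^n} f(x)^2,
\end{equation*}
which is the left-hand side. Since $f$ is real-valued there are no conjugation subtleties to worry about, so the derivation terminates here without any further work.
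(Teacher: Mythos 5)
Your proof is correct, and it is the standard character-orthogonality argument: expand $\wh{f}(S)^2$, swap sums, and collapse the double sum to its diagonal using the fact (already noted in the paper's preliminaries) that $\sum_{S}(-1)^{S\cdot z}$ is $2^n$ for $z=0^n$ and $0$ otherwise. The paper itself states Parseval without proof as a standard result, so there is no alternative argument to compare against; your derivation fills that gap correctly and with exactly the ingredients the paper already has on hand.
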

\paragraph{Useful Lemmas} The convolution lemmas and applications of
hypercontractivity in this section will be extensions of standard results in
Fourier analysis, but we will need slightly stronger or more general versions,
so we prove them here.

We will need an extension of the convolution theorem for Fourier transforms to
the matrix-valued case.
\begin{lemma}
\label{lm:matconv}
Let $f$, $g$ be functions on $\bool^n$ such that the range of $f$ is contained
in either $\Cbb$ or $\Cbb^{a \times b}$, and the range of $g$ is contained in
either $\Cbb$ or $\Cbb^{b \times c}$, where $a, b, c \in \Nbb$. Then \[
\wh{fg}(S) = \sum_{T \in \bool^n} \wh{f}(T)\wh{g}(T \oplus S)
\]
for all $S \in \bool^n$.
\end{lemma}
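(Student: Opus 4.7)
The plan is to follow the standard scalar convolution-theorem argument, taking care that matrix multiplication is noncommutative but that the character values $(-1)^{x \cdot T}$ are scalars which commute freely with everything. The proof amounts to expanding the right-hand side, swapping sums, and invoking orthogonality of the Fourier characters on $\bool^n$.

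Concretely, I would start from
\[
\sum_{T \in \bool^n} \wh{f}(T)\wh{g}(T \oplus S) = \frac{1}{4^n}\sum_{T \in \bool^n}\sum_{x,y \in \bool^n} (-1)^{x \cdot T}(-1)^{y \cdot (T \oplus S)}\, f(x)\, g(y).
\]
Because the signs are scalars, they commute past the matrices, so I can pull $f(x) g(y)$ out of the $T$-sum (in this order, keeping $f$ on the left and $g$ on the right to respect the dimensions $a \times b$ and $b \times c$). Using $(-1)^{y \cdot (T \oplus S)} = (-1)^{y \cdot T}(-1)^{y \cdot S}$ and combining the exponents gives
\[
\frac{1}{4^n}\sum_{x,y \in \bool^n} (-1)^{y \cdot S}\, f(x)\, g(y) \sum_{T \in \bool^n} (-1)^{(x \oplus y) \cdot T}.
\]

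The next step is the orthogonality identity $\sum_{T \in \bool^n} (-1)^{(x \oplus y) \cdot T} = 2^n \mathbf{1}[x = y]$, which is already used implicitly in the Fourier inversion formula stated earlier in the excerpt. Substituting collapses the $y$-sum onto $y = x$ and yields
\[
\frac{1}{2^n}\sum_{x \in \bool^n} (-1)^{x \cdot S}\, f(x)\, g(x) = \wh{fg}(S),
\]
as desired. The scalar-valued cases where $f$ or $g$ takes values in $\Cbb$ are handled by the same calculation, since any scalar commutes with the matrix in the other factor and the product $f(x)g(x)$ is unambiguous.

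There is essentially no serious obstacle here: the only thing to be careful about is the noncommutativity of matrix multiplication, which is harmless because the objects being moved past each other are scalars. I would also briefly note at the start that the hypothesis on $a$, $b$, $c$ exists exactly so that the pointwise product $f(x)g(x)$ is well defined as a map $\bool^n \to \Cbb^{a \times c}$, so that $\wh{fg}(S)$ on the left-hand side makes sense.
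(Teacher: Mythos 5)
Your proof is correct and is essentially the paper's argument run in the opposite direction: you expand both $\wh{f}$ and $\wh{g}$ and collapse via the orthogonality relation $\sum_{T}(-1)^{(x\oplus y)\cdot T}=2^n\mathbf{1}[x=y]$, whereas the paper starts from $\wh{fg}(S)$, substitutes the Fourier inversion $f(x)=\sum_T\wh{f}(T)(-1)^{T\cdot x}$, and recognizes $\wh{g}(T\oplus S)$ in what remains. Both rest on the same underlying identity, and your observation that the scalar signs commute freely so that the $f$-on-the-left, $g$-on-the-right ordering is preserved is exactly the point that makes the matrix case go through.
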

\begin{proof}
\begin{align*}
\wh{fg}(S) &= \E[x]{f(x)g(x)(-1)^{S \cdot x}}\\
&= \E[x]{\sum_{T \in \bool^n}\wh{f}(T)(-1)^{T \cdot x}g(x)(-1)^{S \cdot x}}\\
&= \sum_{T \in \bool^n}\wh{f}(T) \E[x]{g(x)(-1)^{(T \oplus S) \cdot x}}\\
&= \sum_{T \in \bool^n}\wh{f}(T)\wh{g}(T \oplus S)
\end{align*}
\end{proof}
We will use Theorem~\ref{thm:mathc} to derive two lemmas that bound the Fourier
mass of functions that output density matrices (or, more generally, matrices
with bounded trace norm), one for lower-degree coefficients and one for all
coefficients. Both will be based on the following corollary of
Theorem~\ref{thm:mathc}. Our proof is along the lines of that of Lemma~10\
in~\cite{AD21}, but in a slightly different setting.
\begin{lemma}
\label{lm:matkklcor}
For any $f : \bool^n \rightarrow \Cbb^{2^\qubits \times 2^\qubits}$ such that
$\norm{f(x)}_1 \le 1$ for all $x$, for any $0 \le \delta \le 1$, \[
\sum_{S \subseteq \brac{n}} \delta^{\abs{S}} \norm{\wh{f}(S)}_1^2 \le
2^{2\delta \qubits}\text{.}
\]
\end{lemma}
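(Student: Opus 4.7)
The plan is to apply Theorem~\ref{thm:mathc} with the Schatten $p$-norm for the choice $p = 1 + \delta$, so that the hypercontractive weighting $(p-1)^{\abs{S}}$ exactly matches the $\delta^{\abs{S}}$ on the left-hand side of the claim. This application yields \[
\sum_{S \subseteq \brac{n}} \delta^{\abs{S}} \norm{\wh{f}(S)}_{1+\delta}^2 \le \paren*{\frac{1}{2^n}\sum_{x \in \bool^n} \norm{f(x)}_{1+\delta}^{1+\delta}}^{1/(1+\delta)},
\]
and the remaining work is to pass between the Schatten $(1+\delta)$-norm and the Schatten $1$-norm on both sides.

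For the right-hand side, Schatten norms are monotonically decreasing in $p$, so $\norm{f(x)}_{1+\delta} \le \norm{f(x)}_1 \le 1$ by hypothesis, and the entire RHS is bounded by $1$. For the Fourier coefficients on the left-hand side, I go the other way: since $\wh{f}(S) \in \Cbb^{2^\qubits \times 2^\qubits}$ has at most $2^\qubits$ nonzero singular values, applying H\"older to the singular-value vector gives the dimension-dependent bound \[
\norm{\wh{f}(S)}_1 \le \paren*{2^\qubits}^{1 - 1/(1+\delta)} \norm{\wh{f}(S)}_{1+\delta} = 2^{\qubits\,\delta/(1+\delta)}\,\norm{\wh{f}(S)}_{1+\delta}.
\]

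Squaring this, multiplying by $\delta^{\abs{S}}$, summing over $S$, and substituting the hypercontractive inequality above then gives $\sum_{S} \delta^{\abs{S}}\norm{\wh{f}(S)}_1^2 \le 2^{2\qubits \delta/(1+\delta)}$. The stated bound $2^{2\delta \qubits}$ follows from the elementary estimate $\delta/(1+\delta) \le \delta$ for $\delta \ge 0$, so the constant in the conclusion is only very mildly loose.

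I do not expect any real obstacle here: the whole proof is a direct corollary of Theorem~\ref{thm:mathc} with a judicious choice of $p$. The one point to keep straight is that the two Schatten-norm comparisons go in opposite directions — monotonicity $\norm{\cdot}_{1+\delta} \le \norm{\cdot}_1$ is used on the inputs $f(x)$, while the dimension-dependent H\"older bound $\norm{\cdot}_1 \le 2^{\qubits(1 - 1/(1+\delta))}\norm{\cdot}_{1+\delta}$ is used on the Fourier side, where we are allowed to pay the factor of $2^\qubits$ precisely because this is what turns into the exponential factor in the target inequality.
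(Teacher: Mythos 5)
Your proof is correct and follows essentially the same route as the paper: apply Theorem~\ref{thm:mathc} with $p = 1+\delta$, bound the right-hand side by $1$ via monotonicity of Schatten norms, and convert $\norm{\wh{f}(S)}_{1+\delta}$ back to $\norm{\wh{f}(S)}_1$ at a cost of $2^{\qubits\delta/(1+\delta)} \le 2^{\delta\qubits}$ per coefficient. The paper derives that last norm comparison via Jensen's inequality on a uniformly random singular value rather than H\"older on the singular-value vector, but these are the same dimension-dependent inequality.
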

\begin{proof}
Setting $p = 1 + \delta$, we obtain
\begin{align*}
\sum_{S \subseteq \brac{n}} \delta^{\abs{S}} \norm{\wh{f}(S)}_{1 + \delta}^2 &\le
\paren*{\frac{1}{2^n} \sum_{x \in \bool^n} \norm{f(x)}_{1 + \delta}^{1 + \delta}}^{\frac{1}{1 + \delta}}\\
&\le \paren*{\frac{1}{2^n} \sum_{x \in \bool^n} \norm{f(x)}_{1}^{1 +
\delta}}^{\frac{1}{1 + \delta}}\\
& = 1
\end{align*}
Next, note that for any $q$, $2^{-\qubits}\norm{\wh{f}(S)}_q^q$ can be viewed as
$\E{\abs{X}^q}$, where $X$ is uniformly chosen from the singular values of
$\wh{f}(S)$. So by Jensen's inequality, we obtain
\begin{align*}
\norm{\wh{f}(S)}_{1 + \delta} &= \paren*{2^\qubits \E{\abs{X}^{1+\delta}}}^{\frac{1}{1 +
\delta}}\\
&\ge 2^{\frac{\qubits}{1 + \delta}} \E{\abs{X}}\\
&= 2^{\frac{\qubits}{1 + \delta} - \qubits} \norm{\wh{f}(S)}_1\\
&\ge 2^{-\delta \qubits}\norm{\wh{f}(S)}_1
\end{align*}
and so the corollary follows.
\end{proof}
We can now bound lower order terms of the Fourier expansion with a careful
choice of $\delta$.
\begin{lemma}
\label{lm:lodegbound}
For any $f : \bool^n \rightarrow \Cbb^{2^\qubits \times 2^\qubits}$ such that
$\norm{f(x)}_1 \le 1$ for all $x$, and for any $k \in \brac{\qubits}$,
\begin{align*}
\sum_{\substack{S \subseteq \brac{n}:\\ \abs{S} = k}} \norm{\wh{f}(S)}_1^2 &\le
\paren*{\frac{\bO{\qubits}}{k}}^k\\
\sum_{\substack{S \subseteq \brac{n}:\\ \abs{S} = k}} \norm{\wh{f}(S)}_1 &\le
\paren*{\frac{\bO{\sqrt{\qubits n}}}{k}}^k
\end{align*}
\end{lemma}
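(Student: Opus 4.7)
The plan is to derive both bounds as consequences of the hypercontractive inequality just established (Lemma~\ref{lm:matkklcor}), with the first bound coming from a careful choice of the free parameter $\delta$ and the second bound then following from Cauchy--Schwarz combined with a standard binomial estimate.

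\textbf{First bound.} Fix $k \in \brac{\qubits}$. Restricting the sum on the left-hand side of Lemma~\ref{lm:matkklcor} to sets $S$ of size exactly $k$ gives
\[
\sum_{\abs{S}=k} \norm{\wh{f}(S)}_1^2 \le \delta^{-k}\, 2^{2\delta \qubits}
\]
for every $\delta \in (0,1]$. I would minimize the right-hand side in $\delta$ by taking the logarithm, $-k\ln \delta + 2\delta\qubits \ln 2$, and differentiating, which gives the optimum at $\delta = k/(2 \qubits \ln 2)$. Since $k \le \qubits$ and $2\ln 2 > 1$, this $\delta$ lies in $(0,1]$ as required. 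Substituting yields $\delta^{-k} = (2\qubits \ln 2 / k)^k$ and $2^{2\delta \qubits} = 2^{k/\ln 2} = e^k$, so the product is $(2e\ln 2)^k (\qubits/k)^k = (O(\qubits)/k)^k$ as claimed.

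\textbf{Second bound.} The number of size-$k$ subsets of $\brac{n}$ satisfies $\binom{n}{k} \le (en/k)^k$, so by Cauchy--Schwarz applied to the $\binom{n}{k}$ summands,
\[
\sum_{\abs{S}=k} \norm{\wh{f}(S)}_1 \le \sqrt{\binom{n}{k}} \cdot \sqrt{\sum_{\abs{S}=k}\norm{\wh{f}(S)}_1^2} \le \paren*{\frac{en}{k}}^{k/2} \cdot \paren*{\frac{O(\qubits)}{k}}^{k/2} = \paren*{\frac{O(\sqrt{\qubits n})}{k}}^k,
\]
using the first bound in the last inequality.

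The step I expect to require the most care is the optimization in $\delta$: one has to check that the optimal $\delta$ lies in the admissible range $(0,1]$ for every $k \in \brac{\qubits}$, and that the resulting constant hidden in the $O(\cdot)$ is an absolute constant independent of $k, n, \qubits$. Everything else is routine, since Lemma~\ref{lm:matkklcor} already packages up the hypercontractive input, and the Cauchy--Schwarz reduction from $\ell^2$ mass to $\ell^1$ mass is a standard device.
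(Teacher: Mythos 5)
Your proof is correct and follows essentially the same route as the paper: restrict the hypercontractive bound of Lemma~\ref{lm:matkklcor} to the level-$k$ coefficients with a choice of $\delta$ of order $k/\qubits$ (the paper simply takes $\delta = k/\qubits$ rather than optimizing, which costs only a factor $2^{2k}$ absorbed into the $\bO{\cdot}$), and then pass from the $\ell^2$ to the $\ell^1$ bound via Cauchy--Schwarz and $\binom{n}{k} \le (en/k)^k$.
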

\begin{proof}
Set $\delta = k/\qubits$ in Lemma~\ref{lm:matkklcor}. Then we get
\begin{align*}
\sum_{\substack{S \subseteq \brac{n}:\\ \abs{S} = k}} \norm{\wh{f}(S)}_1^2 &\le
\paren*{\frac{\qubits}{k}}^k\sum_{S \subseteq \brac{n}}
\paren*{\frac{k}{\qubits}}^{\abs{S}} \norm{\wh{f}(S)}_1^2\\
&\le  \paren*{\frac{\qubits}{k}}^k 2^{2k}\\
&= \paren*{\frac{\bO{\qubits}}{k}}^k\text{.}
\end{align*}
The second line then follows from applying Cauchy-Schwarz, as for any
non-negative-valued function $g$ on subsets of $\brac{n}$, \[
\sum_{S \subseteq \brac{n}} g(S) \le \sqrt{\sum_{S \subseteq \brac{n}} g(S)} \cdot
\sqrt{\sum_{S \subseteq \brac{n}} 1} \le  \sqrt{\sum_{S \subseteq \brac{n}} g(S)}
\cdot \sqrt{\binom{n}{k}}
\]
\end{proof}
We can also obtain a general bound with a less careful choice of $\delta$.
\begin{lemma}
\label{lm:hidegbound}
For any $f : \bool^n \rightarrow \Cbb^{2^\qubits \times 2^\qubits}$ such that
$\norm{f(x)}_1 \le 1$ for all $x$,
\begin{align*}
\sum_{S \subseteq \brac{n}} \norm{\wh{f}(S)}_1^2 &\le 2^{2\qubits}\\
\sum_{S \subseteq \brac{n}} \norm{\wh{f}(S)}_1 &\le \paren*{\frac{O(n)}{k}}^{k/2}
\end{align*}
\end{lemma}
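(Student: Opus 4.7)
My plan is to derive both inequalities from Lemma~\ref{lm:matkklcor} with only a crude choice of the parameter $\delta$, in contrast to Lemma~\ref{lm:lodegbound} where $\delta$ was tuned carefully to the target level $k$.

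For the first inequality, I simply apply Lemma~\ref{lm:matkklcor} with $\delta = 1$. The weighting $\delta^{|S|}$ collapses, giving $\sum_{S \subseteq [n]} \|\wh{f}(S)\|_1^2 \le 2^{2\qubits}$ as stated. This is a one-line derivation, and the only reason $2^{2\qubits}$ appears (rather than the tighter $2^\qubits$ one gets from Parseval and $\|A\|_1^2 \le 2^\qubits \|A\|_2^2$) is that Lemma~\ref{lm:matkklcor} introduces an extra factor of $2^{\delta\qubits}$ when converting Schatten $(1+\delta)$-norms to $1$-norms.

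For the second inequality, I treat it as the ``high-weight'' analogue of Lemma~\ref{lm:lodegbound} at level $k$, and reduce it to the first inequality via Cauchy--Schwarz on the $\binom{n}{k}$ sets of size $k$:
\[
\sum_{\substack{S\subseteq[n] \\ |S|=k}} \|\wh{f}(S)\|_1 \;\le\; \sqrt{\binom{n}{k}} \cdot \sqrt{\sum_{\substack{S\subseteq[n] \\ |S|=k}} \|\wh{f}(S)\|_1^2} \;\le\; \sqrt{\binom{n}{k}}\cdot 2^{\qubits}.
\]
Using the standard bound $\binom{n}{k} \le (en/k)^k$ then gives $(en/k)^{k/2}\cdot 2^{\qubits}$. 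In the regime where this lemma is meant to apply (namely $k \gtrsim \qubits$, where the finer analysis of Lemma~\ref{lm:lodegbound} no longer yields the right shape), the factor $2^{\qubits}$ can be absorbed into the base of the $(O(n)/k)^{k/2}$ term by inflating the hidden constant in $O(\cdot)$, since $2^{\qubits} \le 2^{k} \le (Cn/k)^{k/2}$ whenever $Cn/k \ge 4$.

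The main subtlety I anticipate is keeping the two regimes $k \le \qubits$ and $k \ge \qubits$ straight, so that Lemmas~\ref{lm:lodegbound} and \ref{lm:hidegbound} together cover every level: when $k \le \qubits$, the tuned $\delta = k/\qubits$ in Lemma~\ref{lm:lodegbound} gives the sharper $(O(\qubits)/k)^{k/2}$-type bound, while when $k \ge \qubits$ the $\delta = 1$ choice used here is essentially optimal and the Cauchy--Schwarz step above is what yields the $(O(n)/k)^{k/2}$ shape. Beyond this bookkeeping, no new ideas are needed: the first inequality is immediate, and the second is just Cauchy--Schwarz plus the first inequality plus the estimate on $\binom{n}{k}$.
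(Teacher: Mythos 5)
Your proposal matches the paper's proof exactly: the paper also sets $\delta = 1$ in Lemma~\ref{lm:matkklcor} for the first inequality and obtains the second by Cauchy--Schwarz over the $\binom{n}{k}$ sets of size $k$, as in the proof of Lemma~\ref{lm:lodegbound}. Your additional remark about absorbing the leftover $2^{\qubits}$ factor into the $(O(n)/k)^{k/2}$ base when $k \gtrsim \qubits$ is a point the paper leaves implicit (the lemma is indeed only invoked in that regime), so your write-up is if anything slightly more careful.
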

\begin{proof}
Set $\delta = 1$ in Lemma~\ref{lm:matkklcor}. The second line then follows from
Cauchy-Schwarz as in the previous proof.
\end{proof}
For scalar-valued functions, a stronger version of the above (for the 2-norm)
is given by Parseval's identity\footnote{This can be proved for matrix-valued
functions too, but we will not need it.} (Theorem~\ref{thm:parseval}).  This
will allow us to characterize the Fourier coefficients of a function that
checks whether its input satisfies a collection of linear constraints.
\begin{lemma}
\label{lm:linearcon}
Let $M \in \bool^{k \times n}$ be a matrix over $\Zbb_2$. Let $y \in \bool^k$,
and define $q : \bool^n \rightarrow \bool$ by \[
q(x) = \begin{cases}
1 & \mbox{if $Mx = y$}\\
0 & \mbox{otherwise.}
\end{cases}
\]
Then for every $s \in \bool^k$, \[
\widehat{q}(M\tr s) = \frac{\abs{q^{-1}(1)}}{2^n}(-1)^{s \cdot y}
\]
and every other Fourier coefficient of $q$ is $0$.
\end{lemma}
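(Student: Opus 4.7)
The plan is to compute $\widehat{q}(T)$ directly for all $T \in \bool^n$ and split into the cases where $T$ lies in the row span of $M$ and where it does not.

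First, I would substitute the definition of the Fourier transform and of $q$ to get
$$\widehat{q}(T) = \frac{1}{2^n}\sum_{x : Mx = y} (-1)^{T \cdot x}.$$
For $T = M\tr s$, the exponent becomes $(M\tr s)\cdot x = s\cdot(Mx) = s\cdot y$ for every $x$ in the support, so the sum collapses to $|q^{-1}(1)|(-1)^{s\cdot y}/2^n$, giving the claimed expression. This is the easy half.

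The harder half is showing that every other Fourier coefficient vanishes. Here I would exploit the affine structure of $q^{-1}(1)$: if $q^{-1}(1)$ is empty the claim is vacuous, and otherwise pick any $x_0$ with $Mx_0 = y$ so that $q^{-1}(1) = x_0 + \ker M$. Then
$$\widehat{q}(T) = \frac{(-1)^{T\cdot x_0}}{2^n}\sum_{z \in \ker M} (-1)^{T\cdot z}.$$
The inner sum over the $\Zbb_2$-subspace $\ker M$ is a character sum, so it equals $|\ker M|$ when the character $z \mapsto (-1)^{T\cdot z}$ is trivial on $\ker M$, and $0$ otherwise. Triviality on $\ker M$ is exactly the condition that $T$ lies in the row span of $M$, i.e.\ $T = M\tr s$ for some $s$; this is the standard duality between kernel and row space over $\Zbb_2$, which I would verify by a dimension argument (both subspaces have dimension $n - \dim \ker M$, and one is clearly contained in the annihilator of the other).

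The main obstacle, and the one subtlety worth double-checking, is consistency of the formula $\widehat{q}(M\tr s) = \tfrac{|q^{-1}(1)|}{2^n}(-1)^{s\cdot y}$ when $M$ does not have full row rank, since then multiple $s$ can yield the same $M\tr s$. I would resolve this by noting that whenever $q^{-1}(1)$ is nonempty we have $y = Mx_0$, so for any $s$ with $M\tr s = 0$ the phase satisfies $s\cdot y = s\cdot Mx_0 = (M\tr s)\cdot x_0 = 0$, making $(-1)^{s\cdot y}$ depend only on the coset of $s$ modulo $\ker(M\tr)$ and hence only on $M\tr s$. With this well-definedness check in place, combining the two cases gives the lemma.
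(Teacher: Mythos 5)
Your proposal is correct. The first half (direct computation of $\widehat{q}(M\tr s)$ using $(M\tr s)\cdot x = s\cdot(Mx) = s\cdot y$ on the support of $q$) is identical to the paper's. For the second half—showing all other coefficients vanish—you take a genuinely different route. The paper argues indirectly via Parseval's identity (Theorem~\ref{thm:parseval}): it counts the number of distinct coefficients of the form $M\tr s$ as $2^r$ (with $r$ the row rank), notes by rank-nullity that $\abs{q^{-1}(1)} = 2^{n-r}$, and checks that the squared mass on these coefficients alone already equals $\frac{1}{2^n}\sum_x q(x)^2$, so nothing is left over for the remaining coefficients. You instead compute $\widehat{q}(T)$ directly for arbitrary $T$ by writing $q^{-1}(1) = x_0 + \ker M$ and evaluating the character sum $\sum_{z \in \ker M}(-1)^{T\cdot z}$, which vanishes unless $T$ annihilates $\ker M$, i.e.\ unless $T$ is in the row span of $M$. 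Both arguments are sound; yours is more self-contained (it does not invoke Parseval and directly exhibits why each off-row-space coefficient is zero), while the paper's reuses a tool it has already set up and avoids the kernel/row-space duality verification. Your explicit well-definedness check for non-full-row-rank $M$ is a nice touch that the paper handles only implicitly through its count of distinct coefficients.
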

\begin{proof}
We first prove the part of this theorem pertaining to coefficients of the form $M\tr s$. We have 
\begin{align*}
\widehat{q}(M\tr s) &= \frac{1}{2^n}\sum_{x \in \bool^n} q(x) (-1)^{x\cdot (M\tr s)}\\
&= \frac{1}{2^n}\sum_{x \in \bool^n} q(x) (-1)^{x\tr M\tr s}\\
&= \frac{1}{2^n}\sum_{x \in \bool^n} q(x) (-1)^{y\tr s}\\
&= \frac{\abs{q^{-1}(1)}}{2^n}(-1)^{s \cdot y}
\end{align*}
where the third line makes use of the fact that $q(x) = 1$ iff $Mx = y$. We
will now use Parseval's identity (Theorem~\ref{thm:parseval}) to show that every
other coefficient must be 0. Now, if $Mx = y$ has no solutions $x$, every
coefficient will be $0$ and the theorem will hold trivially. 

So suppose it has at least one solution. Then by the Rank-Nullity theorem and
the fact we are working over $\Zbb_2$, the number of solutions
$\abs{q^{-1}(1)}$ is $2^{n-r}$, where $r$ is the dimension of the row space of
$M$. Furthermore, $2^r$ is the number of distinct coefficients $M^ts$. So
\begin{align*}
\sum_{\substack{z \in \bool^n :\\ \exists s \in \bool^k, T = M\tr s}}\wh{q}(z)^2
&= 2^r \frac{\abs{q^{-1}(1)}^2}{2^{2n}}\\
&= \frac{\abs{q^{-1}(1)}}{2^{n}}\\
&= \frac{1}{2^n}\sum_{x \in \bool^n}q(x)^2
\end{align*}
which by Parseval is the sum of the square of \emph{every} Fourier coefficient
of $q$, so as $q$ is real-valued its other Fourier coefficients must all be
$0$.
\end{proof}

As well as being used directly, this will also allow us to restrict the set of
non-zero Fourier coefficients for any function that only depends on a linear
function of its inputs.
\begin{lemma}
\label{lm:zerocoeffs}
Let $f$ be a real or complex (matrix or scalar)-valued function on $\bool^n$,
and let $M$ be a matrix over $\Zbb_2$ and $g$ another function such that $f(x)
= g(Mx)$ for all $x \in \bool^n$. Then for every $S \in \bool^n$, \[
\wh{f}(S) = 0
\]
unless $S$ is $M\tr s$ for some $s \in \bool^k$.
\end{lemma}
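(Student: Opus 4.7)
The plan is to reduce Lemma~\ref{lm:zerocoeffs} directly to Lemma~\ref{lm:linearcon} by decomposing $f$ as a linear combination of the indicator functions handled by that earlier lemma. For each $y \in \bool^k$, let $q_y : \bool^n \rightarrow \bool$ be defined by $q_y(x) = 1$ iff $Mx = y$ and $0$ otherwise. Partitioning $\bool^n$ by the value of $Mx$, and using the hypothesis $f(x) = g(Mx)$, gives the pointwise identity
\[
f(x) = \sum_{y \in \bool^k} g(y)\, q_y(x),
\]
which makes sense whether $g$ is scalar- or matrix-valued (each term is a matrix scaled by a $\bool$-valued coefficient). By linearity of the Fourier transform, $\wh{f}(S) = \sum_{y} g(y)\, \wh{q_y}(S)$ for every $S \in \bool^n$.

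Now I would apply Lemma~\ref{lm:linearcon}, which asserts that each $\wh{q_y}(S)$ vanishes unless $S = M\tr s$ for some $s \in \bool^k$. Consequently the entire sum $\sum_y g(y)\, \wh{q_y}(S)$ is zero whenever $S$ lies outside the row space of $M$, which is precisely the statement of the lemma.

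There is no real obstacle here; the result is a one-line corollary of the preceding lemma once the decomposition above is written down. (As a sanity check, a direct proof is equally easy: if $S$ is not of the form $M\tr s$, then by $\Fbb_2$-orthogonality there exists $v \in \ker M$ with $S \cdot v = 1$, and reindexing the defining sum by $x \mapsto x + v$ leaves $f(x) = g(Mx)$ invariant while multiplying $(-1)^{S \cdot x}$ by $-1$, forcing $\wh{f}(S) = -\wh{f}(S) = 0$.) I would present the first route since it reuses the machinery the paper has just introduced.
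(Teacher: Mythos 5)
Your proposal is correct and matches the paper's own proof essentially verbatim: both decompose $f$ as $\sum_{y} q_y(x)\, g(y)$ and invoke Lemma~\ref{lm:linearcon} to kill every coefficient outside the image of $M\tr$. The alternative shift-by-a-kernel-vector argument you mention in passing is also valid, but the main route is the same as the paper's.
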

\begin{proof}
We can write \[
f(x) = \sum_{y \in \bool^k} q_y(x) g(y)
\]
where $q_y$ is the indicator function of $Mx = y$, as in
Lemma~\ref{lm:linearcon}. So then, if $S$ is not $M\tr s$ for some $s\in
\bool^k$, $\wh{q}_y(S) = 0$ for all $y$ by Lemma~\ref{lm:linearcon} and so
$\wh{f}(S) = 0$.
\end{proof}

Finally, we will need a generalization of the convolution theorem to arbitrary
linear operators. For a family of linear operators $(\Ac_x)_{x \in \bool^n}$ we
will write, by analogy to the Fourier transform, \[
\wh{\Ac}_s = \frac{1}{2^n}\sum_{x \in \bool^n} (-1)^{x \cdot s} \Ac_x \text{.}
\]
\begin{lemma}
\label{lm:lopconv}
Let $(\Ac_x)_{x \in \bool^n} : \Lc_1 \rightarrow \Lc_2$ be a family of linear
operators, $f : \bool^n \rightarrow \Lc_1$ any function, and $g : \bool^n
\rightarrow \Lc_2$ be given by \[
g(x) = \Ac_xf(x)\text{.}
\]
Then \[
\wh{g}(S) = \sum_{T \in \bool^n} \wh{\Ac}_T \wh{f}(S \oplus T)
\]
for all $S \in \bool^n$.
\end{lemma}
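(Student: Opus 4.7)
The plan is to mimic the proof of Lemma~\ref{lm:matconv} essentially verbatim, using only (i) the definition of the Fourier transform, (ii) the Fourier inversion formula, and (iii) the linearity of each $\Ac_x$ as a map $\Lc_1 \to \Lc_2$. The only conceptual hurdle is to be careful that the manipulations make sense when the objects being summed live in different spaces (operators on one side, elements of $\Lc_1$ on the other), but this is just bookkeeping.

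First, start from the definition $\wh{g}(S) = \E[x]{\Ac_x f(x)(-1)^{S \cdot x}}$. Next, apply the Fourier inversion formula to $f$, writing $f(x) = \sum_{U \in \bool^n} \wh{f}(U)(-1)^{U \cdot x}$, which is valid because $f$ takes values in $\Lc_1$ (a complex vector space) so the scalar Fourier inverse applies componentwise. Substituting gives
\begin{equation*}
\wh{g}(S) = \E[x]{\Ac_x \sum_{U \in \bool^n} \wh{f}(U) (-1)^{U \cdot x} (-1)^{S \cdot x}}\text{.}
\end{equation*}

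The second step uses linearity of $\Ac_x$ (for each fixed $x$) to pull the sum over $U$ outside and combine the sign factors. Since each $\wh{f}(U) \in \Lc_1$ is a fixed element independent of $x$, and each $(-1)^{(U \oplus S) \cdot x}$ is just a scalar, we may write
\begin{equation*}
\wh{g}(S) = \sum_{U \in \bool^n} \E[x]{(-1)^{(U \oplus S) \cdot x} \Ac_x}\, \wh{f}(U) = \sum_{U \in \bool^n} \wh{\Ac}_{U \oplus S}\, \wh{f}(U)
\end{equation*}
where the last equality is just the definition of $\wh{\Ac}_{U \oplus S}$ (noting that the coefficient $(-1)^{(U \oplus S) \cdot x}$ commutes with $\Ac_x$ because it is a scalar, and the resulting linear combination of operators is itself a linear operator that may then be evaluated on $\wh{f}(U)$).

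Finally, reindex by setting $T = U \oplus S$, so $U = T \oplus S = S \oplus T$, to obtain $\wh{g}(S) = \sum_{T \in \bool^n} \wh{\Ac}_T \wh{f}(S \oplus T)$, which is the claimed identity. The ``main obstacle'' here is really just making sure the linearity argument in the middle step is stated cleanly; everything else is formal algebra identical to the scalar/matrix convolution lemma.
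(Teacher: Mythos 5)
Your proof is correct and follows essentially the same route as the paper's: both are the standard convolution-theorem computation, relying only on linearity of $\Ac_x$ and basic Fourier identities. The paper starts from the right-hand side, expanding both $\wh{\Ac}_T$ and $\wh{f}(S \oplus T)$ and collapsing the $T$-sum via the orthogonality relation $\sum_T (-1)^{z\cdot T} = 2^n\,\Ibb(z = 0)$, whereas you start from the left-hand side and apply Fourier inversion to $f$ before recognizing the remaining $x$-average as $\wh{\Ac}_{U \oplus S}$; these are the same calculation read in opposite directions.
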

\begin{proof}
\begin{align*}
 \sum_{T \in \bool^n} \wh{\Ac}_T \wh{f}(S \oplus T) &=
\frac{1}{2^{2n}} \sum_{T,x,y \in \bool^n}(-1)^{x\cdot T + y \cdot (S \oplus
T)}\Ac_xf(y)\\
&= \frac{1}{2^{2n}} \sum_{T,x,y \in \bool^n}(-1)^{(x \oplus y) \cdot T + y
\cdot S }\Ac_xf(y)\\
&= \frac{1}{2^{n}} \sum_{x,y \in \bool^n}(-1)^{y
\cdot S }\Ac_xf(y) \Ibb(x = y)\\
&= \frac{1}{2^{n}} \sum_{x \in \bool^n}(-1)^{x \cdot S}\Ac_xf(x)\\
&= \wh{g}(S)
\end{align*}
\end{proof}

\section{Communication Problem}
We will use the Distributional Implicit Hidden Partition (\dihp) problem
of~\cite{KK19}. 

In a \dihp$(n, \alpha, \players)$ instance, for $n, \players \in \Nbb$ and
$\alpha \in (0, 1/2)$ such that $\alpha n \in \Nbb$, $\players$ players indexed
by $t \in \brac{\players}$ each receive the incidence matrix of a matching $M_t
\in \bool^{\alpha n \times n}$ along with bit labels $\elabels_t \in
\bool^{\alpha n}$.

The bit labels are private to their respective players, while each player $t$
knows $M_s$ for all $s \le t$. The players will communicate sequentially, from
player $i$ to player $i+1$, and the objective is for player $T$ to determine
which of the following two distributions the inputs were drawn from:
\begin{description}
\item[\yes] Draw $\vlabels \sim \Uni{\bool^n}$. For each $t \in \brac{T}$, set
$\elabels_t = M_t\vlabels$.
\item[\no] For each $t \in \brac{T}$, draw $w_t \sim \Uni{\bool^{\alpha n}}$.
\end{description}

\section{Reducing Quantum and Classical Max-Cut to \dihp{}}
In this section we will prove that any \mcut\ or QMC algorithm gives a \dihp\
protocol. 
\begin{restatable}{theorem}{reduction}
\label{thm:reduction}
There exists a constant $C > 0$ such that, for any $\varepsilon,\delta \in (0,
1\rbrack$ and $n$ a multiple of $\ceil{C/\eps} \cdot \ceil{C^3/\eps^2}$, the
following holds: Suppose there is a streaming algorithm that returns a $(2 -
\varepsilon)$-approximation to the \mcut\ size of an $n$-vertex graph, or a $(2
- \eps)$-approximation to its QMC value, in $S$ space with probability $1 -
\delta$. Then \dihp$(n, 1/\ceil{C/\eps}, \ceil{C^3/\eps^2})$ has a protocol
using $S + 2\log n$ space that succeeds with probability $1 - \delta -
2^{-n}$.
\end{restatable}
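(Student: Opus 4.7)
The plan is to have each player $t$ simulate the streaming algorithm $\alg$ on the edges $e \in M_t$ with label $\elabels_t(e) = 1$, forwarding both $\alg$'s $S$-qubit state and an $O(\log n)$-bit running edge counter to player $t+1$. After processing, player $T$ reads out $\alg$'s approximation and compares it against an appropriate threshold---slightly below $m$ for \mcut, slightly below $m/2$ for \qmcut---where $m$ is the final edge count. To keep $G = \bigcup_t G_t$ simple, each player uses their free knowledge of $M_s$ for $s < t$ to skip any edge already present in an earlier matching; since this decision depends only on the matchings and not on the labels, it preserves the conditional distribution of $G$ given $(M_s)_{s \le t}$.

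In the \yes\ case, the hidden partition $\vlabels$ is a proper $2$-coloring of every included edge, so $G$ is bipartite: $\mcut(G) = m$, and since every classical cut yields an assignment of basis states earning half as much energy (the diagonals of $Q$ and $M$ differ by a factor of two), $\qmcut(G) \ge m/2$. In the \no\ case, the surviving edges are independent fair coin flips given the matchings, so a Chernoff bound yields $2\mcut(G) - m = O(\sqrt{m \log n})$ with probability at least $1 - 2^{-n}$. To pass to \qmcut\ I will invoke the shifted vector program $\max_{f:V \to S^{n-1}} \sum_{uv \in E} -\langle f(u), f(v) \rangle$, which by~\cite{CW04} is within a constant factor of $2\mcut(G) - m$ and which by the relation from~\cite{hwang2021unique} also controls $\qmcut(G) - m/4$, yielding $\qmcut(G) \le m/4 + O(\sqrt{m \log n})$.

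To turn the \yes/\no\ gap of (nearly) $2$ into a gap detectable by a $(2-\eps)$-approximation, I will set $\alpha = 1/\ceil{C/\eps}$ and $T = \ceil{C^3/\eps^2}$, so that the target edge count $m = T\alpha n = \Theta(n/\eps)$. For a sufficiently large absolute constant $C$, the additive noise $O(\sqrt{m \log n})$ is at most $\eps m / 16$ once $n$ is large enough (as the theorem's divisibility condition on $n$ ensures). The expected number of duplicate edges skipped is $O(T^2 \alpha^2) = O(1/\eps^2)$, again negligible compared to $m$. The extra $2\log n$ bits per message cover the running edge count and the threshold computation, and a union bound over the concentration event and $\alg$'s $\delta$ failure probability gives the claimed $1 - \delta - 2^{-n}$ success probability.

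The main obstacle is the \no\ analysis: threading together the SDP relation of~\cite{CW04} with the \qmcut-to-SDP bound of~\cite{hwang2021unique} to derive a clean $\qmcut(G) \le m/4 + o(\eps m)$ from the \mcut\ concentration, while tracking all constants to confirm they match the advertised parameters $1/\ceil{C/\eps}$ and $\ceil{C^3/\eps^2}$. A secondary subtlety is verifying that the duplicate-skipping step leaves the surviving labels distributed as independent uniform bits in the \no\ case, which is what justifies invoking the Chernoff bound at all; this rides on the fact that the skipping rule depends only on the matchings and not on the labels.
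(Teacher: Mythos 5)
Your protocol and high-level structure match the paper's proof: the graph encoding, de-duplication using the free knowledge of earlier matchings, bipartiteness in the \yes\ case, and the \mcut-to-\qmcut\ bridge via the shifted vector program and~\cite{CW04} (the paper packages these as the distributions $\Gc^Y, \Gc^N$ and Lemmas~\ref{lm:bipcut}, \ref{lm:randmcut}, \ref{lm:randqmc}).

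However, your \no-case concentration step has a genuine gap. You condition on the matchings, observe that the labels are then independent fair coins, and invoke Chernoff to claim $2\mcut(\Gb) - \mb = O(\sqrt{\mb\log n})$ with probability $1 - 2^{-n}$. This fails on two counts. First, conditioning on the matchings is not enough: for an adversarial collection of matchings, all potential edges could cross a single cut $C$, in which case every included edge crosses $C$ and $\mcut(\Gb) = \mb$ regardless of the label coins, so no Chernoff bound on the labels alone can rule out a near-$\mb$ cut. The proof of Lemma~\ref{lm:randmcut} avoids this precisely by using the randomness of the matchings: for any fixed cut and each included edge, even conditioned on all other randomness, the edge's \emph{location} is nearly uniform among $\binom{n}{2}(1 - O(1/n + \alpha))$ possibilities, so the crossing count is stochastically dominated by a binomial with success probability $\frac{1}{2 - O(1/n + \alpha)}$. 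Second, even granting a perfectly balanced potential-edge set, a union bound over all $2^n$ cuts forces a per-cut deviation of $\Theta(\sqrt{\mb\, n})$, not $O(\sqrt{\mb\log n})$; with $\mb = \Theta(C^2 n/\eps)$ this is $\Theta(Cn/\sqrt{\eps})$, which \emph{exceeds} the target separation $\eps\mb = \Theta(C^2 n)$ unless $C$ grows like $1/\sqrt{\eps}$, contradicting the requirement that $C$ be an absolute constant. The paper instead obtains the \emph{multiplicative} bound $\mcut(\Gb) < \mb/(2 - O(1/n + \alpha))$, i.e.\ an additive $O(\alpha\mb)$ error term; since $\alpha = \Theta(\eps/C)$, this shrinks relative to $\eps\mb$ as $C$ grows, which is exactly what lets a $(2-\eps)$-approximation separate the two cases. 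Without threading the matching randomness into your Chernoff argument, the reduction does not close.
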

An immediate consequence of this is that, if we can show that solving \dihp$(n,
\alpha, T)$ with probability $2/3$ requires $\bOm{n}$ space for all
sufficiently small constants $\alpha$, $T$, then any algorithm giving a $(2 -
\varepsilon)$-approximation to \mcut\ or QMC with $2/3$ probability for some
constant $\varepsilon$ requires $\bOm{n}$ space.

Our reduction will be based on a graph encoding of \dihp. We can generate a
graph from a \dihp\ instance by, for each matching $M_t$ and the $i\nth$ edge in
that matching, adding the edge to the graph iff $(w_t)_i = 1$ and the edge is
not in $Ms$ for any $s \le t$. Note that as player $t$ knows $(M_s)_{s \le t}$,
they can construct this graph in the stream, as their own input and $(M_s)_{s
\le t}$ will suffice to determine which edges to insert.

We write $\Gc^Y$ for the distribution on graphs generated thus conditioned on
the \dihp\ instance being a \yes\ case, and $\Gc^N$ for the distribution
conditioned on a \no\ case. Let $\Gb$ denote a draw from one of these
distributions, and $\mb$ the number of edges in $\Gb$. We will show, in both
cases, a $(2 - \varepsilon)$-separation between the cut value of a draw from
$\Gc^Y$ and $\Gc^N$ (as a fraction of $\mb$), with high probability.
\begin{lemma}
\label{lm:bipcut}
Let $\Gb$ be drawn from $\Gc^Y$. Then the $\mcut$ value of $\Gb$ is $\mb$ and
the QMC value is at least $\mb/2$.
\end{lemma}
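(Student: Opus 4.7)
The proof of this lemma should be short and direct, drawing on the characterization of the $\yes$ distribution together with the relationship between $\mcut$ and $\qmcut$ on basis states.

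The plan is to first observe that in a $\yes$ case of \dihp, the edge labels are $w_t = M_t X$ for a single hidden partition $X \in \bool^n$. By construction of the graph $\Gb$, an edge $uv$ of $M_t$ is included only if its label $(w_t)_i = 1$, which by definition of matrix-vector multiplication over $\Zbb_2$ happens precisely when $X_u \neq X_v$. Hence every edge in $\Gb$ crosses the cut defined by $X$. Interpreting $X \in \bool^n$ as an assignment $f : V \to \{-1,1\}$ via $f(v) = 1 - 2X_v$, Definition~\ref{def:mcut} then yields an assignment of value $\mb$. Since $\mb$ is also a trivial upper bound on $\mcut$, the $\mcut$ value of $\Gb$ equals $\mb$ exactly.

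For the QMC bound, I would exploit the standard fact that $\qmcut$ is at least half the classical $\mcut$ value, which one can see by evaluating $Q$ on the computational basis state $\ket{\psi} = \ket{X_1 \cdots X_n}$ corresponding to the optimal cut. The Pauli terms $X_u X_v$ and $Y_u Y_v$ have vanishing diagonal entries in the computational basis, so
\begin{equation*}
\bra{\psi} Q \ket{\psi} = \sum_{uv \in E} w_{uv} \frac{1 - \bra{\psi} Z_u Z_v \ket{\psi}}{4} = \frac{1}{2} \bra{\psi} M \ket{\psi},
\end{equation*}
using the computation in~\eqref{eq:mcut-Hamiltonian-value}. Since $\ket{\psi}$ realizes the $\mcut$ value $\mb$ and the maximum eigenvalue of $Q$ is at least $\bra{\psi} Q \ket{\psi}$, we conclude that the $\qmcut$ value is at least $\mb/2$.

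There is essentially no obstacle here: the argument is a direct unpacking of the $\yes$ distribution combined with the comparison between $Q$ and $M$ on basis states from Section~\ref{sec:qmcut-and-mcut}. The only minor care needed is to confirm that the edge-deletion step (removing duplicates from earlier matchings) does not affect the argument — but this is immediate, as any included edge still comes from some $M_t$ with $(w_t)_i = 1$, and so is still cut by $X$.
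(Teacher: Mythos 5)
Your proposal is correct and follows essentially the same route as the paper: observe that every included edge crosses the partition given by the hidden string $X$, so $\Gb$ is bipartite and the $\mcut$ value is $\mb$, and then invoke the fact that any classical cut earns at least half its value as a \qmcut{} assignment (which the paper records as item~\ref{item:mc-qmc-values} of its \qmcut{} facts, proved exactly by your basis-state computation). The extra details you supply—unpacking $w_t = M_t X$ and verifying that edge deletion is harmless—are correct but not a departure from the paper's argument.
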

\begin{proof}
$\Gb$ will always be bipartite, as every edge will cross the bipartition given
by $\vlabels$. Therefore, the $\mcut$ value will be $\mb$, and as any classical
cut always earns at least half its value for QMC, the QMC value is at least
$\mb/2$.
\end{proof}

To lower-bound the cut value, we will start with classical $\mcut$ and then
show that this implies the corresponding QMC bound.
\begin{lemma}
\label{lm:randmcut}
Let $\Gb$ be drawn from $\Gc^N$, and let $n \ge T$. Then, with probability $1
- e^{n(1 -\bOm{\alpha^2 T})}$, the $\mcut$ value of $\Gb$ is less than
$\mb/(2 - \bO{1/n + \alpha})$.
\end{lemma}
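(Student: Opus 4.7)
The plan is to use a Chernoff-plus-union-bound argument over all $2^n$ bipartitions of $\brac{n}$. First, I would observe that in a $\no$ instance, since each $\elabels_t$ is uniform and each edge is added by the first matching to contain it, the graph $\Gb$ is, conditional on the matchings $(M_t)_t$, obtained by including each edge of $E^\star := \bigcup_t M_t$ independently with probability $1/2$. So for any fixed bipartition $(A,B)$, if $X_{A,B}$ denotes the number of edges of $\Gb$ crossing $(A,B)$, the ``cut excess'' can be written as $2X_{A,B} - \mb = \sum_{e \in E^\star}(2\,\Ibb\brac{e \in E_{A,B}} - 1) B_e$, where the $B_e$ are independent Bernoulli($1/2$) variables.

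The inequality $X_{A,B} \le \mb/(2-\gamma)$ rearranges to $2X_{A,B} - \mb \le \gamma X_{A,B}$, so my goal reduces to upper-bounding the cut excess by $\bO{\gamma\mb}$. Conditional on the matchings, its expectation is $|E^\star \cap E_{A,B}| - |E^\star|/2$, and averaging over a uniform random matching, each edge crosses $(A,B)$ with probability $2\abs{A}\abs{B}/(n(n-1)) \le 1/2 + \bO{1/n}$, so the expectation of this bias over matchings is $\bO{\alpha T} = \bO{\mb/n}$. This accounts for the $1/n$ term in $\gamma = \bO{1/n + \alpha}$.

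For the concentration step, I would apply Hoeffding's inequality to the label-Bernoulli randomness conditional on the matchings, and a McDiarmid-type bound to the matching-dependent bias $|E^\star \cap E_{A,B}| - |E^\star|/2$ (swapping two elements of the permutation representing a matching changes the crossing count by at most $2$). Setting the deviation threshold to $c\alpha^2 n T$, matching the $\alpha\mb$ slack in $\gamma\mb$, yields per-partition failure probability $e^{-\bOm{\alpha^2 nT}}$; together with standard two-sided concentration of $\mb$ itself, this gives $X_{A,B} \le \mb/(2 - \bO{1/n + \alpha})$ per partition. A union bound over $2^n$ bipartitions then gives overall failure probability at most $2^n \cdot e^{-\bOm{\alpha^2 nT}} = e^{n(1 - \bOm{\alpha^2 T})}$, as claimed.

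The main obstacle is the concentration over the matching randomness, since edges within a single matching are not independent (they must be disjoint). I anticipate handling this by moving to the permutation representation and invoking McDiarmid's inequality, or alternatively by exploiting negative association among the matching-edge indicators so that standard Chernoff bounds apply directly to the joint randomness of matchings and labels. Additional bookkeeping is needed to handle the duplicate-removal step in the graph construction, but the number of edges appearing in multiple matchings is $\bO{\alpha^2 T^2 n}$ in expectation (using $n \ge T$) and hence contributes only a lower-order correction to both $|E^\star|$ and $|E^\star \cap E_{A,B}|$.
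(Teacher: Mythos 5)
Your proposal is correct in spirit but takes a genuinely different route from the paper. You decompose the randomness into two pieces—the label Bernoullis conditional on the matchings (handled by Hoeffding), and the matching-dependent bias $|E^\star \cap E_{A,B}| - |E^\star|/2$ (handled by McDiarmid/Azuma over the permutation representation, or by negative association)—and then control each separately. The paper instead folds both sources of randomness into a single sequential/coupling argument: it first lower-bounds $\mb$ (showing $\mb \geq \alpha n T / 8$ w.h.p. by stochastic domination), then argues that for each edge that made it into $\Gb$, \emph{even after conditioning on the entire rest of the instance and on the edge being included}, the edge's location is uniform over $\geq n^2(1 - 1/n - 5\alpha)/2$ possible pairs, so it crosses the fixed cut with probability at most $\frac{1}{2(1 - 1/n - 5\alpha)}$. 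This makes the number of crossing edges stochastically dominated by a single binomial $\Bi{\mb}{\frac{1}{2-1/n-5\alpha}}$, and one Chernoff bound finishes each cut before a union bound over $2^n$ cuts. The advantage of the paper's method is that it never needs to separate label randomness from matching randomness, so it never requires a concentration inequality over dependent matching indicators; the dependence is absorbed into the conditioning argument. Your decomposition is more explicit but more machinery-heavy, and in particular requires an extra lemma (McDiarmid on permutations or negative association) that the paper's route avoids entirely.

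One quantitative point worth flagging: with deviation threshold $c\alpha^2 nT$ you claim per-partition failure $e^{-\Omega(\alpha^2 nT)}$, but your tools don't deliver that rate. Hoeffding over $|E^\star| \approx \alpha n T$ Bernoullis with deviation $\alpha^2 n T$ gives $e^{-\Omega((\alpha^2 n T)^2/(\alpha n T))} = e^{-\Omega(\alpha^3 n T)}$, and McDiarmid over $T$ permutations of length $n$ with bounded differences $O(1)$ gives the still-weaker $e^{-\Omega((\alpha^2 nT)^2/(nT))} = e^{-\Omega(\alpha^4 n T)}$; only the negative-association alternative you mention restores the $\alpha^3 nT$ Hoeffding rate. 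The paper's own Chernoff step also lands at $e^{-\Omega(\alpha^2 \mb)}$, i.e.\ $e^{-\Omega(\alpha^3 nT)}$, so you are in the same ballpark as the paper after its own exponent bookkeeping—but if you go the McDiarmid route you lose a factor of $\alpha$ relative to both the paper and your own Hoeffding step. You should prefer the negative-association or the paper's coupling argument to keep the matching-randomness step from being the bottleneck.
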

\begin{proof}
We start by proving that $\mb$ is at least $\alpha T n/8$ with high
probability. Consider the probability that the $i\nth$ edge of the $t\nth$
matching is included in $\Gb$, conditioned on on $(M_s)_{s = 1}^{t-1}$ and
edges $1, \dots, i-1$ of $M_t$. Then, the $i\nth$ edge of $M_t$ is uniformly
distributed among those edges not incident to any of the $i - 1$ edges from
$M_t$ conditioned on, and it will be included with probability $1/2$ if it is
none of the edges in $(M_s)_{s=1}^{t-1}$. So it is included with probability at
least \[
\frac{1}{2}\cdot\frac{\binom{n}{2} - (i - 1)n - \alpha (t - 1) n}{\binom{n}{2}}
\le \frac{n^2 - n - 2\alpha n^2 - 2\alpha nT}{2n^2} \le \frac{1}{2} - 1/n -
\alpha - \alpha T/n \le \frac{1}{2} - 1/n - 2\alpha
\] 
as $n \ge T$. We may assume $1/n - 2\alpha \le 1/4$, as otherwise the lemma
follows trivially by choosing the constants in $\bOm{\cdot}$ and $\bO{\cdot}$
to be large enough.  So the distribution of $\mb$ stochastically dominates
$\Bi{n\alpha T}{1/4}$ and so by the Chernoff bounds, \[
\mb \ge \frac{\alpha n T}{8} 
\]
with probability at least \[
1 - e^{-\bOm{n\alpha T}}\text{.}
\]
Now, conditioning on any value of $\mb \ge \frac{\alpha n T}{8}$, consider any
cut of the $\brac{n}$ vertices. Now, consider the probability that any one of
those $\mb$ edges crosses the cut. For the $i\nth$ edge of $M_t$, even if we
condition on the entire state of the game except for $(M_t)_i$, and
additionally condition on the edge being included in the graph, its location is
uniformly distributed over at least $\binom{n}{2} - 2\alpha n^2 - \alpha n T
\ge n^2(1 - 1/n - 5\alpha)/2$ possible locations (as the other edges from the
same player's $\alpha n$-matching exclude no more than $2\alpha n^2$
possibilities, $2n$ for each of them, while the edges from other players'
matchings can exclude at most $1$ edge each, for a total of no more than
$\alpha n T$). So the probability it will cross the cut is at most \[
\frac{1}{2(1 - 1/n - 5\alpha)} 
\]
as at most $n^2/4$ of the \emph{possible} locations for the edge will cross the
cut. As this holds conditioning on any value of the rest of the input, it will
also hold conditioning on any part of it, and so the number of edges crossing
the cut is stochastically dominated by $\Bi{\mb}{\frac{1}{2 - 1/n - 5\alpha}}$,
and so by again applying the Chernoff bounds and assuming that $1 - 1/n -
6\alpha \ge 1/2$ (as otherwise the lemma again holds trivially), it is less
than \[
\frac{\mb}{2(1 - 1/n - 6\alpha)} = \frac{\mb}{2 - 1/n - 5\alpha} +
\bOm{\alpha m}
\] with probability at least \[
1 - e^{-\bOm{\alpha^2 m}} = 1 - e^{-\bOm{\alpha^3 n^2 T}} =1 - e^{-\bOm{\alpha^2 n T}}
\]
conditioned on any $\mb \ge n \alpha T/8$ and using the fact that $\alpha n \in
\Nbb$. The lemma follows by taking a union bound over all $2^n$ possible cuts.
\end{proof}

This upper bounds the \mcut{} value of a \no{} instance. To extend this bound
to \qmcut{}, we use the fact that the natural semidefinite programming (SDP)
relaxation of \mcut\ also gives an SDP relaxation of QMC, and this relaxation
becomes tight as the max-cut value tends towards $1/2$. Specifically, we use
the vector program
\begin{equation} \label{eq:MC-vector-program}
\max_{f : V \rightarrow S^{n-1}} \sum_{uv \in E} - \langle f(u), f(v) \rangle
\end{equation}
for a graph $G = (V,E)$. The above vector program is equivalent to the standard Goemans-Williamson vector program
\[
\max_{f : V \rightarrow S^{n-1}} \sum_{uv \in E} \frac{1- \langle f(u), f(v) \rangle}{2},
\]
and may be solved as an SDP~\cite{GoemansWilliamson95}.
Thus if the optimal value of \eqref{eq:MC-vector-program} is $K$, the $\mcut$ value of $G$ is at most $(\mb + K)/2$.  A related vector program also gives an upper bound for QMC, so that its value is at most $(\mb + 3K)/4$~\cite{GharibianParekh19}.

Conversely, we also have that if the $\mcut$ value of $G$ is at most $n/2 +
\eps$, then $K = O(\eps)$, by~\cite{CW04}.
This allows us to prove the following lemma:
\begin{lemma}
\label{lm:randqmc}
Let $\Gb$ be drawn from $\Gc^N$, and let $n \ge T$. Then, with probability $1
- e^{1-\bO{\alpha^2 n T}}$, the QMC value of $\Gb$ is less than
$\mb/(4 - \bO{1/n + \alpha})$.
\end{lemma}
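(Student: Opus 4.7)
The plan is to chain the classical MaxCut upper bound from Lemma~\ref{lm:randmcut} through the semidefinite programming relaxation to obtain a QMC upper bound. First, I would invoke Lemma~\ref{lm:randmcut} directly: with probability at least $1 - e^{-\bOm{\alpha^2 n T}}$ (the same event we already condition on there),
\[
\mcut(\Gb) < \frac{\mb}{2 - \bO{1/n + \alpha}} = \frac{\mb}{2} + \bO{(1/n + \alpha)\mb}.
\]

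Next I would use the bridge supplied by the discussion preceding the lemma: the value of the vector program~\eqref{eq:MC-vector-program}, call it $K$, satisfies $\mcut(\Gb) \le (\mb + K)/2$, and by the integrality-gap analysis of~\cite{CW04}, when the $\mcut$ value lies close to $\mb/2$ the SDP value $K$ is only a constant factor larger than the excess $\mcut(\Gb) - \mb/2$. Concretely, combining this with the previous display yields $K = \bO{(1/n + \alpha)\mb}$ on the same high-probability event.

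Finally, the same vector program also certifies an upper bound $\text{QMC}(\Gb) \le (\mb + 3K)/4$ (from the relationship noted in the preceding paragraph, taken from~\cite{GharibianParekh19}). Plugging in the bound on $K$ gives
\[
\text{QMC}(\Gb) \le \frac{\mb + 3K}{4} = \frac{\mb}{4} + \bO{(1/n + \alpha)\mb} = \frac{\mb}{4 - \bO{1/n + \alpha}},
\]
which is the claimed bound, and the failure probability is inherited from Lemma~\ref{lm:randmcut}.

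The main thing to verify carefully is the quantitative form of the~\cite{CW04} statement: we need it in the scale where $\mcut(\Gb)$ exceeds $\mb/2$ by an additive $\eps \mb$ implies $K \le C \eps \mb$ for an absolute constant $C$, so that the $\bO{1/n + \alpha}$ slack in Lemma~\ref{lm:randmcut} propagates through cleanly. Everything else is bookkeeping: absorbing the constant factor of $3$ in front of $K$ into the $\bO{\cdot}$, and converting the additive form $\mb/4 + \bO{(1/n+\alpha)\mb}$ back into the multiplicative denominator form $\mb/(4 - \bO{1/n+\alpha})$ as in the statement.
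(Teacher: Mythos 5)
Your proposal is correct and follows exactly the paper's argument: invoke Lemma~\ref{lm:randmcut} for the classical \mcut{} bound, use~\cite{CW04} to conclude the SDP value is $\bO{(1/n+\alpha)\mb}$, and then apply the $(\mb+3K)/4$ upper bound on the QMC value. The only difference is that you spell out the bookkeeping the paper leaves implicit.
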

\begin{proof}
By Lemma~\ref{lm:randmcut}, with this probability the $\mcut$ value is at least
$\mb/(2 - \bO{1/n + \alpha})$, and so the SDP value is $\bO{(1/n +
\alpha)\mb}$. This in turn implies that the QMC value is less than $\mb/(4 -
\bO{1/n + \alpha})$.
\end{proof}

We may now prove Theorem~\ref{thm:reduction}.
\reduction*
\begin{proof}
The protocol will be as follows:
\begin{itemize}
\item Each player in turn gives every edge in their matching $M_t$ with bit
label 1 to the algorithm, unless the edge appears in a previous player's
matching $M_s$. They keep track of $\mb$, the number of edges input this way.  
\item The final player uses the algorithm to approximate \mcut/QMC, returning \yes\
if the returned value is at least $\mb/(2 - \varepsilon)$ (for $\mcut$) or at
least $\mb/(4 - \varepsilon)$ (for QMC) and \no\ otherwise.
\end{itemize}
This will cost $S$ space to maintain the algorithm, and $2 \log n$ space to maintain
the counter.

The player's input is now a draw from $\Gc^Y$ conditioned on the problem being
in a $\yes$ instance, or from $\Gc^N$ conditioned on it being in a $\no$
instance. Lemma~\ref{lm:bipcut} guarantees that a draw from $\Gc^Y$ will have $\mcut$ value $\mb$ and QMC value at least $\mb/2$.

Provided $C$ is chosen to be large enough, Lemmas~\ref{lm:randmcut} and~\ref{lm:randqmc} say that a draw from $\Gc^N$ will have $\mcut$ value less than $\mb/(2 - \eps)$ and QMC value less than $\mb/(4 - \eps)$ with probability at least \[
1 - 2^{-n(1 - \bO{\ceil{C/\eps}^{-2}\cdot \ceil{C^3/\eps^2}})} = 1 - 2^{-n}
\]
and so as the algorithm succeeds with probability $1 - \delta$, the protocol
will return the correct answer with probability at least $1 - \delta -
2^{-n}$.
\end{proof}

\section{Quantum Lower Bound for \dihp{}}
\subsection{Reduction to Fourier Analysis}
\label{sec:foured}
Let $\qubits$ be the number of qubits each player may send to the next. For
simplicity we will think of the final player, $T$, sending an $\qubits$-qubit
message (consisting of either $\yes$ or $\no$ with another $\qubits-1$ bits
fixed at $\ket{0}$). We will think of the first player as receiving a message
from ``player 0'' that is fixed as $\ket{0}^\qubits$. {We write $g_t :
\bool^{\alpha \players n} : \rightarrow \Cbb^{2^\qubits \times 2^\qubits}$ for
player $t$'s message as a function of the edge labels of the first $t$
matchings, and $f_t : \bool^n : \rightarrow \Cbb^{2^\qubits \times 2^\qubits}$
for $f_t(x) = g_t\paren*{\paren*{M_sx}_{s=1}^t}$. Note that both functions
depend on the first $t$ matchings $(M_s)_{s=1}^t$. Note also that as these
messages are density matrices, they in particular have trace norm 1.

Now, we define \[
\phi_t = \E[x \sim \uni{\bool^n}, y \sim \uni{\bool^{n \alpha (T - t)}}]{g_T\paren*{(M_sx)_{s=1}^t,y}}
\]
so for any given set of matchings $(M_t)_{t=1}\tr $, the total variation
difference between the final player's output in a $\yes$ and a $\no$ case is
given by \[
\norm{\phi_T - \phi_0}_1.
\]
\begin{lemma}
\label{lm:phibound}
\[
\norm{\phi_T - \phi_0}_1 \le \sum_{t=1}^{T-1}\sum_{s \in \bool^{\alpha
n}\setminus \set{\emptyset}}\norm*{\wh{f}_t(M_{t+1}\tr s)}_1
\]
\end{lemma}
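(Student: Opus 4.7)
The plan is to telescope through hybrid distributions $\phi_0, \phi_1, \ldots, \phi_T$ and bound each step using the Fourier convolution machinery of Section~3. Write $\phi_T - \phi_0 = \sum_{t=0}^{T-1}(\phi_{t+1} - \phi_t)$; passing from $\phi_t$ to $\phi_{t+1}$ amounts to replacing the random labels $w_{t+1}$ given to player $t+1$ with the $\yes$-case labels $M_{t+1}x$, while the downstream players (those indexed $t+2,\dots,T$) still receive uniformly random labels in both hybrids. Letting $\Ac_{s,w}$ denote player $s$'s channel on labels $w$ and $\mathcal{R}_s = \E[w]{\Ac_{s,w}}$ its label-average, the downstream composition $\mathcal{R}_T \circ \cdots \circ \mathcal{R}_{t+2}$ is a fixed quantum channel, so trace-norm contractivity on Hermitian inputs reduces the task to bounding
\[
\bigl\lVert \E[x]{(\Ac_{t+1,M_{t+1}x} - \mathcal{R}_{t+1})(f_t(x))}\bigr\rVert_1.
\]

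To analyze this, I would view $\Ac_x := \Ac_{t+1,M_{t+1}x}$ as a family of linear operators parametrized by $x \in \bool^n$ and apply Lemma~\ref{lm:lopconv} at $S = 0^n$, giving $\E[x]{\Ac_x(f_t(x))} = \sum_{S \in \bool^n} \wh{\Ac}_S \, \wh{f}_t(S)$. Since $\Ac_x$ factors through $M_{t+1}x$, Lemma~\ref{lm:zerocoeffs} collapses the sum to terms with $S = M_{t+1}\tr s$ for some $s \in \bool^{\alpha n}$. The $s = \emptyset$ contribution equals $\wh{\Ac}_{0^n}\wh{f}_t(0^n) = \mathcal{R}_{t+1}\,\E[x]{f_t(x)}$---using that $M_{t+1}x$ is uniform on $\bool^{\alpha n}$ whenever $x$ is uniform on $\bool^n$, because $M_{t+1}$ has full row rank as the incidence matrix of a matching---which precisely cancels the subtracted $\mathcal{R}_{t+1}$ term.

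To finish, for each surviving $s \neq \emptyset$ I would bound $\lVert\wh{\Ac}_{M_{t+1}\tr s}\,\wh{f}_t(M_{t+1}\tr s)\rVert_1 \le \lVert\wh{f}_t(M_{t+1}\tr s)\rVert_1$ by writing $\wh{\Ac}_{M_{t+1}\tr s} = \E[w]{(-1)^{s\cdot w}\Ac_{t+1,w}}$ as a signed average of quantum channels and combining the triangle inequality with the trace-norm contractivity of each $\Ac_{t+1,w}$ applied to the Hermitian matrix $\wh{f}_t(M_{t+1}\tr s)$ (which is a real combination of density matrices). The $t=0$ term of the telescope contributes zero because $f_0$ is the fixed state $\ket{0}\bra{0}^{\otimes \qubits}$ and so $\wh{f}_0(S) = 0$ for all $S \neq 0^n$, leaving the sum to run from $t=1$ to $t=T-1$ as claimed. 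The main subtlety I anticipate is the bookkeeping around the $s = \emptyset$ term: keeping straight that $\wh{\Ac}_{\cdot}$ is a Fourier transform on $\bool^n$ while $\mathcal{R}_{t+1}$ is a label-average on $\bool^{\alpha n}$, and verifying that the two coincide at the zero-frequency via the surjectivity of $M_{t+1}$.
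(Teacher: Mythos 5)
Your proposal is correct and follows essentially the same route as the paper: the same hybridization over $\phi_t$, reduction to a single player's step via contractivity of the (fixed, label-averaged) downstream channel, identification of the nonzero Fourier contributions with $M_{t+1}\tr s$, cancellation of the $s=\emptyset$ term, and a signed-average-of-channels contractivity bound for the rest. The only cosmetic difference is that you apply the operator convolution lemma (Lemma~\ref{lm:lopconv}) at zero frequency directly, whereas the paper routes through the indicator function $q_y$ and Lemmas~\ref{lm:matconv} and~\ref{lm:linearcon}; these are interchangeable, and the paper uses your version of the machinery in the very next subsection.
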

\begin{proof}
We first note that $\phi_1$ and $\phi_0$ are identical, as $M_1x$ is uniformly
distributed. It will therefore suffice to prove that, for all $t \in
\brac{T-1}$, \[
\norm{\phi_{t+1} - \phi_{t}}_1 \le \sum_{s \in \bool^{\alpha
n}\setminus \set{\emptyset}}\norm*{\wh{f}_t(M_{t+1}\tr s)}_1
\]
and the theorem will then follow by the triangle inequality. Now,
we note that $\phi_{t+1}$ and $\phi_{t}$ are given by applying the same quantum
channel to \[
\E[x \sim \uni{\bool^n}]{g_{t+1}\paren*{(M_sx)_{s=1}^{t+1}}}
\]
and \[
\E[x \sim \uni{\bool^n}, y \sim \uni{\bool^{\alpha n}}]{g_{t+1}\paren*{(M_sx)_{s=1}^{t},y}}
\]
respectively. Therefore, it will suffice to bound the trace norm of \[
\E[x \sim \uni{\bool^n}, y \sim \uni{\bool^{\alpha
n}}]{g_{t+1}\paren*{(M_sx)_{s=1}^{t+1}} - g_{t+1}\paren*{(M_sx)_{s=1}^{t},y}}
\] which we may rewrite as \[
\E[x \sim \uni{\bool^n}, y \sim \uni{\bool^{\alpha
n}}]{\Ac_{M_{t+1}x}\paren*{f_{t}(x) -
\Ac_{y}\paren*{f_t(x)}}}
\]
where $\Ac_y$ is the quantum channel that player $t+1$ applies to the message
received from player $t$ to generate the message sent onwards, if player
$(t+1)$'s bit labels are $y$ (this channel may also depend on the matchings
$(M_s)_{s=1}^{t+1}$). 

Writing $q_y$ for the indicator function on $y = M_{t+1}x$, this in turn equals
\[
\E[x \sim \uni{\bool^n}, y \sim \uni{\bool^{\alpha
n}}]{2^{\alpha n}q_y(x)\Ac_{y}\paren*{f_t(x)} -
\Ac_{y}\paren*{f_t(x)}}
\]
which by the linearity of $\Ac_y$ can be rewritten as \[
\E[y \sim \uni{\bool^{\alpha n}}]{\Ac_y\paren*{2^{\alpha n}\wh{q_yf_t}(\emptyset) - \wh{f}_t(\emptyset)}}\text{.}
\]
Now, by Lemmas~\ref{lm:matconv}~and~\ref{lm:linearcon},
\begin{align*}
\wh{q_yf_t}(\emptyset) &= \sum_{S \in \bool^n} \wh{q}_y(S)\wh{f}_t(S)\\
&= \sum_{s \in \bool^{\alpha n}} \frac{\abs{q^{-1}_y(1)}}{2^n} \wh{f}_t(M_{t+1}\tr s)(-1)^{s\cdot y}\\
&= 2^{-\alpha n} \sum_{s \in \bool^{\alpha n}} \wh{f}_t(M_{t+1}\tr s) (-1)^{s\cdot y}
\end{align*}
as a $2^{-\alpha n}$ fraction of strings $x$ satisfy $M_{t+1}x = y$. Putting
this together, we get
\begin{align*}
\norm{\phi_{t+1} - \phi_{t}}_1 &\le \norm*{\E[y \sim \uni{\bool^{\alpha
n}}]{\Ac_y\paren*{\sum_{s \in \bool^{\alpha n}\setminus
\set{\emptyset}}\wh{f}_t(M_{t+1}\tr s)(-1)^{s\cdot y}}}}_1\\
&\le \E[y \sim \uni{\bool^{\alpha n}}]{\norm*{\Ac_y\paren*{\sum_{s \in \bool^{\alpha
n}\setminus \set{\emptyset}}\wh{f}_t(M_{t+1}\tr s) (-1)^{s\cdot y}}}_1}\\
&= \E[y \sim \uni{\bool^{\alpha n}}]{\norm*{\sum_{s \in \bool^{\alpha
n}\setminus \set{\emptyset}}\wh{f}_t(M_{t+1}\tr s) (-1)^{s\cdot y}}_1}\\
&\le \sum_{s \in \bool^{\alpha
n}\setminus \set{\emptyset}}\norm*{\wh{f}_t(M_{t+1}\tr s)}_1
\end{align*}
completing the proof.
\end{proof}

\subsection{Evolution of Fourier Coefficients}
In this section we will show that, in expectation over the matchings, the
even-degree Fourier coefficients of $f$ remain small enough in magnitude for
Lemma~\ref{lm:phibound} to give a useful bound. 
\begin{restatable}{lemma}{evdegbound}
\label{lm:evdegbound}
There exists constants $C > 1, D > 1$ such that, for all $\ell \in
\brac{n/2}$ and $t \in \brac{T}$, if $\qubits \le n/D^t$ and $\alpha \le
1/D$, \[ \sum_{\substack{S \in \bool^n\\ \abs{S} = 2\ell}}
\E[(M_s)_{s=1}^t]{\norm{\widehat{f}_t(S)}_1} \le
\paren*{\frac{C^tn}{\ell}\cdot\max\set*{\frac{\qubits}{\ell},
1}}^{\ell/2}\text{.} \]
\end{restatable}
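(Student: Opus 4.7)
I would prove the lemma by induction on $t$. For the base case $t = 1$, note that $f_1(x) = g_1(M_1 x)$ depends on $x$ only through $M_1 x \in \bool^{\alpha n}$, so by Lemma~\ref{lm:zerocoeffs} the only nonzero coefficients are $\wh{f}_1(M_1\tr s)$, and a direct computation shows these equal $\wh{g}_1(s)$. Since $|M_1\tr s| = 2|s|$ (as $M_1$ is a matching), applying Lemma~\ref{lm:lodegbound} to $g_1$ viewed as a density-matrix-valued function on $\bool^{\alpha n}$ yields the base case after checking the two regimes $\ell \le \qubits$ and $\ell > \qubits$ against the $\max\set{\qubits/\ell,1}$ on the right-hand side.

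\textbf{Inductive step.} For $t \ge 2$, write $f_t(x) = \Ac^{(t)}_{M_t x}\paren*{f_{t-1}(x)}$, where $\Ac^{(t)}_y$ denotes the channel applied by player $t$ when their bit labels equal $y$. Lemma~\ref{lm:lopconv} gives
\[
\wh{f}_t(S) = \sum_{T \in \bool^n} \wh{\Ac}^{(t)}_T \wh{f}_{t-1}(S \oplus T),
\]
and Lemma~\ref{lm:zerocoeffs}, applied now to the operator-valued family $\Ac^{(t)}_x$, restricts the support to $T \in \set{M_t\tr s : s \in \bool^{\alpha n}}$, with $|M_t\tr s| = 2|s|$. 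Taking trace norms via the triangle inequality and classifying each contribution by the incidence pattern of the edges of $s$ with $S$ --- $a$ internal, $b$ straddling, $c$ external --- I arrive at a sum of the form
\[
\sum_{|S'|=2\ell}\norm{\wh{f}_t(S')}_1 \le \sum_{\substack{\ell_1, a, b, c\\ \ell_1 - a + c = \ell}} \sum_{|S|=2\ell_1} \sum_{u,v} I_S(u)\, B_S(v) \sum_{|w|=c}\norm{\wh{\Ac}^{(t)}_{M_t\tr(u \oplus v \oplus w)}\wh{f}_{t-1}(S)}_1,
\]
where $I_S$ and $B_S$ indicate the internal/straddling structure described above.

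\textbf{Channel hypercontractivity.} To control the inner sum over $w$ for fixed $S,u,v$, set $A := \wh{f}_{t-1}(S)$ and treat $h(x) := \Ac^{(t)}_x(A)$ as a matrix-valued function on $\bool^n$. Decomposing $A$ into four PSD summands via its Hermitian real and imaginary parts and using that each $\Ac^{(t)}_x$ is trace-preserving on PSD inputs shows $\norm{h(x)}_1 \le 2\norm{A}_1$. Applying Lemma~\ref{lm:lodegbound} (or Lemma~\ref{lm:hidegbound} when $c$ is large) to $h/(2\norm{A}_1)$, and noting that $\wh{h}$ vanishes outside $\set{M_t\tr r : r \in \bool^{\alpha n}}$ by Lemma~\ref{lm:zerocoeffs}, a Cauchy--Schwarz step over the $\binom{\alpha n}{c}$ values of $w$ yields an upper bound of $\bO{\binom{\qubits}{c}^{1/2}\binom{\alpha n}{c}^{1/2}}\norm{A}_1$ on $\sum_{|w|=c}\norm{\wh{\Ac}^{(t)}_{M_t\tr(u \oplus v \oplus w)} A}_1$.

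\textbf{Main obstacle.} The technical heart is the bookkeeping that assembles these channel bounds together with $\E[M_t]{I_S(u) B_S(v)} \approx \binom{\ell_1}{a}\binom{\ell_1}{b}/n^{a+b}$ (from elementary counting for a uniformly random matching) and the inductive hypothesis applied at each admissible $\ell_1$, then executes the sum over $(\ell_1, a, b, c)$. The hypotheses $\qubits \le n/D^t$ and $\alpha \le 1/D$ are used precisely to ensure that each step of the induction loses at most a multiplicative factor of $C$ per level: the first controls the channel contribution $\binom{\qubits}{c}^{1/2}$ against the combinatorial loss $n$ incurred when edges migrate between coefficients of different weight, while the second absorbs the $\binom{\alpha n}{c}^{1/2}$ factor. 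The $\max\set{\qubits/\ell, 1}$ on the right-hand side reflects the regime change near $\ell \approx \qubits$ at which the driving hypercontractive bound switches from Lemma~\ref{lm:lodegbound} to Lemma~\ref{lm:hidegbound}, and ensuring the calculation closes uniformly across this transition is where most of the care is required.
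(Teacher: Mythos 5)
Your overall strategy mirrors the paper's --- induction on $t$, the operator-valued convolution lemma, the $(a,b,c)$ split by incidence pattern, and the mass-transfer bookkeeping --- so the skeleton is right. There is, however, a real gap in the ``channel hypercontractivity'' step, and it is not a cosmetic one.

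You set $h(x) := \Ac^{(t)}_x(A)$ as a function on $\bool^n$ and invoke Lemma~\ref{lm:lodegbound} (plus Cauchy--Schwarz over the $\binom{\alpha n}{c}$ choices of $w$) to claim
\[
\sum_{\abs{w} = c}\norm{\wh{\Ac}^{(t)}_{M_t\tr(u\oplus v\oplus w)} A}_1
\;\le\; \bO{\tbinom{\qubits}{c}^{1/2}\tbinom{\alpha n}{c}^{1/2}}\norm{A}_1.
\]
But $\wh{h}(M_t\tr(u\oplus v\oplus w))$ is a Fourier coefficient of $h$ at a set of size $2(a+b+c)$ (equivalently, after re-parameterizing along $y = M_t x$, at a subset of $\bool^{\alpha n}$ of size $a+b+c$), \emph{not} of size $c$. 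The hypercontractive inequality's strength is governed by the \emph{weight of the coefficient}, so what you actually get from Lemma~\ref{lm:lodegbound} in this setup is a factor on the order of $\binom{\qubits}{a+b+c}^{1/2}$, not $\binom{\qubits}{c}^{1/2}$. That is a loss of roughly $(\qubits/c)^{(a+b)/2}$ (and worse if one also Cauchy--Schwarz's naively over all weight-$2(a+b+c)$ sets). When you then try to close the induction, the extra $\qubits$-dependence is fatal: for the dominant $a=b=0$ case the comparison against the paper's bound already shows a discrepancy of order $(\alpha\qubits)^{c/2}$, which is unbounded in $n$ under the stated hypotheses $\alpha \le 1/D$, $\qubits \le n/D^t$.

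The ingredient you are missing is the \emph{sign twist} the paper bakes into its auxiliary function: they define, for fixed $S, u, v$,
\[
g(y) = (-1)^{(u\oplus v)\cdot y}\,\Bc^{(t)}_y\bigl(\wh{f}_{t-1}(S)\bigr)\big/\norm{\wh{f}_{t-1}(S)}_1
\quad\text{on } \bool^{\alpha n},
\]
where $\Bc^{(t)}_y$ is the channel as a function of the player's actual input $y = M_t x$. The character $(-1)^{(u\oplus v)\cdot y}$ shifts the Fourier transform so that $\wh{g}(w) = \wh{\Ac}^{(t)}_{M_t\tr(u\oplus v\oplus w)}\wh{f}_{t-1}(S)/\norm{\wh{f}_{t-1}(S)}_1$, and crucially the relevant coefficient of $g$ has weight exactly $c$, not $a+b+c$. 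This is what makes the hypercontractive factor $\binom{\qubits}{c}^{1/2}$ (rather than $\binom{\qubits}{a+b+c}^{1/2}$) legitimate, and it is precisely what allows the $a,b > 0$ terms to remain suppressed by $\E{I_S(u)B_S(v)}$ alone without an extra $\qubits$-penalty working against you.

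Two minor remarks. First, your decomposition of $A$ into four PSD summands to get $\norm{h(x)}_1 \le 2\norm{A}_1$ is unnecessary: the paper simply uses that quantum channels are trace-norm contractive, and moreover $\wh{f}_{t-1}(S)$ is automatically Hermitian (it is a real linear combination of density matrices), so even a two-part PSD split would give $\norm{h(x)}_1 \le \norm{A}_1$ with no loss. Second, the paper takes $t=0$ (a fixed state) as the base case rather than $t=1$; your base case at $t=1$ is also fine and essentially equivalent.
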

We write $\Ac_x^t$ for the quantum channel player $t$ applies to the state
received from player $t-1$, if the input player $t$ sees is $(M_t, M_tx)$ (the
family $(\Ac_x)_{x \in \bool^n}$ will therefore depend on $M_t$ and nothing
else).  This means that $f_t(x) = \Ac_x^tf_{t-1}(x)$ and so we may apply
Lemma~\ref{lm:lopconv} to get \[
\wh{f}_t(S) = \sum_{U \in \bool^n} \wh{\Ac}_U \wh{f}_{t-1}(U \oplus S)
\]
for all $S \in \bool^n$. Now we will show that we only need to care about the
coefficients $U$ that correspond to sets of edges in the matching $M_t$.
\begin{lemma}
For all $S \in \bool^n$, $\wh{\Ac}^t_S = 0$ unless $S = M_t\tr s$ for some $s
\in \bool^{\alpha n}$.
\end{lemma}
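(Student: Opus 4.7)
The plan is to reduce this to Lemma~\ref{lm:zerocoeffs}, which already handles exactly this kind of statement for functions whose value depends only on a linear image of the input. The key observation is that, with the matchings $(M_s)_{s=1}^t$ fixed, player $t$'s entire input is the pair $(M_t, \elabels_t)$ where $\elabels_t = M_t x$ in a \yes\ case. Since the family of channels $(\Ac^t_x)_{x \in \bool^n}$ is, by definition, the channel that player $t$ applies as a function of the bit labels they observe, $\Ac^t_x$ depends on $x$ only through the string $M_t x \in \bool^{\alpha n}$.

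Concretely, I would first write $\Ac^t_x = \Bc^t_{M_t x}$ for some family of linear operators $(\Bc^t_y)_{y \in \bool^{\alpha n}}$. One can view each $\Bc^t_y$ as an element of the finite-dimensional complex vector space of linear operators on $\Cbb^{2^\qubits \times 2^\qubits}$ (equivalently, a $2^{2\qubits} \times 2^{2\qubits}$ complex matrix acting on vectorized density matrices), so this fits the hypothesis of Lemma~\ref{lm:zerocoeffs}, which is stated for matrix-valued functions. Applying that lemma coordinate-wise (or directly, since the Fourier transform is defined entrywise on matrices) with the matrix $M_t \in \bool^{\alpha n \times n}$ and the function $\Bc^t$, we conclude that $\wh{\Ac}^t_S = 0$ unless $S = M_t\tr s$ for some $s \in \bool^{\alpha n}$.

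The proof is essentially a one-line invocation of Lemma~\ref{lm:zerocoeffs}, so there is no real obstacle beyond the bookkeeping observation that the lemma, although stated for scalar- or matrix-valued functions, applies equally to functions taking values in any finite-dimensional complex vector space (the Fourier transform and the proof of Lemma~\ref{lm:zerocoeffs} work componentwise in any basis). The only substantive content is the factual claim that $\Ac^t_x$ depends on $x$ solely through $M_t x$, which is immediate from the definition of $\Ac^t_x$ as the channel determined by player $t$'s bit labels, once we have fixed $(M_s)_{s=1}^t$.
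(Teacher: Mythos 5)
Your proof is correct and takes essentially the same approach as the paper: both rest on the observation that $\Ac^t_x$ depends on $x$ only through $M_tx$ and then invoke Lemma~\ref{lm:zerocoeffs}. The only (immaterial) difference is that the paper applies that lemma to the matrix-valued function $x \mapsto \Ac^t_x N$ for each fixed $N$ and concludes $\wh{\Ac}^t_S N = 0$ for all $N$, whereas you apply it directly to the operator-valued function by noting the lemma works componentwise in any finite-dimensional vector space; both reductions are valid.
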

\begin{proof}
Suppose $S$ is not $M_t\tr s$ for any $s$. Then for any $N \in \Cbb^{2^\qubits
\times 2^\qubits}$, define \[
h(x) = \Ac_x^tN\text{.}
\]
As $\Ac^t_x$ depends only on $M_tx$, we may apply Lemma~\ref{lm:zerocoeffs} to
get that $\wh{h}(S) = 0$ and therefore $\wh{\Ac}^t_SN = 0$. As this applies for
all $N$, $\wh{\Ac}^t_S = 0$.
\end{proof}
Applying this gives us
\[
\wh{f}_t(S) = \sum_{s \in \bool^{\alpha n}} \wh{\Ac}^t_{M_t\tr s}
\wh{f}_{t-1}(M\tr s \oplus S)
\] and summing over $S$ of weight $2\ell_2$ for any $\ell_2 \in \brac{n/2}$, we get
\begin{align*}
\sum_{\substack{S \in \bool^n\\\abs{S} = 2\ell_2}} \norm{\wh{f}_t(S)}_1 &\le
\sum_{\substack{S \in \bool^n\\ \abs{S} = 2\ell_2}} \sum_{s \in
\bool^{\alpha n}} \norm{\wh{\Ac}^t_{M_t\tr s}\wh{f}_{t-1}(M\tr s \oplus S)}_1\\
&= \sum_{S \in \bool^n} \sum_{\substack{s \in \bool^{\alpha n}\\ \abs{M\tr s
\oplus S} = 2\ell_2}} \norm{\wh{\Ac}^t_{M_t\tr s}\wh{f}_{t-1}(S)}_1\text{.}
\end{align*}
Now we can write the total mass on weight-$\ell_2$ coefficients of $\wh{f}_{t}$
in terms of the contributions from each weight of coefficients of
$\wh{f}_{t-1}$. Grouping $s \in \bool^{\alpha n}$ by the (always even) weight
of $M\tr s \oplus S$, we get, 
\[
\sum_{\substack{S \in \bool^n\\\abs{S} = 2\ell_2}} \norm{\wh{f}_t(S)}_1 \le
\sum_{\ell_1 = 0}^{\floor{n/2}} \tau_t(\ell_1,\ell_2)
\]
where $\tau_t(\ell_1, \ell_2)$ is a bound on the amount of ``mass transferred''
between coefficients of weight $\ell_1$ and $\ell_2$ at step $t$, given by
\[
\tau_t(\ell_1, \ell_2) \coloneqq \sum_{\substack{S \in \bool^n\\ \abs{S} =
2\ell_1}}\sum_{\substack{s \in \bool^{\alpha n}\\ \abs{M_t\tr s \oplus S} =
2\ell_2}} \norm{\wh{\Ac}^t_{M\tr s}\wh{f}_{t-1}(S)}_1\text{.}
\]
We now split $s \in \bool^{\alpha n}$ into smaller groups, based on parameters
$a, b, c \in \Nbb$, where $a$ is the number of edges from $M\tr s$ contained in
$S$, $b$ is the number with one endpoint in $S$ and $c$ is the number outside
of $S$ (and so requiring that $\abs{M\tr s \oplus S} = 2\ell_2$ is equivalent
to requiring that $\ell_1 + c - a = \ell_2$). Let $I_S(y)$ be the indicator
variable on $M_t\tr y$ being contained in $S$, and $B_S(y)$ on the $i\nth$
column of $M_t\tr$ having intersection exactly $1$ with $S$ whenever $y_i = 1$.
Then we have \[
\tau_t(\ell_1, \ell_2) \le \sum_{\substack{a,b,c \in \Nbb\\ \ell_1 + c - a = \ell_2}}\tau(\ell_1,a,b,c)
\]
for \[
\tau_t(\ell_1,a,b,c) \coloneqq \sum_{\substack{S \in \bool^{n}\\ \abs{S} =
2\ell_1}} \sum_{\substack{u \in \bool^{\alpha n}\\\abs{u} =
a}}\sum_{\substack{v \in \bool^{\alpha n}\\ \abs{v} = b}}
I_S(u)B_S(v)\sum_{\substack{w \in \bool^{\alpha n}\\ \abs{w} = c}}
\norm{\wh{\Ac}^t_{M_t\tr(u \oplus v \oplus w)}\wh{f}_{t-1}(S)}_1\text{.}
\]
We are now ready to bound this ``mass transfer'' in terms of the of
weight-$\ell_1$ coefficients of $\widehat{f}_{t-1}$, in expectation over $M_t$.
This will hold for any values of $(M_s)_{s=1}^{t-1}$ and $(x_s)_{s=1}^t$ (note
that $M_t$ is distributed independently of all these variables).
\begin{lemma}
\label{lm:transfer}
\[
\E[M_t]{\tau_t(\ell_1,a,b,c)} \le 2^{O(\ell_1 + c)} \cdot (\alpha \cdot a/n)^a \cdot
(n/c)^{c/2} \cdot \max\set{(\qubits/c)^{c/2}, 1} \cdot \sum\limits_{\substack{S \in
\bool^{n}\\ \abs{S} = 2\ell_1}} \norm{\wh{f}_{t-1}(S)}_1 
\]
\end{lemma}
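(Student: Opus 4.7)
The plan is to split the bound into two independent pieces: a Fourier-analytic factor controlling the sum over $w$, and a combinatorial factor controlling the expectation over $M_t$ of $I_S(u)B_S(v)$.

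First I would rewrite the channel Fourier transforms so that they live on $\bool^{\alpha n}$ rather than $\bool^n$. Since $\Ac^t_x$ depends on $x$ only through $M_tx$, one can write $\Ac^t_x = \Bc^t_{M_tx}$ for some family $(\Bc^t_y)_{y \in \bool^{\alpha n}}$. Because the rows of $M_t$ are parities of disjoint pairs of bits of $x$, $M_tx$ is uniform on $\bool^{\alpha n}$ for uniform $x$, and a direct computation yields $\wh{\Ac}^t_{M_t\tr s} = \wh{\Bc}^t_s$. For each $S$ with $\abs{S}=2\ell_1$, set $h_S(y) = \Bc^t_y\wh{f}_{t-1}(S)$; trace contractivity of CPTP maps on Hermitian operators (and $\wh{f}_{t-1}(S)$ is Hermitian as a $\pm 1$-combination of density matrices) gives $\norm{h_S(y)}_1 \le \norm{\wh{f}_{t-1}(S)}_1$.

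The key move is a phase-shift trick. For each fixed $u, v$, the function $\tilde h(y) = (-1)^{(u\oplus v)\cdot y}h_S(y)$ has the same pointwise trace norm as $h_S$ and satisfies $\wh{\tilde h}(w) = \wh{h_S}(u\oplus v\oplus w) = \wh{\Ac}^t_{M_t\tr(u\oplus v\oplus w)}\wh{f}_{t-1}(S)$. Applying Lemma~\ref{lm:lodegbound} to $\tilde h/\norm{\wh{f}_{t-1}(S)}_1$ on $\bool^{\alpha n}$ in the case $c \le \qubits$, and Cauchy--Schwarz combined with the first line of Lemma~\ref{lm:hidegbound} when $c > \qubits$, yields
\[
\sum_{\abs{w}=c}\norm{\wh{h_S}(u\oplus v\oplus w)}_1 \le 2^{O(c)}\paren*{\frac{n}{c}}^{c/2}\max\set*{\paren*{\frac{\qubits}{c}}^{c/2}, 1}\norm{\wh{f}_{t-1}(S)}_1,
\]
a bound independent of $u$ and $v$, so it factors out of the sums over them.

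What remains is the combinatorial quantity $\sum_{\abs{u}=a,\abs{v}=b}I_S(u)B_S(v) = \binom{N_I(S)}{a}\binom{N_B(S)}{b}$, where $N_I(S)$ and $N_B(S)$ count the edges of $M_t$ internal to and crossing the boundary of $S$ (any configuration with $u$ and $v$ overlapping contributes zero). A factorial-moment calculation for the random matching --- viewing it as a sequential draw of $\alpha n$ vertex-disjoint edges, and using that a specific edge is internal with probability ${\sim}(2\ell_1/n)^2$ and boundary with probability ${\sim}4\ell_1/n$ --- gives $\E_{M_t}\binom{N_I(S)}{a}\binom{N_B(S)}{b} \le 2^{O(a+b)}(\alpha\ell_1^2/n)^a(\alpha\ell_1)^b/(a!\,b!)$. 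Using $a \le \ell_1$, $b \le 2\ell_1$, and $\alpha \le 1/D$: the discrepancy between $(\alpha\ell_1^2/(an))^a$ and the desired $(\alpha a/n)^a$ is the factor $(\ell_1/a)^{2a}$, maximized at $a = \ell_1/e$ with value $e^{2\ell_1/e}$; and the $b$-factor is at most $e^{\alpha\ell_1} \le e^{\ell_1/D}$. Both absorb into $2^{O(\ell_1)}$, and combining with the Fourier factor above gives the claimed bound. The main obstacle is the factorial-moment calculation itself: $N_I(S)$ and $N_B(S)$ are correlated via the shared vertex budget of the matching, so one must enumerate ordered sequences of $a+b$ disjoint edges and estimate conditional probabilities of the form $\binom{2\ell_1-2j}{2}/\binom{n-2j}{2}$ carefully, tracking that the $(1 + O((a+b)/n))$ multiplicative errors remain within $2^{O(\ell_1+c)}$ in every regime of the parameters.
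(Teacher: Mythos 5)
Your proposal is correct and follows essentially the same route as the paper's proof: the identical phase-shift trick reducing $\sum_{\abs{w}=c}\norm{\wh{\Ac}^t_{M_t\tr(u\oplus v\oplus w)}\wh{f}_{t-1}(S)}_1$ to the Fourier mass of a trace-norm-bounded function on $\bool^{\alpha n}$ (handled by Lemmas~\ref{lm:lodegbound} and~\ref{lm:hidegbound}), followed by bounding the expectation of $\sum_{u,v}I_S(u)B_S(v)$ over the random matching. Your factorial-moment packaging of the combinatorial step is, by exchangeability of the matching edges, the same computation the paper performs via $\prob{I_S(u)=1}$ and $\E{B_S(v)\mid I_S(u)=1}$ times the number of index pairs.
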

\begin{proof}
We start by bounding the innermost sum in $\tau_t$, regardless of $u$, $v$, or
$M_t$. Let $\Bc^t_y$ denote the channel applied by player $t$ on seeing $y$, so
that $\Ac_x = \Bc_{M_t x}$ for all $x$. For any fixed $S \in \bool^n$, let $g:
\bool^{\alpha n}\rightarrow
\Cbb^{2^\qubits \times 2^\qubits}$ be given by \[
g(y) = (-1)^{(u \oplus v)\cdot y}\Bc^t_{y}\wh{f}_{t-1}(S)/\norm{\wh{f}_{t-1}(S)}_1\text{.}
\]
Then for all $w \in \bool^{\alpha n}$,
\begin{align*}
\wh{g}(w) &= \frac{1}{2^{\alpha n}}\sum_{y \in \bool^{\alpha n}} (-1)^{(u \oplus v \oplus w)\cdot
y}\Bc^t_{y}\wh{f}_{t-1}(S)/\norm{\wh{f}_{t-1}(S)}_1\\
&= \frac{1}{2^{\alpha n}}\sum_{y \in \bool^{\alpha
n}}\frac{1}{2^{(1-\alpha)n}}\sum_{\substack{x \in \bool^n\\ M_tx = y}} (-1)^{(u
\oplus v \oplus w)\cdot M_tx} \Ac^t_x \wh{f}_{t-1}(S)/\norm{\wh{f}_{t-1}(S)}_1\\
&= \frac{1}{2^n}\sum_{x \in \bool^n} (-1)^{M_t\tr (u \oplus v \oplus w) \cdot
x} \Ac^t_x \wh{f}_{t-1}(S)/\norm{\wh{f}_{t-1}(S)}_1\\
&= \wh{\Ac}^t_{M_t\tr(u \oplus v \oplus w)}\wh{f}_{t-1}(S) / \norm{\wh{f}_{t-1}(S)}_1
\end{align*}
and as applying a quantum channel can only shrink the trace norm,
$\norm{g(x)}_1 \le 1$ for all $x$. So we may apply Lemmas~\ref{lm:lodegbound}
and~\ref{lm:hidegbound} to obtain 
\begin{align*}
\frac{1}{\norm{\wh{f}_{t-1}(S)}_1}\sum_{\substack{w \in \bool^{\alpha n}\\ \abs{w} = c}}
\norm{\wh{\Ac}^t_{M_t\tr(u \oplus v \oplus w)}\wh{f}_{t-1}(S)}_1 &=
\sum_{\substack{w \in \bool^{\alpha n}\\ \abs{w} = c}}
\norm{\wh{g}(w)}_1\\
&\le \begin{cases}
\paren*{\frac{\sqrt{\bO{\qubits \cdot \alpha n}}}{c}}^c & \mbox{$c \le \qubits$}\\
\paren*{\frac{\bO{\alpha n}}{c}}^{c/2} &\mbox{$c > \qubits$}
\end{cases}\\
&\le \begin{cases}
2^{O(c)}\cdot c^{-c} \cdot \paren*{\qubits n}^{c/2} & \mbox{$c \le \qubits$}\\
2^{O(c)} \cdot c^{-c/2} \cdot n^{c/2} & \mbox{$c > \qubits$}
\end{cases}\\
&= 2^{O(c)} \cdot (n/c)^{c/2} \cdot \max\set{(\qubits/c)^{c/2},1}\text{.}
\end{align*}
With the inner sum bounded independently of $u$, $v$, and $M_t$, it will then suffice to have a bound on \[
\E[M_t] {\sum_{\substack{u \in \bool^{\alpha n}\\\abs{u} =
a}}\sum_{\substack{v \in \bool^{\alpha n}\\ \abs{v} = b}} I_S(u)B_S(v)}
\]
that holds for all $S$.

First note that, there are $\binom{n}{2a}$ possible sets of vertices that can
be matched by a size-$a$ subset of the matching $M_t$, and $\binom{\ell_1}{2a}$
of them are contained in $S$. So $I_S(u)$ is $1$ with probability
$\binom{\ell_1}{2a}/\binom{n}{2a}$. 

Then, for any $u$, conditioned on $I_S(u) = 1$, any $v$ not disjoint from $u$
has probability zero of having $B_S(v) = 1$, as no edge can have both endpoints
in $S$ while also having exactly one endpoint. Meanwhile, for a size-$b$
disjoint $v$, there are $2^b$ ways to choose one endpoint from each edge in the
subset of $M_t$ indexed by $v$, and then a $\binom{b}{\ell_1 - a}/\binom{b}{n -
a}$ probability that these endpoints are all in $S$, so $\E{B_S(v) | I_S(u) =
1} \le 2^b \binom{b}{\ell_1 - a}/\binom{b}{n - a}$.

Therefore, and assuming $a, b \le \ell_1$ and $\ell_1 \le n/2$ as
otherwise the lemma would hold trivially,
\begin{align*}
\E[M_t] {\sum_{\substack{u \in \bool^{\alpha n}\\\abs{u} = a}}\sum_{\substack{v
\in \bool^{\alpha n}\\ \abs{v} = b}} I_S(u)B_S(v)} & \le \sum_{\substack{u \in
\bool^{\alpha n}\\\abs{u} = a}}\sum_{\substack{v \in \bool^{\alpha n}\\ \abs{v}
= b}} \frac{\binom{\ell_1}{2a}}{\binom{n}{2a}} \cdot 2^b \cdot \frac{\binom{\ell_1 -
a}{b}}{\binom{n - a}{b}}\\
&\le \sum_{\substack{u \in \bool^{\alpha n}\\\abs{u} = a}}\sum_{\substack{v \in
\bool^{\alpha n}\\ \abs{v} = b}} \frac{2^{\ell_1}}{(n/2a)^{2a}} \cdot
2^b \cdot \frac{2^{\ell_1}}{((n - a)/b)^b}\\
&=  \sum_{\substack{u \in \bool^{\alpha n}\\\abs{u} = a}}\sum_{\substack{v \in
\bool^{\alpha n}\\ \abs{v} = b}} 2^{O(\ell_1)} \cdot a^{2a} b^b \cdot
n^{-2a - b}\\
&= \binom{\alpha n}{a} \binom{\alpha n}{b} 2^{O(\ell_1)} \cdot a^{2a} b^b \cdot
n^{-2a - b}\\
&\le 2^{O(\ell_1)} \cdot (\alpha \cdot a)^a \cdot n^{-a}
\end{align*}
and so the lemma follows.
\end{proof}

This will give us the inductive step we need to prove
Lemma~\ref{lm:evdegbound}. We will need one further small lemma about integers
for our proof.
\begin{lemma}
\label{lm:factprod}
For any $x, y, z \in \Nbb_{>0}$ and $x \le y + z$, \[
\frac{x^x}{y^y \cdot z^z} \le \frac{2^{O(y + z)}}{(y + z - x)^{y + z - x}}
\]
\end{lemma}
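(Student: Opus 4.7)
My plan is to reduce this purely to the standard binary-entropy identity for sums $a+b=s$. After rearranging, the inequality is equivalent to
\[
\frac{x^x \cdot (y+z-x)^{y+z-x}}{y^y \cdot z^z} \le 2^{O(y+z)},
\]
using the convention $0^0 = 1$ (which is consistent with $\lim_{k\to 0^+} k \ln k = 0$). The key observation is that for any nonnegative $a, b$ with $a + b = s > 0$, one has the identity
\[
a \ln a + b \ln b = s \ln s - s H(a/s),
\]
where $H(p) = -p\ln p - (1-p)\ln(1-p)$ is the (natural-log) binary entropy, with the convention $H(0) = H(1) = 0$. This is a one-line calculation.

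I would then set $s = y + z$ and apply this identity twice: once to the pair $(y, z)$ and once to the pair $(x, y+z-x)$. Taking logarithms, the ratio of interest becomes
\[
\ln\!\left(\frac{x^x (y+z-x)^{y+z-x}}{y^y z^z}\right) = s\bigl(H(y/s) - H(x/s)\bigr).
\]
Since $H$ takes values in $[0, \ln 2]$ on $[0,1]$, the right-hand side is at most $s \ln 2 = (y+z)\ln 2$. Exponentiating yields the claimed bound with an explicit constant of $1$ in the exponent (i.e. the bound $2^{y+z}$), which is more than enough to give the stated $2^{O(y+z)}$.

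I do not anticipate any serious obstacle here: both the entropy identity and the boundedness of $H$ on $[0,1]$ are elementary, so the only care needed is to handle the boundary case $x = y + z$ (where $y+z-x = 0$) correctly using the $0^0 = 1$ convention, which is already built into the entropy identity. The hypothesis $x, y, z \ge 1$ (and in particular $y, z \ge 1$, so $s > 0$) ensures the identity applies without degeneracy on the $(y, z)$ side.
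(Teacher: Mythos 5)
Your proof is correct, and it takes a genuinely different route from the paper's. The paper argues via factorials and Stirling-type estimates: after assuming WLOG $y \le z$, it writes $y^y(2z)^z \ge (y+z)!$, applies $(y+z)! \ge ((y+z)/e)^{y+z}$, and then uses $x^x \le (y+z)^x$ and $(y+z-x)^{y+z-x} \le (y+z)^{y+z-x}$ to absorb the remaining factors. Your argument instead rewrites the claim as a single ratio bounded by $2^{O(y+z)}$ and invokes the binary-entropy identity $a\ln a + b\ln b = s\ln s - sH(a/s)$ for $a+b=s$, applied twice with $s = y+z$; since $H$ ranges over $[0,\ln 2]$, the log of the ratio is $s\bigl(H(y/s) - H(x/s)\bigr) \le s\ln 2$, giving the explicit bound $2^{y+z}$. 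Your route is arguably cleaner: it avoids the WLOG step and the chain of factorial inequalities, and it makes the constant in the exponent explicit ($1$), whereas the paper's argument hides it in the big-$O$. You are also careful about the boundary case $x = y+z$ via the $0^0 = 1$ / $H(1) = 0$ convention, which is exactly the right thing to note. Both proofs are valid; yours is more conceptual, the paper's is more hands-on.
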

\begin{proof}
Without loss of generality we will assume that $y \le z$.
\begin{align*}
\frac{x^x}{y^y \cdot z^z} &= 2^z \frac{x^x}{y^y \cdot (2z)^z}\\
&\le  2^z \frac{x^x}{(y + z)!}\\
&\le 2^{O(y + z)} \frac{x^x}{(y + z)^{y + z}}\\
&\le 2^{O(y + z)} \frac{1}{(y + z)^{y + z - x}}\\
&\le \frac{2^{O(y + z)}}{(y + z - x)^{y + z - x}}
\end{align*}
\end{proof}
\evdegbound*
\begin{proof}
We proceed by induction on $t$, proving the slightly stronger version that
includes $t = 0$. For $t = 0$, the result holds trivially because $f_0$ is
constant and so $\norm{\wh{f}_0(S)}_1 \le 1$ if $S = \emptyset$ and $0$
otherwise.

Now, suppose there is a constant $C > 0$ such that the lemma holds for $t -
1$. We will show that, if $C$ is chosen to be a large enough constant, \[
\sum_{\substack{S \in \bool^n\\ \abs{S} = 2\ell}}
\E[(M_s)_{s=1}^t]{\norm{\widehat{f}_{t}(S)}_1} \le \paren*{\frac{C^{t}n}{\ell}
\cdot \max\set*{\frac{\qubits}{\ell},1}}^{\ell/2}
\]
for any given $\ell \in \Nbb$. We will split this into two parts, the low
degree case ($\ell \le \qubits$) and the high degree case ($\ell > \qubits$).

\paragraph{Low Degree Case} We have
\begin{equation}
\sum_{\substack{S \in \bool^n\\ \abs{S} = 2\ell}} \norm{\widehat{f}_{t}(S)}_1
\le \sum_{\ell' =1}^\qubits \tau_t(\ell',\ell) + \sum_{\ell' =
\qubits+1}^{\floor{n/2}} \tau_t(\ell',\ell)\text{.} \label{eq:taudec}
\end{equation}
We will start by using the inductive hypothesis and Lemma~\ref{lm:transfer} to
bound the first sum in~\eqref{eq:taudec}. Recall that we can bound
$\tau_t(\ell',\ell)$ as a sum over $\tau_t(\ell', a, b, c)$ such that $\ell =
\ell' + c - a$, and note that $\tau_t(\ell', a, b, c)$ can only be non-zero 0
if $2a + b \le 2\ell'$. For any $a, b, c$ satisfying both criteria (which as
$\ell'$ and $\ell$ are both at most $\qubits$, also imply $c < \qubits$, we
have, by Lemma~\ref{lm:transfer} and our inductive hypothesis,
\begin{align*}
	 \E[(M_s)_{s=1}^t]{\tau_t(\ell', a, b, c)} &\le 2^{O(\ell' + c)} \cdot
	 (a/n)^a \cdot (\sqrt{\qubits n}/c)^c \cdot
	 \paren*{\frac{\sqrt{C^{t-1}\qubits n}}{\ell'}}^{\ell'}\\ 
	 &= 2^{O(\ell' + c)} \cdot \frac{a^a}{c^c {\ell'}^{\ell'}} \cdot (\qubits
	 n)^{(\ell' + c - a)/2} \cdot (\qubits /n)^{a/2} \cdot C^{\frac{\ell'(t -
	 1)}{2}}\\ 
	 &\le  2^{O(\ell' + c)} \cdot \paren*{\frac{\sqrt{\qubits n}}{\ell}}^\ell
	 \cdot (\qubits /n)^{\max\set{(\ell' - \ell)/2, 0}} \cdot C^{\frac{\ell'(t -
	 1)}{2}}
\end{align*}
Now, using the fact that $a, b \le \ell'$ and so there are at most $\ell'$
possibilities for each (and since $\ell - \ell' = c - a$, at most $\ell'$
possibilities for $c$), we have \[
 \E[(M_s)_{s=1}^t]{\tau_t(\ell',\ell)} \le  2^{O(\ell' + c)} \cdot
\paren*{\frac{\sqrt{\qubits n}}{\ell}}^\ell \cdot (\qubits /n)^{\max\set{(\ell'
- \ell)/2, 0}} \cdot C^{\ell'(t - 1)/2}
\]
and so, using the fact that $c = \ell - \ell' + a \le \ell$, 
\begin{align*}
\sum_{\ell' = 1}^\qubits   \E[(M_s)_{s=1}^t]{\tau_1 (\ell', \ell)} &=
\sum_{\ell'=1}^\ell  \E[(M_s)_{s=1}^t]{\tau_1 (\ell', \ell)} + \sum_{\ell' =
\ell + 1}^\qubits   \E[(M_s)_{s=1}^t]{\tau_1 (\ell', \ell)}\\
&\le \paren*{\frac{\sqrt{\qubits n}}{\ell}}^\ell \cdot \paren*{\sum_{\ell' = 1}^\ell
2^{O(\ell + \ell')} \cdot C^{\frac{\ell'(t-1)}{2}} + \sum_{\ell'= \ell+1}^\qubits  2^{O(\ell + \ell')} \cdot
C^{\frac{\ell'(t-1)}{2}} (\qubits /n)^{\frac{(\ell' - \ell)}{2}}}\\
&\le \paren*{\frac{\sqrt{\qubits n}}{\ell}}^\ell \cdot \paren*{\sum_{\ell' = 1}^\ell
2^{O(\ell + \ell')} \cdot C^{\frac{\ell'(t-1)}{2}} + \sum_{\ell'= \ell+1}^\qubits  2^{O(\ell + \ell')} \cdot
C^{\frac{\ell'(t-1)}{2}} D^{\frac{-t(\ell' - \ell)}{2}}}\\
&\le \paren*{\frac{\sqrt{\qubits n}}{\ell}}^\ell \cdot 2^{O(\ell)} \cdot C^{\frac{\ell(t - 1)}{2}}\\
&\le \frac{1}{2}\paren*{\frac{\sqrt{C^t\qubits n}}{\ell}}^\ell
\end{align*}
provided that $C$ is chosen to be a sufficiently large constant and $D$ is
chosen to be sufficiently larger than $C$.

Now, we use Lemma~\ref{lm:transfer}, the inductive hypothesis, and
Lemma~\ref{lm:hidegbound} to bound the second sum in~\eqref{eq:taudec}. Putting
these together, we have for $a,b,c$ such that $\ell = \ell' + c - a$, $\ell' >
\qubits  \ge \ell$, and $\tau_t(\ell',a,b,c)$ is non-zero (which in particular
implies $2a + b \le 2\ell'$),
\begin{align*}
 \E[(M_s)_{s=1}^t]{\tau_t(\ell', a, b , c)} &\le 2^{O(\ell' +  c)} \cdot (\alpha \cdot a/n)^a
\cdot (n/c)^{c/2} \cdot \sum_{\substack{S \in \bool^n\\ \abs{S} = 2\ell'}}
\norm{\widehat{f}_{t-1}(S)}_1\\
&\le 2^{O(\ell')} \cdot \alpha^a \cdot (a/n)^{a/2} \cdot (n/c)^{c/2}
\cdot \min\set*{\paren*{\frac{C^{t-1}n}{\ell'}}^{\ell'/2},
\paren*{\frac{n}{\ell'}}^{\ell'}} \cdot (a/n)^{a/2}\\\
&\le 2^{O(\ell')} \cdot \alpha^{\ell'-\ell} \cdot (a/n)^{a/2} \cdot (n/c)^{c/2}
\cdot \paren*{\frac{C^{t-1}n}{\ell'}}^{\ell/2} \cdot
\paren*{\frac{n}{\ell'}}^{(\ell' - \ell)} \cdot (a/n)^{(\ell' - \ell)/2}\\\
&\le 2^{O(\ell')} \cdot \alpha^{\ell'-\ell} \cdot
\frac{a^{a/2}}{c^{c/2}{\ell'}^{\ell'/2}} \cdot n^{(\ell' + c - a)/2} \cdot
C^{(t - 1)\ell/2} \cdot (a/\ell')^{(\ell'-\ell)/2}\\
&\le 2^{O(\ell')} \cdot \alpha^{\ell' - \ell} \cdot
 \paren*{\frac{n}{\ell' + c - a}}^{(\ell' + c - a)/2} \cdot C^{(t - 1)\ell/2}\\
&\le 2^{O(\ell')} \cdot \alpha^{\ell' - \ell} \cdot
 \paren*{\frac{n}{\ell}}^{\ell/2} \cdot C^{(t - 1)\ell/2}\\
&\le 2^{O(\ell')} \cdot \alpha^{\ell' - \ell} \cdot
 \paren*{\frac{\sqrt{\qubits n}}{\ell}}^{\ell} \cdot C^{(t - 1)\ell/2}\\
&\le 2^{O(\ell')} \cdot (1/D)^{\ell' - \ell}\cdot
 \paren*{\frac{\sqrt{C^{t-1}\qubits n}}{\ell}}^{\ell} \text{.}
\end{align*}
making use of Lemma~\ref{lm:factprod}. Again using the fact that there are no
more than $\ell'$ choices for each of $a, b, c$ that give non-zero $\tau_t$, we
get \[
 \E[(M_s)_{s=1}^t]{\tau(\ell', \ell)} \le 2^{O(\ell')} \cdot (1/D)^{\ell' -
 \ell}\cdot \paren*{\frac{\sqrt{C^{t-1}\qubits n}}{\ell}}^{\ell} 
\]
and so summing over all $\ell' > m$ gives us \[
\sum_{\ell' = \qubits  + 1}^{\floor{n/2}}  \E[(M_s)_{s=1}^t]{\tau(\ell', \ell)}  \le
\frac{1}{2}\paren*{\frac{\sqrt{C^t\qubits n}}{\ell}}^\ell
\]
provided $C$ and $D$ are chosen to be sufficiently large. Adding our bounds on
the first and second sum in~\eqref{eq:taudec} completes the inductive step for
$\ell \le \qubits $. 

\paragraph{High Degree Case} For this case we will use the fact that $(x/y)^y
\le \binom{x}{y} \le 2^x$ when $y \in \brac{x}$. Let $\ell \in (\qubits ,
n/2\rbrack \cap \Zbb)$. For any $\ell', a, b, c$ such that $a, b \le \ell'$ and
$\ell' + c - a = \ell$ we get, by Lemma~\ref{lm:transfer}, our inductive
hypothesis, and Lemma~\ref{lm:hidegbound},
\begin{align*}
\tau_t(\ell',a,b,c)  &\le 2^{O(\ell' + c)} \cdot (\alpha \cdot a/n)^a \cdot
(n/c)^{c/2} \cdot \max\set{(\qubits /c)^{c/2}, 1} \\
&\phantom{\le} \cdot \min\set*{\paren*{\frac{C^{t-1}n}{\ell'} \cdot \max\set*{\frac{\qubits }{\ell'},
1}}^{\ell'/2}, \paren*{\frac{n}{\ell'}}^{\ell'}}\\
& \le 2^{O(\ell' + c)} \cdot \alpha^a \cdot (a/n)^{a/2} \cdot
(n/c)^{c/2} \cdot 2^\beta \\
&\phantom{\le} \cdot \paren*{\frac{C^{t-1}n}{\ell'}}^{\min\set{\ell/2, \ell'/2}} \cdot 2^{\ell'} \cdot
(n/\ell')^{\max\set{(\ell'- \ell), 0}}\cdot (a/n)^{a/2} \\
&\le 2^{O(\ell + \ell')} \cdot \frac{a^{a/2}{\ell'}^{\ell'/2}}{c^{c/2}}
\cdot n^{(\ell' - a + c)/2} \cdot C^{(t-1)\ell/2} \cdot \alpha^{\max\set{\ell' - \ell,0}} \\
&\phantom{\le} \cdot C^{(t-1)\min\set*{(\ell' - \ell)/2, 0}}\cdot
(n/\ell')^{\max\set*{(\ell - \ell')/2, 0}} \cdot (a/n)^{\max\set*{(\ell -
\ell')/2,0}}\\&
\le 2^{O(\ell + \ell')} \cdot \paren*{\frac{C^{t-1} n}{\ell}}^{\ell/2} \cdot
\max\set*{1/C, 1/D}^{\abs{\ell' - \ell}}
\end{align*}
and once again we use the fact that there are no more than $\ell'$ choices for
$a, b, c$ to obtain \[
 \E[(M_s)_{s=1}^t]{\tau(\ell', \ell)} \le 2^{O(\ell' + c)} \cdot
 \paren*{\frac{C^{t-1} n}{\ell}}^{\ell/2} \cdot \max\set*{1/C, 1/D}^{\abs{\ell'
 - \ell}}\text{.}
 \]
We may then sum over all $\ell'$ to obtain\[
\sum_{\ell' = \qubits  + 1}^{\floor{n/2}}  \E[(M_s)_{s=1}^t]{\tau(\ell', \ell)}  \le
\paren*{\frac{\sqrt{C^tn}}{\ell}}^{\ell/2}
\]
provided $C$ and $D$ are chosen to be sufficiently large.
\end{proof}

\subsection{Completing the Bound}
\begin{theorem}
\label{thm:dihplb}
For any constant $\delta > 0$, there exists a constant $C > 0$ such that, for
any $\alpha < 1/C$, $n, T \in \Nbb$, $\dihp(n,\alpha,T)$ requires at least
$n/C^T$ communication to solve with probability $1/2 + \delta$ in the one-way
quantum communication model.
\end{theorem}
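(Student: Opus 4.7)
The plan is to combine Lemma~\ref{lm:phibound} with Lemma~\ref{lm:evdegbound} by taking expectation over the matchings. Since $M_{t+1}$ is drawn independently of $(M_s)_{s=1}^t$ and $\wh{f}_t$ depends only on $(M_s)_{s=1}^t$, I can factor the expectation: first average over $M_{t+1}$ with $\wh{f}_t$ fixed, then average over $(M_s)_{s=1}^t$.

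First I would compute, for fixed $\wh{f}_t$ and each $\ell \in [\alpha n]$,
\[
\E_{M_{t+1}} \sum_{\substack{s \in \bool^{\alpha n}\\ \abs{s}=\ell}} \norm{\wh{f}_t(M_{t+1}\tr s)}_1 = \frac{\binom{\alpha n}{\ell}}{\binom{n}{2\ell}} \sum_{\substack{S \in \bool^n\\ \abs{S} = 2\ell}} \norm{\wh{f}_t(S)}_1.
\]
The key observation is that, for any $S$ of size $2\ell$, the number of $s$ with $|s|=\ell$ and $M_{t+1}\tr s = S$ equals the indicator that $M_{t+1}|_S$ is a perfect matching on $S$ (since the restriction of a matching is a single matching). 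Summing this indicator over all $S$ and using the fact that the total count equals $\binom{\alpha n}{\ell}$, the symmetry of the uniform $\alpha n$-matching distribution under vertex permutations yields probability $\binom{\alpha n}{\ell}/\binom{n}{2\ell}$ for each $S$. Combining with Lemma~\ref{lm:evdegbound}, the expected total mass from weight-$2\ell$ coefficients at step $t$ is at most
\[
\frac{\binom{\alpha n}{\ell}}{\binom{n}{2\ell}} \cdot \paren*{\frac{C^t n}{\ell} \cdot \max\set*{\frac{\qubits}{\ell}, 1}}^{\ell/2}.
\]

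Using the estimate $\binom{\alpha n}{\ell}/\binom{n}{2\ell} \le (O(\alpha \ell/n))^\ell$, this bound simplifies to $(O(\alpha^2 C^t) \cdot \min\set{\qubits,\ell}/n)^{\ell/2}$. In the regime $\ell \le \qubits$, this is at most $(O(\alpha^2 C^T \qubits / n))^{\ell/2}$, which is geometric and sums to something small provided $\qubits \le n/C'^T$ for a sufficiently large constant $C'$. In the regime $\ell > \qubits$, we have $\ell/n \le \alpha$, so the bound becomes $(O(\alpha^3 C^T))^{\ell/2}$, again a convergent geometric sum when $\alpha \le 1/C'$. Summing over $t \in [T]$ and $\ell \in [\alpha n]$ then gives $\E_{(M_t)_t}\norm{\phi_T - \phi_0}_1 \le \delta$ by Lemma~\ref{lm:phibound}.

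The final step is to conclude the communication lower bound. Since the success probability of any protocol can be written as $\frac{1}{2} + \frac{1}{2}\E_{(M_t)_t}\norm{\rho_Y(M) - \rho_N(M)}_1/2$, where $\rho_Y(M) = \phi_T$ and $\rho_N(M) = \phi_0$, the bound on the expected trace distance directly limits the success probability to at most $1/2 + \delta$. The main obstacle is the combinatorial estimate relating $\binom{\alpha n}{\ell}/\binom{n}{2\ell}$ to the Fourier mass bound in a way that keeps the sum over $\ell$ geometric in both the low-degree and high-degree regimes simultaneously; everything else follows from careful but routine manipulation of the estimates already established in Lemmas~\ref{lm:phibound} and~\ref{lm:evdegbound}.
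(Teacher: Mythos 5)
Your overall strategy is the same as the paper's: factor the expectation using independence of $M_{t+1}$ from $(M_s)_{s\le t}$, compute $\prob{M_{t+1}\tr s = S} = 1/\binom{n}{2\ell}$ by symmetry, multiply by $\binom{\alpha n}{\ell}$, and combine with the Fourier-mass bounds. Your low-degree analysis ($\ell \le \qubits$) is correct. But there is a genuine gap in the high-degree regime. For $\ell > \qubits$ you apply Lemma~\ref{lm:evdegbound} and arrive at $(O(\alpha^3 C^T))^{\ell/2}$, claiming this is a convergent geometric sum once $\alpha \le 1/C'$ for a \emph{constant} $C'$. It is not: the theorem must hold for all $T$ with a $T$-independent threshold on $\alpha$ (and in the application to Theorem~\ref{thm:maxcutlb}, $T = \Theta(1/\eps^2)$ while $\alpha = \Theta(\eps)$, so $\alpha^3 C^T \approx \eps^3 C^{1/\eps^2} \gg 1$). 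The inductive bound of Lemma~\ref{lm:evdegbound} carries a factor $C^{t\ell/2}$ that cannot be cancelled by powers of $\alpha$ alone. The paper avoids this by using Lemma~\ref{lm:hidegbound} (the non-inductive, hypercontractivity-plus-Cauchy--Schwarz bound $(O(n)/\ell)^{\ell}$, with exponent $\ell$ rather than $\ell/2$ and no $C^t$ factor) for the high-degree terms; multiplying by $(O(\alpha\ell)/n)^\ell$ then yields $(O(\alpha))^\ell$, which is genuinely geometric with a $T$-independent ratio. You need this substitution for your argument to close.

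A secondary, more minor omission: after fixing the above, the high-degree contribution summed over $t$ is of order $T/C^{\qubits}$, which is only below $\delta$ when $\qubits \gtrsim \log T$. The paper handles the regime $\qubits < \log T$ separately by a padding argument (any short protocol can be padded to $\qubits' \in (\log T, n/C^T)$ qubits without changing its success probability, and if that interval is empty the claimed bound is vacuous). Your write-up should address this case as well.
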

\begin{proof}
Let $C', D'$ correspond to the constants $C,D$ from Lemma~\ref{lm:evdegbound}.
We will choose $C$ to be larger than $\max{(C')^6,D', 2}$. Consider a protocol
using $\qubits \le n/C^T$ communication (with $C$ to be specified later). We
will start by assuming that $\qubits \ge \log T$, then show how to remove this
assumption. Using the notation of Section~\ref{sec:foured}, the probability
that the final player outputs the correct answer for any given matchings
$(M_t)_{t=1}^T$ is bounded by \[
\frac{1}{2} + \frac{1}{2} \norm{\phi_T - \phi_0}_1
\]
and so (applying Lemmas~\ref{lm:phibound} and~\ref{lm:evdegbound}, and
Lemma~\ref{lm:hidegbound} for higher-degree Fourier terms), the success
probability is bounded by $1/2$ plus 
\begin{align*}
\E[(M_t)_{t=1}^T]{\norm{\phi_T - \phi_0}_1} & \le \sum_{t=1}^{T-1}\sum_{s \in \bool^{\alpha
n}\setminus \set{\emptyset}} \E[(M_t)_{t=1}^T]{\norm*{\wh{f}_t(M_{t+1}\tr s)}_1}\\
&= \sum_{t=1}^{T-1} \sum_{\ell = 1}^{\alpha n} \sum_{\substack{S \in \bool^n\\
\abs{S} = 2\ell}} \sum_{s \in \bool^{\alpha n}}
\E[(M_s)_{s=1}^{t}]{\norm{\wh{f}(S)}_1 \cdot \prob[M_{t+1}]{M_{t+1}\tr s = S}}\\
&\le \sum_{t=1}^{T-1} \sum_{\ell = 1}^{\alpha n} \sum_{\substack{S \in
\bool^n\\ \abs{S} = 2\ell}} \E[(M_s)_{s=1}^{t}]{\norm{\wh{f}(S)}_1} \cdot
\binom{\alpha n}{\ell}/\binom{n}{2\ell} \\
&\le \sum_{t=1}^{T-1} \sum_{\ell = 1}^{\alpha n} \paren*{\frac{O(\alpha \cdot
\ell)}{n}}^\ell \sum_{\substack{S \in \bool^n\\ \abs{S} = 2\ell}}
\E[(M_s)_{s=1}^{t}]{\norm{\wh{f}(S)}_1}\\
&\le  \sum_{t=1}^{T-1} \paren*{\sum_{\ell = 1}^{\qubits}
\paren*{\frac{O(\ell)}{n}}^\ell \cdot \paren*{\frac{C^{t/3} \sqrt{\qubits
n}}{\ell}}^\ell +  \sum_{\ell = \qubits +1}^{\alpha n} \paren*{\frac{O(\alpha
\cdot \ell)}{n}}^\ell \cdot \paren*{\frac{\bO{n}}{\ell}}^{\ell}}\\
&\le \sum_{t=1}^{T-1} \paren*{\sum_{\ell = 1}^{\qubits }C^{-t\ell/6} +
\sum_{\ell =
\qubits +1}^{\alpha n} (1/C)^\ell}\\
&= \bO{C^{-1/6} + T/C^\qubits }\\
& < \delta
\end{align*}
provided $C$ is chosen to be large enough. So the protocol cannot achieve $2/3$ success probability.

We now have a constant $C > 0$ such that no protocol using $\qubits \in (\log
T, n/C^T)$ communication can succeed. We will now show that no protocol using
$\qubits < n/(2C)^T$ communicating can succeed, which will suffice to prove the
lemma. If $(\log T, n/C^T)$ contains at least one integer $\qubits'$, we can
``pad'' any $\qubits < \log T$ qubit protocol to use $\qubits'$ qubits by
adding $\qubits' - \qubits$ qubits that are never used, without affecting the
success probability. So as no $\qubits'$ qubit protocol can succeed, neither
can any $\qubits$ qubit protocol. Conversely, if the interval $(\log T, n/C^T)$
contains no integers, $n/C^T < \log T + 1$, and so $n/(2C)^T < (\log T + 1)/C^T
< 1$ as $C > 2$. So then the result holds trivially.
\end{proof}

Our main result now follows as a corollary.
\maxcutlb*
\begin{proof}
By setting the constant in $\bO{\cdot}$ large enough that $2^{\bO{1/\eps^2}}
\ge \ceil{C'/\eps} \cdot \ceil{{C'}^3/\eps^2}$, we may reduce to \dihp{} via
Theorem~\ref{thm:reduction} and then apply Theorem~\ref{thm:dihplb} to conclude
that any algorithm for $(2-\varepsilon)$-approximating \mcut\ with probability
$2/3$ requires at least $n/C^{\ceil{{C'}^3/\varepsilon^2}} - 2\log n$ bits,
where we use $C'$ for the constant $C$ in Theorem~\ref{thm:reduction}. Then
there is a constant $D$ such that $n/D^{1/\varepsilon^2} <
\max\set*{n/C^{\ceil{{C'}^3/\varepsilon^2}} - 2\log n, 1}$, and so the result
follows (as clearly any algorithm must use at least 1 qubit of storage).
\end{proof}

\section{Space Complexity Upper Bounds for Quantum Max-Cut}
\label{sec:qmcapprox}
Unlike in the classical case, where the trivial algorithm offers the best
result possible in $\lO{n}$ space, the trivial algorithm for \qmcut\ offers
only a 4-approximation. In this section we show how to attain a
$(2+\eps)$-approximation for unweighted \qmcut\ and a $(5/2 +
\eps)$-approximation to weighted \qmcut. We will show this by proving (in
Lemmas~\ref{lm:wmclb} and~\ref{lm:umclb}) that these approximations can be
derived from two quantities, the sum of the edge weights, and the sum of the
max-weight edge at each vertex (in the unweighted case, the number of edges and
non-isolated vertices, respectively). 

The first of these quantities can be calculated exactly with a single
counter---we show that the second can also be calculated in small space
(Lemma~\ref{lm:wstream}), giving us as a consequence the following result.
\qmcub*

\subsection{Approximating \qmcut\ from Simple Graph Parameters}
For a (weighted) graph $G = (V,E, w)$ we will write $m = \sum_{e \in E} w_e$
and $W = \sum_{u \in V} \max_{v \in N(u)} w_{uv}$. In the unweighted case we
will use the same definitions with every weight considered to be $1$ (so $m$ is
the number of edges and $W$ the number of non-isolated vertices). We have the
following upper bound for the \qmcut\ of a graph.
\begin{lemma}[Lemma 5, \cite{Anshu2020}] \label{lem:qmcut-bound}
The Quantum Max-Cut value of $G$ is at most $m/2 + W/4$.
\end{lemma}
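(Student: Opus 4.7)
The plan is to prove the operator inequality $Q \preceq (m/2 + W/4)\,I$, from which the claim on the maximum eigenvalue follows immediately. I will derive this from a per-vertex monogamy bound summed over $V$. For each vertex $u \in V$, let $Q_u := \sum_{v \in N(u)} w_{uv}\,h_{uv}$, where $h_{uv} = (I - X_u X_v - Y_u Y_v - Z_u Z_v)/4$ denotes the singlet projector on qubits $u,v$. The key \emph{local bound} to establish is
\[
Q_u \preceq \tfrac{1}{2}\paren*{\sum_{v \in N(u)} w_{uv} + W_u}\,I,
\]
where $W_u = \max_{v \in N(u)} w_{uv}$. Summing this over $u \in V$, every edge $uv$ contributes $w_{uv}h_{uv}$ both to $Q_u$ and to $Q_v$, so the left-hand side sums to $2Q$; the right-hand side sums to $\tfrac{1}{2}(2m + W) = m + W/2$, using $\sum_u \sum_{v \in N(u)} w_{uv} = 2m$ and $\sum_u W_u = W$. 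Dividing by two yields $Q \preceq (m/2 + W/4)\,I$, which gives the desired bound on $\qmcut(G)$.

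The crux is the local bound, which is a monogamy-of-entanglement statement on the star at $u$. Substituting the singlet form, it is equivalent to
\[
\sum_{v \in N(u)} w_{uv}\paren*{X_u X_v + Y_u Y_v + Z_u Z_v} \succeq -\paren*{\sum_{v \in N(u)} w_{uv} + 2W_u}\,I.
\]
Since the operator on the left acts nontrivially only on $\{u\} \cup N(u)$, it suffices to prove this on the corresponding star graph. Let $v_1$ be a max-weight neighbor of $u$, so $w_{uv_1} = W_u$. The plan is a sum-of-squares argument: use the trivial PSD identity $(X_u + X_{v_1})^2 + (Y_u + Y_{v_1})^2 + (Z_u + Z_{v_1})^2 \succeq 0$ to contribute $W_u(X_u X_{v_1} + Y_u Y_{v_1} + Z_u Z_{v_1}) \succeq -3 W_u\,I$, and combine it with a second PSD combination coupling $u$ to the remaining leaves $v_i$, $i \ge 2$, whose coefficients are tuned so that the non-commuting cross terms among the leaves cancel and the surviving constant matches $\sum w_{uv} + 2W_u$.

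The main obstacle is this SOS step. Applying $X_u X_v + Y_u Y_v + Z_u Z_v \succeq -3\,I$ independently to each neighbor would yield only $-3\sum_{v \in N(u)} w_{uv}\,I$, producing the weaker bound $\lambda_{\max}(Q) \le m$, which is insufficient. The sharpening from $3\sum w_{uv}$ to $\sum w_{uv} + 2W_u$ is exactly where monogamy of entanglement enters: if $u$ is in a singlet with $v_1$, its reduced state is maximally mixed, which quantitatively prevents strong anti-correlation with other neighbors. Encoding this trade-off as an operator identity with the correct weighted constant, rather than an estimate that is off by a factor, is the technical heart of the argument.
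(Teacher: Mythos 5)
Your global decomposition matches the argument the paper describes in the remark following this lemma: set $Q_u = \sum_{v \in N(u)} w_{uv} h_{uv}$ (the star around $u$), so $\sum_u Q_u = 2Q$ since each edge is counted by both endpoints; bound the top eigenvalue of each $Q_u$; sum and halve. The star bound you claim, $Q_u \preceq \tfrac{1}{2}\bigl(\sum_{v \in N(u)} w_{uv} + W_u\bigr) I$, is---up to the factor-of-two normalization difference noted in the paper's remark---precisely Lemma~5 of~\cite{Anshu2020}, and the arithmetic does yield $Q \preceq (m/2 + W/4)\,I$. So the reduction step is right and is the same one the paper uses.

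The gap is that the star bound itself is never proven, and you flag it yourself as ``the technical heart of the argument.'' The paper handles this by citing~\cite{Anshu2020}; you neither cite it nor supply a complete argument. The one square you write, $(X_u + X_{v_1})^2 + (Y_u + Y_{v_1})^2 + (Z_u + Z_{v_1})^2 \succeq 0$, only re-derives the weak $-3W_u I$ contribution you already note is insufficient, and the ``second PSD combination'' supposed to handle the remaining leaves is entirely unspecified. In particular, nothing you write controls the leaf--leaf cross terms $X_{v_i}X_{v_j}$, $Y_{v_i}Y_{v_j}$, $Z_{v_i}Z_{v_j}$ (for distinct $i,j \ge 2$) that any square mixing several $v_i$ will generate; getting those to cancel while landing on the exact weighted constant is the content of the monogamy bound, not a parameter-tuning afterthought. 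Either cite Lemma~5 of~\cite{Anshu2020}, as the paper does, or prove the star bound. For what it is worth, in the unweighted star a short complete proof exists via angular momentum: with $\vec{s}_v = \tfrac12(X_v,Y_v,Z_v)$ and $\vec{J} = \sum_{v\in N(u)}\vec{s}_v$ one has $Q_u = \tfrac{d}{4}I - \vec{s}_u\cdot\vec{J}$, and the Clebsch--Gordan coupling of a spin-$\tfrac12$ to the spin-$j$ sector of $\vec{J}$ (where $j \le d/2$) gives $\vec{s}_u\cdot\vec{J} \succeq -\tfrac{j+1}{2}I$, hence $\lambda_{\max}(Q_u) \le (d+1)/2$; the weighted version with $W_u$ requires further care, which is one more reason simply citing the reference is the cleaner route here.
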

\begin{remark}The \qmcut\ terms employed in~\cite{Anshu2020} are twice the ones we employ.  Their Lemma 5 gives a bound for stars, and taking half the sum of this bound applied to the star around each vertex in $G$ yields the bound stated in the above lemma.
\end{remark}

We will use this to obtain approximations to $\qmcut$ in terms of $m$ and $W$.
We will use the
following facts about $\qmcut$:
\begin{lemma}[\qmcut\ facts]\leavevmode
\label{lem:qmc-facts}
\begin{enumerate}[(a)]
\item \label{item:mc-qmc-values} 
The \qmcut\ value is at least half the classical max-cut value.\label{qmc:class}
\item \label{item:singlet-value}
For any given edge $uv$, we can earn energy $w_{uv}$ for it by
assigning a singlet to its two vertices.\label{qmc:singlet}
\item \label{item:star-assignment}
For a vertex with $d$ incident weight-1 edges, there is an assignment to
it and its neighbors that earns total energy $\frac{d+1}{2}$.\label{qmc:star}
\item \label{item:disjoint-set-assignment}
For any set of assignments to
disjoint subgraphs $(H_i)_{i \in I}$, there is an assignment that preserves the
energy of these assignments on their subgraphs while earning $1/4$ for every
edge \emph{between} a pair of subgraphs.\label{qmc:quarter}
\end{enumerate}
\end{lemma}
\begin{proof} For the following we let $Q_{uv} = (I - X_u X_v - Y_u Y_v - Z_u Z_v)/4$ be a \qmcut\ term acting on qubits $u$ and $v$.  

For \ref{item:mc-qmc-values}, if $\ket{\psi} = \ket{s_1\ldots s_n}$, with $s_u \in \{0,1\}$, then $\bra{\psi} X_u X_v \ket{\psi} = \bra{\psi} Y_u Y_v \ket{\psi} = 0$.  Thus, following Equation~\eqref{eq:mcut-Hamiltonian-value}, 
\begin{equation*}
\bra{\psi} Q_{uv} \ket{\psi} = \frac{1 - \bra{\psi} Z_u Z_v \ket{\psi}}{4} = \frac{1}{2} \frac{1-(1-2s_u)(1-2s_v)}{2}, 
\end{equation*}
and if we interpret each $s_u$ as assigning $u$ to a side of a cut, then $\bra{\psi} Q_{uv} \ket{\psi}$ is precisely half the value this cut earns on edge $uv$.

For \ref{item:singlet-value} we recall from
Equation~\eqref{eq:singlet-projector} that assigning a
singlet\footnote{Technically we mean taking the $n$-qubit state $\rho =
Q_{uv}$.} to qubits $u$ and $v$ earns a maximum weighted energy of $w_{uv}$.

For \ref{item:star-assignment}, consider a vertex $u$ with $d$ incident edges,
$\{uv\}_{v \in N}$ ($i$ may have additional incident edges to vertices outside
$N$). The proof of Lemma 5 in \cite{Anshu2020} demonstrates that there is a
quantum state $\ket{\psi}$ on $u \cup N$ such that $\bra{\psi} \sum_{uv : v \in
N} Q_{uv} \ket{\psi} = \frac{d+1}{2}$.  (Moreover, $\psi$ resides in the
single-excitation subspace,
$\text{span}\{\ket{10\ldots0},\ket{01\ldots0},\ldots,\ket{00\ldots1}\}$, and
its coefficients in this subspace correspond to the coefficients of a maximum
eigenvector of a $(d+1) \times (d+1)$ matrix, a graph Laplacian in fact.)

For \ref{item:disjoint-set-assignment} suppose we have assigned mixed states $\rho_i$ to disjoint subgraphs $H_i$ of our graph $G$.  In other words we assign the state $\rho = \otimes_i \rho_i$ to the qubits in $\cup_i H_i$.  We show that we may replace each $\rho_i$ with a state $\rho'_i$ so that
\begin{enumerate}
\item[(i)] $\rho'_i$ contains no 1-local terms, and 
\item[(ii)] $\tra[Q_{uv} \rho'_i] = \tra[Q_{uv} \rho_i]$, for all $uv \in H_i$. 
\end{enumerate}
Property (ii) ensures that the energy earned on edges within each $H_i$ are preserved.  Property (i) implies that $\rho' = \otimes_i \rho'_i$ cannot contain any terms of the form $X_u X_v$, $Y_u Y_v$, or $Z_u Z_v$ for $u$ and $v$ in different $H_i$; hence, $\tra[X_u X_v \rho'] =  \tra[Y_u Y_v \rho'] = \tra[Z_u Z_v \rho'] = 0$, and consequently $\tra[Q_{uv} \rho'] = 1/4$, for such $uv$.  

We construct $\rho'_i$ from $\rho_i$ as follows.  For a Hermitian operator $A$, let $A^-$ be the operator obtained from $A$ by negating the coefficient of each odd-local term.  We have $\tra[\rho^-_i] = \tra[\rho_i] = 1$, and we next show that $\rho^-_i \succeq 0$ since $\rho_i \succeq 0$.  Proceed by contrapositive and assume there is a $\ket{\psi}$ such that $\bra{\psi}  \rho^-_i \ket{\psi} < 0$.  Set $P = \ket{\psi}\bra{\psi}$.  We have 
\begin{equation}\label{eq:rho'-positivity}
\tra[\rho_i (P^-)^2] = \tra[\rho_i (P^2)^-] = \tra[\rho^-_i P^2] = \tra[\rho^-_i P] = \bra{\psi} \rho^- \ket{\psi} < 0,
\end{equation}
where the third equality follows because $P^2 = P$.  The first two equalities hold because for Hermitian $A$ and $B$ acting on $n$ qubits,
\begin{itemize}
\item $(A^-)^2 = (A^2)^-$: the only terms appearing in $B^2$ are those of the form $ab$ where $a,b$ are commuting terms of $B$; consequently $ab$ is odd-local if and only if exactly one of $a,b$ is.
\item $\tra[AB^-] = \tra[A^-B]$: we have  $\frac{1}{2^n}\tra[AB^-] = \langle u, v^- \rangle = \langle u^-, v \rangle = \frac{1}{2^n}\tra[A^-B]$, where $u,v$ are Bloch vectors\footnote{A \emph{Bloch} vector of a Hermitian operator $A$ acting on $n$ qubits is the vector of the $4^n$ real coefficients of the Pauli terms of $A$.} of $A,B$ respectively, and $w^-$ negates the entries of $w$ corresponding to odd-local terms.   
\end{itemize}
Since $\rho \succeq 0$, $\tra[\rho_i A^2]$ must be nonnegative for any Hermitian $A$, and Equation~\eqref{eq:rho'-positivity} demonstrates that $\rho^-_i \succeq 0$.  Finally we get the desired state $\rho'_i = (\rho_i + \rho^-_i)/2$, which has no odd-local terms, and the even-local terms remain the same as $\rho_i$.  The latter ensures that Property (ii) holds.
\end{proof}

\begin{lemma}
\label{lm:wmclb}
The Quantum Max-Cut value of $G$ is at least $m/5 + W/10$.
\end{lemma}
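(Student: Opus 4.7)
The plan is to follow the decomposition sketched in Figure~\ref{fig:tree_and_matching}. For each non-isolated vertex $u \in V$, pick a highest-weight incident edge $e_u$ (breaking ties arbitrarily), and partition the collection of such chosen edges into a set $M$ of edges $uv$ satisfying $e_u = e_v = uv$ (chosen by \emph{both} endpoints) and a set $F$ of edges chosen by exactly one endpoint. Let $\mu = \sum_{e \in M} w_e$ and $\phi = \sum_{e \in F} w_e$; since each non-isolated vertex $u$ contributes $w_{e_u}$ to $W$ and each $M$-edge is counted twice while each $F$-edge is counted once, we have $2\mu + \phi = W$. I would first verify that $M$ is a matching (immediate) and that $M \cup F$ is a forest by orienting each vertex toward its chosen neighbor; any directed cycle of length at least $3$ would produce a strictly decreasing cyclic sequence of edge weights, a contradiction, so the only directed cycles have length $2$ and correspond to edges of $M$, which contribute no undirected cycle.

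With this decomposition in hand I would exhibit two explicit lower bounds on $\qmcut(G)$. First, assigning the singlet to each edge of $M$ and applying fact~\ref{item:disjoint-set-assignment} to the disjoint subgraphs consisting of the $M$-edges together with every remaining vertex as its own subgraph earns $w_e$ on each $M$-edge (by fact~\ref{item:singlet-value}) and $w_e/4$ on every other edge, for a total of $\mu + (m - \mu)/4 = m/4 + 3\mu/4$. Second, since $M \cup F$ is a forest and hence bipartite, there is a classical cut cutting every edge of $M \cup F$, which earns classical weight at least $\mu + \phi = W - \mu$; by fact~\ref{item:mc-qmc-values}, $\qmcut(G) \geq (W - \mu)/2$.

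It then remains to show $\max\!\left(\tfrac{m}{4} + \tfrac{3\mu}{4},\ \tfrac{W - \mu}{2}\right) \geq \tfrac{m}{5} + \tfrac{W}{10} = \tfrac{2m + W}{10}$ for every feasible $\mu \in [0, W/2]$. A short computation shows the first bound meets the target exactly when $\mu \geq (2W - m)/15$, and the second when $\mu \leq 2(2W - m)/5$. When $2W \leq m$ the first condition holds trivially for all $\mu \geq 0$; when $2W > m$, we have $2(2W-m)/5 > (2W-m)/15$, so the two half-lines overlap and cover all of $[0, W/2]$, meaning at least one of the two bounds always meets the target. The main obstacle is precisely this balancing step: the two individual bounds are loose, and minimizing their maximum over the shape parameter $\mu$ to extract the $m/5 + W/10$ guarantee requires the case analysis above, whereas the forest claim and the two explicit assignments follow quite directly from the preceding \qmcut\ facts.
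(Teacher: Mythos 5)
Your proof is correct in its main line and follows essentially the same approach as the paper: you use the same decomposition of the per-vertex ``chosen'' edges into a matching $M$ (chosen by both endpoints) and a residual set $F$ with $2\mu + \phi = W$, the same two explicit lower bounds (singlet on $M$ plus $w_e/4$ elsewhere via facts~\ref{item:singlet-value} and~\ref{item:disjoint-set-assignment}, and a classical bipartite cut of the forest $M \cup F$ via fact~\ref{item:mc-qmc-values}), and the same balance-of-two-bounds step over $\mu \in [0, W/2]$. Your computation $\mu + (m - \mu)/4 = m/4 + 3\mu/4$ for the singlet assignment and your coverage argument for the two half-lines are both carefully done; the paper reaches the same final bound by locating the minimax point instead, and folds the $m \geq 2W$ case into a separate trivial branch, but these are just presentational differences.

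There is one small gap in your forest argument. You allow ties to be broken ``arbitrarily,'' but then claim a directed cycle of length $\geq 3$ would force a \emph{strictly} decreasing cyclic weight sequence. With independent arbitrary tiebreaking this can fail: take a triangle $uvw$ with all three edges of equal weight, let $u$ choose $uv$, $v$ choose $vw$, and $w$ choose $wu$; then $H$ is the triangle, not a forest, and the cyclic weight sequence is constant rather than strictly decreasing. The fix is the one the paper uses: break ties \emph{consistently} by fixing a total order on edges (weight, then some fixed index) and taking the largest incident edge in that order. Then around any directed cycle of length $\geq 3$ you get a strictly increasing chain in the total order returning to its start, a genuine contradiction, so $H$ is a forest and the rest of your argument goes through unchanged.
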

\begin{proof}
We use a modified version of an argument in the proof of Theorem~3\
in~\cite{Anshu2020}. If $m \ge 2W$, we can use the fact that the \qmcut\ value
is at least $m/4$ (by~\ref{qmc:class} and considering a random assignment) to
get a lower bound of \[
m/4 = m/5 + m/20 \ge m/5 + W/10\text{.}
\]
So suppose $m < 2W$. Now order the edges of $G$ by weight (with arbitrary but
consistent tiebreakers when edges have the same weight). Let $H$ be the
subgraph formed by choosing the maximum weight edge incident to each vertex.
Note that $H$ is a forest, as for any cycle, each edge must have greater weight
than at least one of the next or the previous edge in the cycle. But maintaining
this through the cycle would eventually require that the first edge in the
cycle had greater weight than itself.

Let $\Mc$ denote all the edges in $H$ that were ``chosen'' by (had the greatest
weight of any edge incident to) both their endpoints. This is then a matching.
Let $\Fc$ denote the rest of $H$. Let $M$ and $F$ denote the total weight of
each, so $2M + F = W$. As $\Mc$ is a matching we may use \ref{qmc:singlet} and
\ref{qmc:quarter} to choose an assignment to it that earns $w_e$ on each edge
$e$ in the matching, plus $w_e/4$ for every other edge $e$ in the graph, for a
total \qmcut\ value of $M + m/4$. 

As $\Fc \cup \Mc$ is a forest, and therefore has a value $F + M$ classical cut,
\ref{qmc:class} tells us that its \qmcut\ value is at least $F/2 + M/2$. So
substituting in $F = W - 2M$, the \qmcut\ value is at least \[
\max\set{M + m/4, W/2 - M/2}
\]
and so as $M \ge 0$, and $W/2 > m/4$, this is minimized when $M + m/4 = W/2 - M/2$, so when $3M/2 = W/2 - m/4$. We can therefore conclude that the \qmcut\ value is at least
\begin{align*}
m/12 + W/3 &= m/5 - 7m/60 + W/3\\
&\ge m/5 - 7W/30 + W/3\\
&= m/5 + W/10
\end{align*}
completing the proof.
\end{proof}

\begin{lemma}
\label{lm:umclb}
Let $G$ be unweighted. Then the Quantum Max-Cut value of $G$ is at least $m/4 + W/8$.
\end{lemma}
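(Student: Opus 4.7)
The plan is to follow the DFS-tree construction sketched in the paper's proof overview. First I reduce to the case where $G$ is connected and contains at least one edge: the QMC value, $m$, and $W$ are all additive over connected components, and isolated vertices contribute nothing to either side. For such a $G$, run a DFS from an arbitrary root to obtain a spanning tree $T$ with $W - 1$ edges. Partition the tree edges into \emph{levels}, where level $i$ contains the tree edges from a depth-$i$ vertex to a depth-$(i+1)$ vertex; each level is a vertex-disjoint union of stars, the centers being the depth-$i$ vertices with at least one child.

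Next, I choose whichever of the even-indexed or odd-indexed collection of levels carries more edges, giving a sub-forest $L$ with $k \ge (W-1)/2$ tree edges consisting of $s \ge 1$ vertex-disjoint stars of sizes $d_1,\dots,d_s$ summing to $k$. The crucial structural claim, leveraging that every non-tree edge in an undirected DFS is a back edge joining a descendant to a strict ancestor, is that, apart from the tree edges of $L$ themselves, no edge of $G$ has both endpoints inside a single star of $L$ and no edge of $G$ has one endpoint in one star of $L$ and the other in a different star of $L$. Indeed, within any star the leaves share a common depth and their unique ancestor at the center's depth is the center itself, and two distinct stars from the chosen levels have centers at depths differing by $0$ or by at least $2$, so no ancestor--descendant pair can form across stars.

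I then assign the state from fact~(c) of the preceding lemma to each star of $L$---earning energy $(d_i+1)/2$ per star of size $d_i$---and assign the maximally mixed state $I/2$ to every remaining vertex. Applying fact~(d) to this collection of vertex-disjoint subgraphs (the uncovered vertices being treated as singleton subgraphs, for which the construction in the proof of fact~(d) applies trivially) guarantees that every edge of $G$ not contained in some star of $L$ contributes at least $1/4$ to the total energy. Summing yields
\[
\text{QMC}(G) \;\ge\; \frac{k+s}{2} + \frac{m-k}{4} \;=\; \frac{m}{4} + \frac{k}{4} + \frac{s}{2} \;\ge\; \frac{m}{4} + \frac{W-1}{8} + \frac{1}{2} \;>\; \frac{m}{4} + \frac{W}{8}.
\]

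The main obstacle I expect is carefully verifying the edge-isolation claim---ruling out non-tree back edges that would connect two endpoints of a single chosen star beyond the tree edge itself, and back edges between vertices belonging to two distinct chosen stars. Once that structural observation is in place so that fact~(d) may legitimately be invoked to harvest $1/4$ per outside edge, the rest of the argument is the arithmetic shown above.
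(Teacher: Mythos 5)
Your proof is correct and follows the same route as the paper: reduce to a connected graph, take a DFS tree, keep the heavier parity class of levels, earn $(d_i+1)/2$ per star via fact~(c), and harvest $1/4$ on everything else via fact~(d); the arithmetic at the end matches the paper's. One flag: half of your ``crucial structural claim'' is false. There \emph{can} be edges of $G$ joining two distinct chosen stars --- e.g.\ in $C_4$ with DFS tree the path $v_0v_1v_2v_3$, the chosen even levels are the stars $\{v_0,v_1\}$ and $\{v_2,v_3\}$, and the back edge $v_0v_3$ joins them. Your justification (``centers at depths differing by at least $2$, so no ancestor--descendant pair can form across stars'') is backwards: a depth gap of at least $2$ is exactly when back edges occur. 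Fortunately this half of the claim is never needed: fact~(d) already awards $1/4$ to every edge \emph{between} distinct subgraphs in the collection, including edges between two different chosen stars. The only property the argument actually requires --- that no non-tree edge of $G$ lies \emph{within} a single chosen star --- is the part you prove correctly (siblings admit no back edge, and a center--leaf non-tree edge would be a parallel edge), so the proof stands once the superfluous false assertion is dropped.
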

\begin{proof}
We will use a technique inspired by a method of~\cite{GY21} for proving lower
bounds for weighted \mcut.  In particular we take advantage of the fact that, for any
vertex in a DFS, the subtrees rooted at its children will be disconnected from
each other. (If there were an edge between $u$ and $v$ in two different
subtrees, and $u$ was explored before $v$, then edge $uv$ would have been followed by
the DFS, and $v$ would have been included in the same subtree as $u$.) 

We will assume $G$ is connected---by property~\ref{qmc:quarter}, the $\qmcut$
value of a graph is at least the sum of the $\qmcut$ value of its components,
so as the value of $m$ and $W$ for $G$ is the sum of the values for its
components, this will suffice to prove the result for general $G$.

Now let $T$ be a DFS tree for $G$, started from an arbitrary vertex $v$. $T$
will contain $W - 1$ edges (as $W$ is just the vertex count for an unweighted
graph). For all $i \in \Nbb$, let $T_i$ denote the set of edges in $T$ between
a vertex at depth $i$ in the tree and one at depth $i+1$, so $T =
\bigcup_{i=1}^W T_i$. Note that each $T_i$ is a union of stars (formed by the
depth-$i$ vertices and their children), and because of the aforementioned
disconnected subtree property, there are no edges in $G \setminus T_i$
connecting two vertices in $T_i$.

Let $T'$ be whichever of $\bigcup_{i = 1}^W T_{2i}$ and $\bigcup_{i=1}^W T_{2i
+ 1}$ has the most edges, so $\abs{T'} = (W - 1)/2$. Now, $T'$ is a union of
stars, and by property~\ref{qmc:star}, for each star with degree $d$ there is
an assignment earning $\frac{d+1}{2}$ on it. Now, as these stars are disjoint,
and the edges in $G\setminus T'$ are all \emph{between} stars,
property~\ref{qmc:quarter} implies that there is an assignment to $G$ earning
\begin{align*}
\frac{\abs{T'} + 1}{2} + \frac{m - \abs{T'}}{4} &= \frac{m}{4} +
\frac{\abs{T'}}{8} + \frac{1}{2}\\
&\ge \frac{m}{4} + \frac{W - 1}{8} + \frac{1}{2}\\
&> \frac{m}{4} + \frac{W}{8}
\end{align*}
\end{proof}
energy, completing the proof.

\subsection{Estimating $W$ in the Stream}
In this section we give a simple classical algorithm for estimating $W$ in
logarithmic space.
\begin{restatable}{lemma}{wstream}
\label{lm:wstream}
Let $G$ be a weighted graph on $n$ vertices with weights that are multiples of
$1/\poly(n)$. Then for any $(\varepsilon, \delta) \in (0,1)$ there is a
streaming algorithm that returns an $\eps m$-additive approximation
to $W$ using $\bO{\frac{1}{\varepsilon^2}\log\frac{1}{\delta}\log n}$ space.
\end{restatable}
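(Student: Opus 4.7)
The plan is to build a single-sample estimator for $W$ that fits in $O(\log n)$ bits, then amplify accuracy via median-of-means. Each single copy of the estimator maintains four quantities throughout the stream: a weighted reservoir sample $R$ of one edge (so that $\Pr[R = e] = w_e/m$ at the end of the stream, updated by retaining each arriving edge $e$ with probability $w_e/(T + w_e)$, where $T$ is the running total weight); a uniformly random endpoint $u$ of $R$, refreshed whenever $R$ is replaced; and a counter $w'$ storing the maximum weight of any edge incident to $u$ arriving after $R$ in the stream (reset to $0$ whenever $R$ is replaced). At the end of the stream, the copy outputs $Y := 2T \cdot \max(1 - w'/w_R,\, 0)$.

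The heart of the argument is proving $\E{Y} = W$. Unrolling the reservoir sampling and endpoint choice, and using $T = m$ at stream's end, yields $\E{Y} = \sum_{u \in V} \sum_{e \ni u} \max(w_e - w'_u(e),\, 0)$, where $w'_u(e)$ denotes the maximum weight of edges incident to $u$ arriving strictly after $e$ (zero if none). For each fixed $u$, I call an edge $e \ni u$ a \emph{scenic viewpoint at $u$} if $w_e > w'_u(e)$, and the claim to prove is that the per-vertex inner sum equals $\max_{v \in N(u)} w_{uv}$. This will follow by a telescoping argument based on three structural facts: (i) along the stream, the scenic viewpoints at $u$ have strictly decreasing weights; (ii) the first scenic viewpoint at $u$ has weight $\max_{v \in N(u)} w_{uv}$, because the last stream-occurrence of the per-vertex maximum is always a scenic viewpoint and no earlier edge can be; and (iii) for two consecutive scenic viewpoints $e_a,e_b$ at $u$ with $e_a$ earlier, $w'_u(e_a) = w_{e_b}$, shown by contradiction: any attainer of $w'_u(e_a)$ other than $e_b$ forces either a scenic viewpoint after $e_b$ whose weight is too large, or a chain of equal-weight edges between $e_a$ and $e_b$ that must terminate at a scenic viewpoint strictly between them. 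The inner sum then telescopes to the first scenic viewpoint's weight.

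With $\E{Y} = W$ established, the range $Y \in [0, 2m]$ implies $\mathrm{Var}(Y) \leq 4m^2$, so by Chebyshev the average of $O(\eps^{-2})$ independent copies approximates $W$ to within $\eps m$ with constant failure probability; applying median-of-means with $O(\log(1/\delta))$ groups boosts the failure probability to $\delta$. Each copy stores only $O(1)$ edge/vertex labels, weights, and weight-sums, all representable in $O(\log n)$ bits since edge weights are multiples of $1/\poly(n)$ bounded by $\poly(n)$, giving total space $O(\eps^{-2}\log(1/\delta)\log n)$. The main obstacle is the telescoping identity for the per-vertex sum — in particular step (iii), which requires a careful case analysis on where the maximum defining $w'_u(e_a)$ is attained. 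Once that is established, the variance bound is immediate, and weighted reservoir sampling plus median-of-means is standard.
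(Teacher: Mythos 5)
Your proposal is correct and takes essentially the same approach as the paper: the same weighted-reservoir-sampling estimator with a uniformly random endpoint and the trailing-maximum correction $1 - w'/w_e$, the same per-vertex telescoping over ``scenic viewpoints'' to establish unbiasedness (the paper phrases it as $\E{\Xb} = W/2m$ and rescales by $2m$ at the end, which is identical to your $Y = 2T\cdot\max(1-w'/w_R,0)$), and the same boundedness-plus-Chebyshev-plus-median amplification and $\bO{\log n}$-per-copy space accounting. The only cosmetic difference is that you define scenic viewpoints with a strict inequality where the paper uses $\ge$, which merely adds zero-contribution terms to the telescoping sum.
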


We will use Algorithm~\ref{alg:westimator} to obtain an estimator for $W$ which
we will then amplify.
\begin{algorithm}
\caption{A log-space algorithm for estimating $W$ in the stream.}
\label{alg:westimator}
\begin{algorithmic}
\State Sample $e$ from $E$ with probability proportional to $w_e$.
\State Sample $v$ from the two endpoints of $e$, each with probability $1/2$.
\If{$e$ is the last edge to arrive incident to $v$ in the stream}
\State $\Xb \gets 1$
\ElsIf{$w_e < w_f$ for some $f$ incident to $v$ that arrives \emph{after} $e$ in
the stream}
\State $\Xb \gets 0$
\Else
\State $w' \gets$ the greatest weight of any edge to arrive incident to $v$
after $e$ in the stream.
\State $\Xb \gets 1 - w'/w_e$
\EndIf
\State \textbf{return} $\Xb$
\end{algorithmic}
\end{algorithm}
\begin{lemma}
Algorithm~\ref{alg:westimator} can be implemented in the stream using $O(\log
n)$ space.
\end{lemma}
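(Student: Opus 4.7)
The plan is to implement the algorithm in a single streaming pass by combining weighted reservoir sampling for the edge $e$ with a small amount of additional bookkeeping. Throughout the stream the algorithm maintains four pieces of state: the running total $S = \sum_{e \text{ seen}} w_e$; the identifier and weight of the currently held edge, written $(u^\star, v^\star, w^\star)$; the selected endpoint $v^\star \in \{u^\star, v^\star\}$; and a scalar $w_{\max}$ storing the largest weight among edges incident to $v^\star$ that have arrived since the held edge was last chosen, initialized to a sentinel value (say $-1$) signifying ``no such edge yet.''

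For each arriving edge $(u,v)$ of weight $w$, the algorithm first updates $S \gets S + w$. With probability $w / S$ it then replaces the stored sample, setting $(u^\star, v^\star, w^\star) \gets (u, v, w)$, sampling $v^\star$ uniformly from $\{u, v\}$, and resetting $w_{\max}$ to the sentinel. Otherwise, if $v^\star \in \{u, v\}$, it updates $w_{\max} \gets \max(w_{\max}, w)$. At the end of the stream it returns $1$ if $w_{\max}$ is still the sentinel, $0$ if $w_{\max} > w^\star$, and $1 - w_{\max} / w^\star$ otherwise. Standard analysis of weighted reservoir sampling guarantees that at termination the held edge is distributed as a $w_e$-proportional sample and its endpoint is uniform; since the reset of $w_{\max}$ is atomic with each replacement, at termination $w_{\max}$ records exactly the maximum weight of edges incident to $v^\star$ arriving strictly after the held edge, which matches the specification of $\Xb$ in Algorithm~\ref{alg:westimator}.

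For the space bound, since edge weights are multiples of $1/\poly(n)$ (and, as is standard for streaming graph problems, polynomially bounded), each of $w^\star$, $w_{\max}$, and $S$ fits in $O(\log n)$ bits, as do vertex labels; all comparisons and arithmetic stay within this budget. The one point requiring care is simulating the Bernoulli decision with bias $w / S$: since both numerator and denominator share a $\poly(n)$ denominator, this bias is itself a rational with $\poly(n)$ denominator, so it can be decided exactly using $O(\log n)$ random bits per arriving edge, drawn from the streaming model's read-only random tape.

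I expect the only nonroutine point to verify to be that the atomic coupling of the reservoir replacement with the reset of $w_{\max}$ yields the correct joint distribution on $(u^\star, v^\star, w_{\max})$. This follows by conditioning on the final commitment of the sample at some time $\tau$: given that the held edge at termination is the edge arriving at time $\tau$, the updates to $w_{\max}$ after $\tau$ examine precisely the suffix of the stream after $\tau$, which is exactly the data inspected by the definition of $\Xb$. Combined with the standard marginal statement for weighted reservoir sampling, this gives the desired equality in distribution, completing the proof.
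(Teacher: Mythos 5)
Your proof is correct and takes essentially the same approach as the paper's: weighted reservoir sampling for the edge, together with tracking the maximum subsequent incident weight and resetting that tracker whenever the reservoir replaces its sample. The paper's proof is just a terser version of the same argument; your additional remarks about the bit-length of weights and the randomness budget are reasonable elaborations (though note the small notational collision in using $v^\star$ both for an endpoint of the held edge and for the sampled endpoint).
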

\begin{proof}
By using reservoir sampling, we can sample $e$ while only storing one edge (and
its weight) at a time. Then, whenever we have a candidate sample for $e$, we
can track the highest-weight edge to arrive incident to $v$ that arrives after
$e$ (throwing it out if our reservoir sampling procedure gives us a new
candidate for $e$). This requires tracking just two edges, each requiring
$O(\log n)$ space to store with its weight.  
\end{proof}

\begin{lemma}
\[
\E{\Xb} = W/2m
\]
\end{lemma}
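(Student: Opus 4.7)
The plan is to unpack $\E{\Xb}$ by conditioning on the pair $(e,v)$ sampled by the algorithm, and then to show that for each fixed vertex $v$ the resulting sum telescopes to $w^*_v := \max_{u \in N(v)} w_{uv}$. The algorithm draws $e$ with probability $w_e/m$ via reservoir sampling and then picks either endpoint with probability $1/2$, so each ordered pair $(e,v)$ with $v\in e$ is selected with probability $w_e/(2m)$. Conditional on this choice, $\Xb$ is a deterministic function of the stream. Writing $W(e,v)$ for the largest weight of any edge incident to $v$ arriving strictly after $e$ (and $W(e,v)=0$ if no such edge exists), a case analysis of the three branches of Algorithm~\ref{alg:westimator} shows that
\[
\Xb \;=\; \frac{\max\set{w_e - W(e,v),\,0}}{w_e}.
\]
Indeed, the first branch gives $\Xb = 1$ with $W(e,v)=0$; the second triggers exactly when $W(e,v) > w_e$, in which case $\Xb = 0$; and the third gives $\Xb = 1 - W(e,v)/w_e \ge 0$.

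Therefore
\[
\E{\Xb} \;=\; \sum_{v \in V}\sum_{e \ni v} \frac{w_e}{2m}\cdot\frac{\max\set{w_e - W(e,v),\,0}}{w_e} \;=\; \frac{1}{2m}\sum_{v \in V}\sum_{e \ni v} \max\set{w_e - W(e,v),\,0},
\]
so it remains to prove the per-vertex identity $\sum_{e \ni v} \max\set{w_e - W(e,v),\,0} = w^*_v$ for each fixed $v$. List the edges incident to $v$ in stream order as $e^v_1,\dots,e^v_{d_v}$ with weights $w^v_1,\dots,w^v_{d_v}$, so that $W(e^v_i,v) = \max_{j > i} w^v_j$ with $W(e^v_{d_v},v)=0$. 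Only those indices $i$ with $w^v_i > \max_{j > i} w^v_j$ contribute, and these are exactly the right-to-left maxima of the weight sequence. Reading them from right to left as $i_1 > i_2 > \dots > i_k$ (so $i_1 = d_v$), the corresponding weights form a strictly increasing sequence ending in $w^v_{i_k} = w^*_v$, and at each such $i_j$ the value $W(e^v_{i_j},v)$ is exactly $w^v_{i_{j-1}}$ (with the convention $w^v_{i_0}:=0$). The sum therefore telescopes to $w^v_{i_k} - w^v_{i_0} = w^*_v$, as desired.

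Summing over all vertices then gives $\E{\Xb} = \sum_v w^*_v / (2m) = W/(2m)$. The only nontrivial step is the telescoping identity, so I would treat that as the main (and essentially only) obstacle; the rest is bookkeeping. A small amount of care is needed at the boundary cases---the final edge incident to $v$, handled by the convention $W(e^v_{d_v},v)=0$; and ties in weight, for which the strict inequality in the second branch of the algorithm correctly routes equal-weight later edges into the third branch and produces a harmless zero contribution.
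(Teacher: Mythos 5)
Your proposal is correct and follows essentially the same route as the paper: condition on the sampled (edge, endpoint) pair (each chosen with probability $w_e/2m$), reduce to a per-vertex sum over the edges incident to that vertex, and observe that the contributions of the right-to-left maxima of the weight sequence telescope to $\max_{v \in N(u)} w_{uv}$. Your unified formula $\Xb = \max\set{w_e - W(e,v), 0}/w_e$ and your handling of ties are just cleaner packagings of the same case analysis the paper performs.
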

\begin{proof}
Let $\Sc_{e,u}$ denote the event that the edge $e$ and its endpoint $u$ are
sampled, so $\prob{\Sc_{e,u}} = w_e/2m$. We can then write \[
\E{\Xb} = \sum_{u \in V}\sum_{v \in N(u)} \E{\Xb\middle| \Sc_{uv,u}} w_{uv}/2m 
\]
Now, for any vertex $v \in N(u)$ such that $w_{uv} < w_f$ for some $f$ that
appears after $uv$ in the stream and is incident to $u$, $\E{\Xb\middle|
\Sc_{uv,u}} = 0$. 

So let $N'$ be the set of vertices $v \in N(u)$ such that $w_{uv} \ge w_f$ for
all $f$ incident to $u$ that appear after $uv$ in the stream. Order $N'$ as
$(v_i)_{i=1}^k$ in the order they arrive in the stream, and let $w_i$ denote
the weight of $uv_i$. Then, $\E{\Xb\middle|
\Sc_{uv_k,u}} = 1$, while for each $i \in \brac{k-1}$,  $\E{\Xb\middle|
\Sc_{uv_i,u}} = 1 - w_{i+1}/w_{i}$. Therefore, \[
\sum_{i = 1}^k \E{\Xb\middle| \Sc_{uv_i,u}} w_{i}/2m = w_k/2m + \sum_{i =
1}^{k-1} (w_i - w_{i+1})/2m  = w_{1}/2m = \max_{v \in N(u)} w_{uv}/2m
\]
and so by summing over all $u \in V$ the result follows.
\end{proof}

\begin{lemma}
\[
\var(\Xb) \le 1
\]
\end{lemma}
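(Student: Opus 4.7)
The plan is to observe that $\Xb$ always lies in the interval $[0,1]$ and then apply the standard bounded-variance argument. The value of $\Xb$ produced by Algorithm~\ref{alg:westimator} falls into one of three cases: either $\Xb = 1$, or $\Xb = 0$, or $\Xb = 1 - w'/w_e$ where $w'$ is the largest weight of an edge arriving after $e$ incident to $v$. In this third case, we reached the \texttt{else} branch, so $e$ is not the last edge incident to $v$ (hence $w' > 0$) and it is not true that $w_e < w_f$ for some later $f$ incident to $v$ (hence $w_e \ge w_f$ for every such $f$, so $w' \le w_e$). Thus in every case $0 \le \Xb \le 1$.

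Once $\Xb \in [0,1]$ is established, the lemma follows immediately: since $\Xb^2 \le \Xb$, we obtain
\[
\var(\Xb) \;=\; \E{\Xb^2} - \E{\Xb}^2 \;\le\; \E{\Xb} - \E{\Xb}^2 \;=\; \E{\Xb}\paren*{1 - \E{\Xb}} \;\le\; \tfrac{1}{4} \;\le\; 1.
\]
There is no serious obstacle here; the only thing to be careful about is verifying the bound $w' \le w_e$ in the \texttt{else} branch, which relies on reading the control flow of Algorithm~\ref{alg:westimator} correctly.
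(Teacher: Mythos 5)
Your proof is correct and follows the same route as the paper, which likewise deduces the bound immediately from $\Xb \in \brac{0,1}$. Your additional case analysis of the algorithm's branches and the sharper intermediate bound $\E{\Xb}(1-\E{\Xb}) \le 1/4$ are fine but not needed beyond what the paper states.
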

\begin{proof}
This follows immediately from the fact that $\Xb$ takes values in the range
$\brac{0,1}$.
\end{proof}
We can now prove Lemma~\ref{lm:wstream}.
\wstream*
\begin{proof}
We have that $2m\Xb$ is an unbiased estimator for $W$, with variance $4m^2$. By
repeating Algorithm~\ref{alg:westimator} $\bT{\varepsilon^2}$ times in
parallel and averaging these estimators, we may obtain a variance
$\varepsilon^2 m^2/9$ estimator, and so by Chebyshev, with probability $2/3$ it
will be within $\varepsilon m$ of $W$. We may then repeat this entire process
$\bT{\log 1/\delta}$ times and take the median to obtain a $\varepsilon
m$-additive estimator of $W$ with probability $1 - \delta$.

Each iteration can be performed in parallel and requires $\bO{\log n}$ space,
completing the proof.
\end{proof}

\subsection{Completing the Upper Bound}
Theorem~\ref{thm:qmcub} now follows immediately.
\qmcub*
\begin{proof}
By the lemmas of~\ref{sec:qmcapprox}, it is sufficient to estimate $2m + W$ to
$(1 \pm \eps')$-multiplicative accuracy for a sufficiently small $\eps' =
\bT{\eps}$. $m$ can be calculated exactly with a single counter in $bO{\log
n}$ space. By Lemma~\ref{lm:wstream}, we can calculate $W$ to $\eps'
m$-additive accuracy in $\bO{\frac{1}{\eps^2}\log \frac{1}{\delta}}$ space.
\end{proof}

\section*{Acknowledgements}
The authors were supported by the Laboratory Directed Research and Development
program at Sandia National Laboratories, a multimission laboratory managed and
operated by National Technology and Engineering Solutions of Sandia, LLC., a
wholly owned subsidiary of Honeywell International, Inc., for the U.S.
Department of Energy’s National Nuclear Security Administration under contract
DE-NA-0003525. Also supported by the U.S. Department of Energy, Office of
Science, Office of Advanced Scientific Computing Research, Accelerated Research
in Quantum Computing program.
\newpage
\bibliographystyle{alpha}
\bibliography{refs}

\newcommand{\etalchar}[1]{$^{#1}$}
\begin{thebibliography}{AGMKS21}

\bibitem[AAV13]{aharonov2013guest}
Dorit Aharonov, Itai Arad, and Thomas Vidick.
\newblock Guest column: the quantum pcp conjecture.
\newblock {\em Acm sigact news}, 44(2):47--79, 2013.

\bibitem[AD21]{AD21}
Srinivasan Arunachalam and Jo{\~a}o~F Doriguello.
\newblock Matrix hypercontractivity, streaming algorithms and ldcs: the large
  alphabet case.
\newblock {\em arXiv preprint arXiv:2109.02600}, 2021.

\bibitem[AG09]{anh-guha}
K.~Ahn and S.~Guha.
\newblock Graph sparsification in the semi-streaming model.
\newblock {\em ICALP}, pages 328--338, 2009.

\bibitem[AGM20]{Anshu2020}
Anurag Anshu, David Gosset, and Karen Morenz.
\newblock {Beyond Product State Approximations for a Quantum Analogue of Max
  Cut}.
\newblock In Steven~T. Flammia, editor, {\em 15th Conference on the Theory of
  Quantum Computation, Communication and Cryptography (TQC 2020)}, volume 158
  of {\em Leibniz International Proceedings in Informatics (LIPIcs)}, pages
  7:1--7:15, Dagstuhl, Germany, 2020. Schloss Dagstuhl--Leibniz-Zentrum f{\"u}r
  Informatik.

\bibitem[AGMKS21]{A21}
Anurag Anshu, David Gosset, Karen~J Morenz~Korol, and Mehdi Soleimanifar.
\newblock Improved approximation algorithms for bounded-degree local
  hamiltonians.
\newblock {\em arXiv preprint arXiv:2105.01193}, 2021.

\bibitem[AZ19]{AZ19}
Dorit Aharonov and Leo Zhou.
\newblock Hamiltonian sparsification and gap-simulation.
\newblock In {\em 10th Innovations in Theoretical Computer Science Conference,
  {ITCS} 2019, January 10-12, 2019, San Diego, California, {USA}}, volume 124
  of {\em LIPIcs}, pages 2:1--2:21, 2019.

\bibitem[BARdW08]{BARdW08}
Avraham Ben-Aroya, Oded Regev, and Ronald de~Wolf.
\newblock A hypercontractive inequality for matrix-valued functions with
  applications to quantum computing and ldcs.
\newblock In {\em Proceedings of the 49th Annual IEEE Symposium on Foundations
  of Computer Science}, 2008.

\bibitem[BGKT19]{B19}
Sergey Bravyi, David Gosset, Robert K{\"o}nig, and Kristan Temme.
\newblock Approximation algorithms for quantum many-body problems.
\newblock {\em Journal of Mathematical Physics}, 60(3):032203, 2019.

\bibitem[BH16]{B16}
Fernando~GSL Brand{\~a}o and Aram~W Harrow.
\newblock Product-state approximations to quantum states.
\newblock {\em Communications in Mathematical Physics}, 342(1):47--80, 2016.

\bibitem[BMO{\etalchar{+}}15]{Barak2015Beating}
Boaz Barak, Ankur Moitra, Ryan O'Donnell, Prasad Raghavendra, Oded Regev, David
  Steurer, Luca Trevisan, Aravindan Vijayaraghavan, David Witmer, and John
  Wright.
\newblock {Beating the Random Assignment on Constraint Satisfaction Problems of
  Bounded Degree}.
\newblock In Naveen Garg, Klaus Jansen, Anup Rao, and Jos{\'e} D.~P. Rolim,
  editors, {\em Approximation, Randomization, and Combinatorial Optimization.
  Algorithms and Techniques (APPROX/RANDOM 2015)}, volume~40 of {\em Leibniz
  International Proceedings in Informatics (LIPIcs)}, pages 110--123, Dagstuhl,
  Germany, 2015. Schloss Dagstuhl--Leibniz-Zentrum fuer Informatik.

\bibitem[CGS{\etalchar{+}}22]{CGSVV22}
Chi-Ning Chou, Alexander Golovnev, Madhu Sudan, Ameya Velingker, and
  Santhoshini Velusamy.
\newblock Linear space streaming lower bounds for approximating {CSPs}.
\newblock In {\em STOC{\textquoteright}22}, 2022.

\bibitem[CM16]{Cubitt16Complexity}
Toby Cubitt and Ashley Montanaro.
\newblock Complexity classification of local hamiltonian problems.
\newblock {\em SIAM Journal on Computing}, 45(2):268--316, 2016.

\bibitem[CW04]{CW04}
M.~Charikar and A.~Wirth.
\newblock Maximizing quadratic programs: extending grothendieck's inequality.
\newblock In {\em 45th Annual IEEE Symposium on Foundations of Computer
  Science}, pages 54--60, 2004.

\bibitem[DM20]{DM20}
Jo{\~a}o~F. Doriguello and Ashley Montanaro.
\newblock {Exponential Quantum Communication Reductions from Generalizations of
  the Boolean Hidden Matching Problem}.
\newblock In {\em 15th Conference on the Theory of Quantum Computation,
  Communication and Cryptography (TQC 2020)}, volume 158, pages 1:1--1:16,
  2020.

\bibitem[FGG14]{farhi2014quantum}
Edward Farhi, Jeffrey Goldstone, and Sam Gutmann.
\newblock A quantum approximate optimization algorithm.
\newblock {\em arXiv preprint arXiv:1411.4028}, 2014.

\bibitem[GK12]{G12}
Sevag Gharibian and Julia Kempe.
\newblock Approximation algorithms for qma-complete problems.
\newblock {\em SIAM Journal on Computing}, 41(4):1028--1050, 2012.

\bibitem[GKK{\etalchar{+}}07]{GKKRd07}
Dmitry Gavinsky, Julia Kempe, Iordanis Kerenidis, Ran Raz, and Ronald de~Wolf.
\newblock Exponential separations for one-way quantum communication complexity,
  with applications to cryptography.
\newblock In {\em Proceedings of the Thirty-ninth Annual ACM Symposium on
  Theory of Computing}, STOC '07, pages 516--525, New York, NY, USA, 2007. ACM.

\bibitem[GKK{\etalchar{+}}08]{GKKRW07}
Dmitry Gavinsky, Julia Kempe, Iordanis Kerenidis, Ran Raz, and Ronald de~Wolf.
\newblock Exponential separation for one-way quantum communication complexity,
  with applications to cryptography.
\newblock {\em {SIAM} J. Comput.}, 38(5):1695--1708, 2008.

\bibitem[GP19]{GharibianParekh19}
Sevag Gharibian and Ojas Parekh.
\newblock {Almost Optimal Classical Approximation Algorithms for a Quantum
  Generalization of Max-Cut}.
\newblock In Dimitris Achlioptas and L{\'a}szl{\'o}~A. V{\'e}gh, editors, {\em
  Approximation, Randomization, and Combinatorial Optimization. Algorithms and
  Techniques (APPROX/RANDOM 2019)}, volume 145 of {\em Leibniz International
  Proceedings in Informatics (LIPIcs)}, pages 31:1--31:17, Dagstuhl, Germany,
  2019. Schloss Dagstuhl--Leibniz-Zentrum fuer Informatik.

\bibitem[GW95]{GoemansWilliamson95}
Michel~X. Goemans and David~P. Williamson.
\newblock Improved approximation algorithms for maximum cut and satisfiability
  problems using semidefinite programming.
\newblock {\em J. ACM}, 42(6):1115--1145, nov 1995.

\bibitem[GY21]{GY21}
Gregory Gutin and Anders Yeo.
\newblock Lower bounds for maximum weighted cut.
\newblock {\em arXiv preprint arXiv:2104.05536}, 2021.

\bibitem[HLP20]{H20}
Sean Hallgren, Eunou Lee, and Ojas Parekh.
\newblock An approximation algorithm for the {MAX-2-Local Hamiltonian} problem.
\newblock In {\em Approximation, Randomization, and Combinatorial Optimization.
  Algorithms and Techniques (APPROX/RANDOM 2020)}. Schloss
  Dagstuhl-Leibniz-Zentrum f{\"u}r Informatik, 2020.

\bibitem[HM19]{HM19}
Yassine Hamoudi and Fr{\'e}d{\'e}ric Magniez.
\newblock {Quantum Chebyshev's Inequality and Applications}.
\newblock In {\em 46th International Colloquium on Automata, Languages, and
  Programming (ICALP 2019)}, volume 132 of {\em Leibniz International
  Proceedings in Informatics (LIPIcs)}, pages 69:1--69:16, Dagstuhl, Germany,
  2019. Schloss Dagstuhl--Leibniz-Zentrum fuer Informatik.

\bibitem[HNP{\etalchar{+}}21]{hwang2021unique}
Yeongwoo Hwang, Joe Neeman, Ojas Parekh, Kevin Thompson, and John Wright.
\newblock Unique games hardness of quantum max-cut, and a vector-valued
  borell's inequality.
\newblock {\em arXiv preprint arXiv:2111.01254}, 2021.

\bibitem[JN14]{JN14}
Rahul Jain and Ashwin Nayak.
\newblock The {Space} {Complexity} of {Recognizing} {Well}-{Parenthesized}
  {Expressions} in the {Streaming} {Model}: {The} {Index} {Function}
  {Revisited}.
\newblock {\em IEEE Transactions on Information Theory}, 60(10):6646--6668,
  October 2014.

\bibitem[Kal22]{K21}
John Kallaugher.
\newblock A quantum advantage for a natural streaming problem.
\newblock In {\em 2021 IEEE 62nd Annual Symposium on Foundations of Computer
  Science (FOCS)}, pages 897--908, 2022.

\bibitem[KK15]{KK15}
Dmitry Kogan and Robert Krauthgamer.
\newblock Sketching cuts in graphs and hypergraphs.
\newblock In {\em Proceedings of the 2015 Conference on Innovations in
  Theoretical Computer Science}, pages 367--376, 2015.

\bibitem[KK19]{KK19}
Michael Kapralov and Dmitry Krachun.
\newblock An optimal space lower bound for approximating max-cut.
\newblock In {\em Proceedings of the 51st Annual ACM SIGACT Symposium on Theory
  of Computing}, STOC 2019, pages 277--288, New York, NY, USA, 2019.
  Association for Computing Machinery.

\bibitem[KKL88]{KKL88}
J.~Kahn, G.~Kalai, and N.~Linial.
\newblock The influence of variables on boolean functions.
\newblock In {\em Proceedings of the 29th Annual Symposium on Foundations of
  Computer Science}, SFCS '88, pages 68--80, Washington, DC, USA, 1988. IEEE
  Computer Society.

\bibitem[KKMO07]{KhotKMO07}
Subhash Khot, Guy Kindler, Elchanan Mossel, and Ryan O'Donnell.
\newblock Optimal inapproximability results for max-cut and other 2-variable
  csps?
\newblock {\em SIAM J. Comput.}, 37(1):319--357, 2007.

\bibitem[KKP18]{KKP18}
John Kallaugher, Michael Kapralov, and Eric Price.
\newblock The sketching complexity of graph and hypergraph counting.
\newblock In {\em 2018 IEEE 59th Annual Symposium on Foundations of Computer
  Science (FOCS)}, pages 556--567. IEEE, 2018.

\bibitem[KKS15]{KKS15}
Michael Kapralov, Sanjeev Khanna, and Madhu Sudan.
\newblock Streaming lower bounds for approximating {MAX-CUT}.
\newblock In {\em 26th ACM-SIAM Symposium on Discrete Algorithms (SODA)}, 2015.

\bibitem[KKSV17]{KKSV17}
Michael Kapralov, Sanjeev Khanna, Madhu Sudan, and Ameya Velingker.
\newblock $(1 + {\Omega}(1))$-{A}pproximation to {MAX-CUT} requires linear
  space.
\newblock In {\em Proceedings of the Twenty-Eighth Annual {ACM-SIAM} Symposium
  on Discrete Algorithms, {SODA} 2017, Barcelona, Spain, Hotel Porta Fira,
  January 16-19}, pages 1703--1722, 2017.

\bibitem[Kla07]{K07}
Hartmut Klauck.
\newblock Lower bounds for quantum communication complexity.
\newblock {\em SIAM Journal on Computing}, 37(1):20--46, 2007.

\bibitem[LG06]{LG06}
Fran\c{c}ois Le~Gall.
\newblock Exponential separation of quantum and classical online space
  complexity.
\newblock In {\em Proceedings of the eighteenth annual {ACM} symposium on
  {Parallelism} in algorithms and architectures}, {SPAA} '06, pages 67--73, New
  York, NY, USA, July 2006. Association for Computing Machinery.

\bibitem[Mon16]{M16}
Ashley Montanaro.
\newblock The quantum complexity of approximating the frequency moments.
\newblock {\em Quantum Info. Comput.}, 16(13--14):1169--1190, October 2016.

\bibitem[NT17]{NT17}
Ashwin Nayak and Dave Touchette.
\newblock Augmented index and quantum streaming algorithms for dyck(2).
\newblock In {\em Proceedings of the 32nd Computational Complexity Conference},
  CCC '17, Dagstuhl, DEU, 2017. Schloss Dagstuhl--Leibniz-Zentrum fuer
  Informatik.

\bibitem[NV18]{natarajan2018low}
Anand Natarajan and Thomas Vidick.
\newblock Low-degree testing for quantum states, and a quantum entangled games
  pcp for qma.
\newblock In {\em 2018 IEEE 59th Annual Symposium on Foundations of Computer
  Science (FOCS)}, pages 731--742. IEEE, 2018.

\bibitem[O'D14]{O14}
Ryan O'Donnell.
\newblock {\em Analysis of {Boolean} functions}.
\newblock Cambridge University Press, 2014.

\bibitem[PM15]{P15}
Stephen Piddock and Ashley Montanaro.
\newblock The complexity of antiferromagnetic interactions and 2d lattices.
\newblock {\em arXiv preprint arXiv:1506.04014}, 2015.

\bibitem[PT21a]{PT21}
Ojas Parekh and Kevin Thompson.
\newblock {Application of the Level-2 Quantum Lasserre Hierarchy in Quantum
  Approximation Algorithms}.
\newblock In {\em 48th International Colloquium on Automata, Languages, and
  Programming (ICALP 2021)}, volume 198 of {\em Leibniz International
  Proceedings in Informatics (LIPIcs)}, pages 102:1--102:20, 2021.

\bibitem[PT21b]{PT20}
Ojas Parekh and Kevin Thompson.
\newblock {Beating Random Assignment for Approximating Quantum 2-Local
  Hamiltonian Problems}.
\newblock In {\em 29th Annual European Symposium on Algorithms (ESA 2021)},
  volume 204 of {\em Leibniz International Proceedings in Informatics
  (LIPIcs)}, pages 74:1--74:18, 2021.

\bibitem[Raz95]{R95}
Ran Raz.
\newblock Fourier analysis for probabilistic communication complexity.
\newblock {\em Computational Complexity}, 5(3):205--221, 1995.

\bibitem[SW12]{SW12}
Yaoyun Shi and Xiaodi Wu.
\newblock Limits of quantum one-way communication by matrix hypercontractive
  inequality.
\newblock 2012.

\bibitem[VY11]{VY11}
Elad Verbin and Wei Yu.
\newblock The streaming complexity of cycle counting, sorting by reversals, and
  other problems.
\newblock {\em SODA}, pages 11--25, 2011.

\end{thebibliography}
\end{document}